\journalname{Designs, Codes and Cryptography}
\begin{document}

\title{The weight distributions of some cyclic codes with three or four nonzeros over $\mathbb{F}_3$
\thanks{This work is supported in part by the National Key Basic Research and Development Plan of China under Grant 2012CB316100, and
the National Natural Science Foundation of China under Grants 61271222, 60972033.}
}
%\subtitle{Do you have a subtitle?\\ If so, write it here}

\titlerunning{cyclic codes with three or four nonzeros}

\author{Xiaogang Liu \and Yuan Luo}

%\authorrunning{Short form of author list} % if too long for running head

\institute{Xiaogang Liu
\at Department of Computer Science and Engineering, Shanghai Jiao Tong University, China
 \\\email{liuxg0201@163.com}
\and
Yuan Luo (Corresponding author)
\at Department of Computer Science and Engineering, Shanghai Jiao Tong University, China
\\Addr.: 800 Dongchuan Road, Min Hang District, Shanghai 200240, China.
\\Tel.:+86 21 34205477 \email{yuanluo@sjtu.edu.cn}
}

\date{Received: date / Accepted: date}
% The correct dates will be entered by the editor
\maketitle

\begin{abstract}
Because of efficient encoding and decoding algorithms, cyclic codes are an important family of linear block codes, and have applications in communication and storage systems.
However, their weight distributions are known only for a few cases mainly on the codes with one or two nonzeros. In this paper, the weight distributions of two classes of cyclic codes with three or four nonzeros are determined.

\keywords{
Association scheme
\and Cyclic code
\and Exponential sum
\and Quadratic form
\and Weight distribution
}
% \PACS{PACS code1 \and PACS code2 \and more}
% \subclass{MSC code1 \and MSC code2 \and more}
\subclass{94B05 \and 94B65}
\end{abstract}

\section{Introduction} \label{Sec1}

%For a cyclic code $\mathcal{C}$ with length $n$ over a finite field $\mathbb{F}_p$ where $p$ is a prime, let $A_i$ be the number of codewords in $\mathcal{C}$ with Hamming weight $i$. The weight distribution $\{A_0,A_1,\ldots,A_n\}$ is an important research object in coding theory. If $\mathcal{C}$ is irreducible, which means that the parity check polynomial of $\mathcal{C}$ is irreducible in $\mathbb{F}_p[x]$, the weight of each codeword can be expressed by Gaussian sums so that the weight distribution of $\mathcal{C}$ can be determined if the corresponding Gaussian sums(or their certain combinations) can be calculated explicitly (see \cite{M01,YC01}, and the references therein). As for the relationship between the weight distribution of cyclic codes and the rational points of certain curves, see \cite{R001}.

 An $[n,k,d;p]$ code is a $k$-dimensional subspace of $\mathbb{F}_p^n$ with minimum(Hamming) distance $d$. Let $A_i$ denote the number of codewords with Hamming weight $i$ in a code $\mathcal{C}$ of length $n$. The {\bf weight enumerator} of $\mathcal{C}$ is defined by
\[
1+A_1x+A_2x^2+\cdots+A_nx^n.
\]
The sequence $(1,A_1,\cdots,A_n)$ is called the {\bf weight distribution} of the code, which is an important parameter of a linear block code.
% The study of the weight distribution of a linear code is important both in theory and applications due to the following:
%\begin{enumerate}
%\renewcommand{\labelenumi}{\arabic{enumi})}
%\item
In fact, the minimum distance $d$ determines the error correcting capability of the code $\mathcal{C}$.
%\item
Furthermore, under some algorithms, we can compute the error probability of error detection and correction.%The weight distribution of a code allows the computation of the error probability of error detection and correction with respect to some error detection and error correction algorithms.
%\end{enumerate}

An $[n,k,d;p]$ linear code $\mathcal{C}$ is called cyclic if $(c_0,c_1,\cdots,c_{n-1})\in \mathcal{C}$ implies that $(c_{n-1},c_0,c_1,\cdots,c_{n-2})\in \mathcal{C}$ $(\mbox{gcd}(n,p)=1)$.
%A cyclic code $\mathcal{C}$ of length $n$ over $\mathbb{F}_p$ is also called a $p$-ary cyclic code of length $n$.
 Using the vector space isomorphism from $\mathbb{F}_p^m$ to the principal ideal ring $R_n:=\mathbb{F}_p[x]/(x^n-1)$
 \[
 (a_0,a_1,\ldots,a_{n-1})\longrightarrow
a_0+a_1x+\cdots a_{n-1}x^{n-1},
\]
$\mathcal{C}$ is an ideal. The generator $g(x)$ of this ideal is called the generating polynomial of $\mathcal{C}$, which satisfies that $g(x)|(x^n-1)$. When the ideal is minimal, the code $\mathcal{C}$ is called an irreducible cyclic code.
  % A minimal ideal in $R_n$ is called an irreducible cyclic code of length $n$ over $\mathbb{F}_p$.
   For any $v=(c_0,c_1,\cdots,c_{n-1})\in \mathcal{C}$, the weight of $v$ is $wt(v)=\#\{c_i\not=0,i=0,1,\ldots,n-1\}$.

Many authors have studied how to determine the weight distributions of cyclic codes. MacWilliams and Seery \cite{MS01} gave a procedure for binary cyclic codes, but it can be implemented only on a powerful computer. The problem of computing weight distributions is connected to the evaluation of certain exponential sums McEliece, Rumsey \cite{MR001} and Van Der Vlugt \cite{M01}, which are generally hard to determine explicitly. %To evaluate these sums in some special cases, certain algorithms were given by Baumert and McEliece.., Moisio and Vaanaanen..Fitzgerald and Yucas, etc., using various techniques. %In.., Augot used the theory of Grobner basis for a certain system of algebraic equations to give information about the minimum weight codewords.
In \cite{R001}, Schoof studied its relation with the rational points of certain curves. For the weight distributions of the cyclic codes with three nonzeros, please refer to Feng, Luo \cite{JH001} and Zeng, Hu, etc. \cite{JHK001}. % \cite{FL001,ZH001}.
 The related problems in the binary cases with two nonzeros, were analyzed in Johansen, Helleseth, Kholosha \cite{JH001,JHK001}.

Assume that $p$ is an odd prime, $q=p^m$ for a positive odd integer $m$. % and $3\nmid m$.
 Let $\pi$ be a primitive element of $\mathbb{F}_q$. This paper determines the weight distributions of cyclic codes $\mathcal{C}_1$ and $\mathcal{C}_2$ over $\mathbb{F}_3$ with nonzeros $\pi^{-2},\pi^{-4},\pi^{-10}$ and $\pi^{-1},\pi^{-2},\pi^{-4},\pi^{-10}$ respectively, the weight distributions of which are verified by two examples using matlab. Note that the length of the cyclic codes is $l=q-1=3^m-1$ and $3\nmid m$.

In the following, Section \ref{Sec2} presents the basic notations and results about cyclic codes. Section \ref{Sec3} focus on the class of cyclic code $\mathcal{C}_1$. Section \ref{Sec4} is on the class of cyclic code $\mathcal{C}_2$. Final conclusion is in Section \ref{Sec5}.

\section{Preliminaries} \label{Sec2}

%\subsection{Relevant results from finite fields} \label{Sec3.2}
In this section, relevant knowledge from finite fields is presented first for our study of cyclic codes in Section \ref{Sec2.1}. Then some results about the calculations of exponential sums are presented in Section \ref{Sec2.2}. Section \ref{Sec2.3} concerns the sizes of cyclotomic cosets and the ranks of certain quadratic forms.%Then some results about the dimensions of particular quadratic forms and the calculations of certain exponential sums are given in Section \ref{Sec3.2.2} and Section \ref{Sec3.2.3} respectively for the weights of codewords in $\mathcal{C}$. Note that Lemmas \ref{FL002}, \ref{DH001} and \ref{G01} are known results.

\subsection{Finite fields and cyclic codes} \label{Sec2.1}

Here, some known properties about the codeword weight are listed, and the mathematical tools exponential sums and quadratic forms are introduced.
For more researches about cyclic codes, refer to \cite{D001,DY001,FY001,M001} for the irreducible case, and \cite{FL001,LF001,V001,YC01} for the reducible case.
%\noindent{\sl Weight of codeword in cyclic code:}

%  For an odd prime $p$, let the cyclic code $\mathcal{C}$ over $\mathbb{F}_p$ be of length $l=q-1=p^m-1$ with  parity check polynomial
%\[
%h(x)=h_1(x)\cdots h_{\iota}(x) \ \ \ (\iota \geq 1),
%\]
%where $h_\lambda(x) (1\leq \lambda \leq \iota)$ are distinct irreducible polynomials in $\mathbb{F}_p[x]$ with the degrees $e_\lambda (1\leq \lambda\leq \iota)$, then $k=\mbox{dim}_{\mathbb{F}_p}\mathcal{C} = \sum\limits_{\lambda=1}^{\iota}e_\lambda$. Let $\pi$ be a primitive element of $\mathbb{F}_q$ and $\pi^{-s_\lambda}$ be a zero of $h_\lambda(x), 1\leq s_\lambda \leq q-2 (1\leq \lambda \leq \iota)$. Then the codewords in $\mathcal{C}$ can be expressed by
%\begin{equation}\label{CW001}
%c(\alpha_1,\ldots,\alpha_{\iota}) = (c_0,c_1,\ldots,c_{l-1}) \ \ (\alpha_1,\ldots,\alpha_{\iota}\in \mathbb{F}_q),
%\end{equation}
%where $c_i = \sum\limits_{\lambda =1}^{\iota}\mbox{Tr}(\alpha_{\lambda}\pi^{is_{\lambda}}) (0\leq i\leq l-1)$ and $\mbox{Tr}:\mathbb{F}_q\rightarrow \mathbb{F}_p$ is the trace mapping from $\mathbb{F}_q$ to $\mathbb{F}_p$. Therefore the Hamming weight of the codeword $c=c(\alpha_1,\ldots,\alpha_{\iota})$ is:

 For an odd prime $p$ and a positive integer $m$, let the cyclic code $\mathcal{C}$ over $\mathbb{F}_p$ be of length $l=q-1=p^m-1$ with non-conjugate nonzeros $\pi^{-s_\lambda}$, where $1\leq s_\lambda \leq q-2 (1\leq \lambda \leq \iota)$ and $\pi$ is a primitive element of $\mathbb{F}_q$. Then the codewords in $\mathcal{C}$ can be expressed by
\begin{equation}\label{CW001}
c(\alpha_1,\ldots,\alpha_{\iota}) = (c_0,c_1,\ldots,c_{l-1}) \ \ (\alpha_1,\ldots,\alpha_{\iota}\in \mathbb{F}_q),
\end{equation}
where $c_i = \sum\limits_{\lambda =1}^{\iota}\mbox{Tr}(\alpha_{\lambda}\pi^{is_{\lambda}}) (0\leq i\leq l-1)$ and $\mbox{Tr}:\mathbb{F}_q\rightarrow \mathbb{F}_p$ is the trace mapping from $\mathbb{F}_q$ to $\mathbb{F}_p$. Therefore the Hamming weight of the codeword $c=c(\alpha_1,\ldots,\alpha_{\iota})$ is:

\begin{equation}\label{C01}
\begin{array}{ll}
w_H(c)&=\# \{i|0\leq i \leq l-1, c_i\not= 0\}\\
     % &= l-\# \{i|0\leq i \leq l-1, c_i =0\}\\
     % &=l-{1\over p}\sum\limits_{i=0}^{l-1}\sum\limits_{a=0}^{p-1} {\zeta_p^{a\cdot \mbox{Tr}(\sum_{\lambda = 1}^{\iota}\alpha_{\lambda}\pi^{is_{\lambda}})}}\\
      &=l-{l\over p}-{1\over p}\sum\limits_{a=1}^{p-1}\sum\limits_{x\in \mathbb{F}_q^*}{\zeta}_p^{\mbox{Tr}(af(x))}\\
      %&=l-{l\over p}+{{p-1}\over p}-{1\over p}\sum\limits_{a=1}^{p-1}S(a\alpha_1,\ldots,a\alpha_{\iota})\\
      &=p^{m-1}(p-1)-{1\over p}\sum\limits_{a=1}^{p-1}S(a\alpha_1,\ldots,a\alpha_{\iota})\\
      &=p^{m-1}(p-1)-{1\over p}R(\alpha_1,\ldots,\alpha_{\iota})
\end{array}
\end{equation}
where  ${\zeta_p}=e^{{2\pi i}\over p}$ ($i$ is imaginary unit), $f(x)=\alpha_1x^{s_1}+\alpha_2x^{s_2}+\cdots +\alpha_{\iota}x^{s_{\iota}}\in \mathbb{F}_q[x], \mathbb{F}_q^*=\mathbb{F}_q\backslash \{0\}$,
\begin{equation}\label{ES0001}
S(\alpha_1,\ldots,\alpha_{\iota}) = \sum\limits_{x\in \mathbb{F}_q}{\zeta}_p^{\mbox{Tr}\left(\alpha_1x^{s_1}+\cdots +\alpha_{\iota}x^{s_{\iota}}\right)},
\end{equation}
and $R(\alpha_1,\ldots,\alpha_{\iota})=\sum\limits_{a=1}^{p-1}S(a\alpha_1,\ldots,a\alpha_{\iota})$.

%\begin{remark}
%There may not be a one-to-one correspondence between the codewords of $\mathcal{C}$ and equation (\ref{CW001}).
%\end{remark}

%\noindent {\sl Quadratic forms:}

% Fix a basis $v_1,\ldots,v_m$ of $\mathbb{F}_q$ over $\mathbb{F}_p$ where $q=p^m$, each $x\in \mathbb{F}_q$ can be uniquely expressed as
%\[
%x=x_1v_1+\ldots +x_mv_m \ \ \ (x_i \in \mathbb{F}_p).
%\]
%There is the following $\mathbb{F}_p$-linear isomorphism:
%\[
%\mathbb{F}_q \overset{\sim}{\rightarrow} \mathbb{F}_p^m, \ \ x=x_1v_1+\cdots+x_mv_m \mapsto X=(x_1,\ldots,x_m).
%\]
%With this isomorphism, a function $f:\mathbb{F}_q \rightarrow \mathbb{F}_q$ induces a function $F:\mathbb{F}_p^m \rightarrow \mathbb{F}_p$ where for %$X=(x_1,\ldots,x_m)\in \mathbb{F}_p^m, F(X)=\mbox{Tr}(f(x))$.

For general functions of the form $
f_{\alpha,\ldots,\gamma}(x) = \alpha x^{p^i+1} + \cdots + \gamma x^{p^j+1}$
 where $0\leq i,\ldots, j \leq  \lfloor{{m}\over 2}\rfloor$, there are quadratic forms
 \begin{equation}\label{QF01}
 F_{\alpha,\ldots,\gamma}(X)
 \end{equation}
  and corresponding symmetric matrices
  \begin{equation}\label{QF02}
  H_{\alpha,\ldots,\gamma}
  \end{equation}
  satisfying that $F_{\alpha,\ldots,\gamma}(X)=XH_{\alpha,\ldots,\gamma}X^T=\mbox{Tr}(f_{\alpha,\ldots,\gamma}(x)).$

%  \hspace{0.5cm}

%\noindent{\sl Symmetric matrix:}

%For an $m\times m$ symmetric matrix $H$ over $F_p$ and $r= \mbox{rank}H_{\alpha,\ldots,\gamma}$,
It is known that there exists $M_{\alpha,\ldots,\gamma} \in \mbox{GL}_m(\mathbb{F}_p)$ such that
 \[
H_{\alpha,\ldots,\gamma}'=M_{\alpha,\ldots,\gamma}H_{\alpha,\ldots,\gamma}M_{\alpha,\ldots,\gamma}^T=\mbox{diag}(a_1,\ldots,a_{r_{\alpha,\ldots,\gamma}},0,\ldots,0),
\]
where $a_i\in \mathbb{F}_p^* (1\leq i\leq r_{\alpha,\ldots,\gamma})$ and $r_{\alpha,\ldots,\gamma}= \mbox{rank}H_{\alpha,\ldots,\gamma}$. Let $\Delta =a_1\cdots a_{r_{\alpha,\ldots,\gamma}}$ (set $\Delta =1$ for $r_{\alpha,\ldots,\gamma}=0$), and
\begin{equation}\label{D01}
\left({\Delta \over p}\right)
\end{equation}
denotes the Legendre symbol.
The following result is about the exponential sum corresponding to the symmetric matrix $H_{\alpha,\ldots,\gamma}$ \cite{FL001}, see also \cite{LH001}.

\begin{lemma}(Lemma 1, \cite{FL001})\label{FL002}
\begin{enumerate}
\renewcommand{\labelenumi}{$($\mbox{\roman{enumi}}$)$}
\item
For the quadratic form $F(X) = XHX^T$, %defined in (\ref{QF001}),
\[
\sum\limits_{X\in \mathbb{F}_p^m}\zeta_p^{F(X)} =
\begin{cases}
\left({\Delta \over p}\right)p^{m-r/2} & \mbox{if} \ \ p\equiv 1 \ (\mbox{mod} \ 4),\\
i^r\left({\Delta \over p}\right)p^{m-r/2} & \mbox{if} \ \ p \equiv 3 \ (\mbox{mod} \ 4).
\end{cases}
\]
\item
For $A = (a_1,\ldots,a_m)\in \mathbb{F}_p^m$, if \ $2YH+A=0$ has solution $Y=B \in \mathbb{F}_p^m$, then
\begin{equation}
\sum_{X\in \mathbb{F}_p^m}{\zeta}_p^{F(X)+AX^T} = {\zeta}_{p}^c\sum_{X\in \mathbb{F}_p^m}{\zeta}_p^{F(X)} \ \mbox{where} \ \ c={1\over 2}AB^T\in \mathbb{F}_p.
\end{equation}
Otherwise $\sum\limits_{X\in \mathbb{F}_p^m}{\zeta}_p^{F(X)+AX^T} = 0$.
\end{enumerate}
\end{lemma}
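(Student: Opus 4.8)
The plan is to establish part (i) by diagonalizing the quadratic form and reducing the sum to a product of one-variable Gauss sums. Since $H$ is symmetric and $p$ is odd, there is $M \in \mbox{GL}_m(\mathbb{F}_p)$ with $MHM^T = \mbox{diag}(a_1,\ldots,a_r,0,\ldots,0)$, $a_i \in \mathbb{F}_p^*$, $r = \mbox{rank}\,H$, exactly as recorded just before the statement. The substitution $X = YM$ is a bijection of $\mathbb{F}_p^m$ that turns $F(X)=XHX^T$ into $\sum_{i=1}^r a_i y_i^2$, so
\[
\sum_{X\in\mathbb{F}_p^m}\zeta_p^{F(X)} = p^{m-r}\prod_{i=1}^r\Big(\sum_{y\in\mathbb{F}_p}\zeta_p^{a_i y^2}\Big),
\]
the factor $p^{m-r}$ arising from the free variables $y_{r+1},\ldots,y_m$. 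I would then invoke the classical quadratic Gauss sum: writing $g=\sum_{y\in\mathbb{F}_p}\zeta_p^{y^2}$, a scaling of the variable gives $\sum_{y}\zeta_p^{a y^2}=\left(\frac{a}{p}\right)g$ for $a\in\mathbb{F}_p^*$, and Gauss's theorem gives $g=\sqrt p$ for $p\equiv 1\pmod4$ and $g=i\sqrt p$ for $p\equiv 3\pmod4$. Multiplicativity of the Legendre symbol collapses $\prod_i\left(\frac{a_i}{p}\right)$ to $\left(\frac{\Delta}{p}\right)$, while $g^r$ contributes $p^{r/2}$ or $i^rp^{r/2}$ according to $p\bmod4$; combined with $p^{m-r}$ this is exactly the two-case formula of part (i).

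For part (ii), the approach is to complete the square by a translation $X=Z+B$. Using the symmetry of $H$ (so the two cross terms agree) one obtains
\[
F(X)+AX^T = F(Z)+(2BH+A)Z^T+F(B)+AB^T.
\]
If $Y=B$ solves $2YH+A=0$, the linear term in $Z$ vanishes, and since $Z$ runs over $\mathbb{F}_p^m$ as $X$ does, the sum factors as $\zeta_p^{F(B)+AB^T}\sum_Z\zeta_p^{F(Z)}$. From $2BH+A=0$ we get $BH=-\tfrac12A$, hence $F(B)=BHB^T=-\tfrac12AB^T$ and the constant exponent is $c=F(B)+AB^T=\tfrac12AB^T$, as stated.

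To handle the case where $2YH+A=0$ has no solution, I would use a shift/orthogonality argument. Non-solvability means $A$ lies outside the row space of $H$, whose orthogonal complement for the standard bilinear form is $\ker H$; thus there is $B_0$ with $HB_0^T=0$ but $AB_0^T\neq0$. Replacing $X$ by $X+tB_0$ leaves $F(X)+AX^T$ unchanged apart from the additive constant $t\,AB_0^T$, because both $B_0HX^T$ and $F(B_0)$ vanish. Hence the sum is multiplied by $\zeta_p^{t\,AB_0^T}\neq1$ while being invariant under the shift, which forces it to be $0$.

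The only genuinely deep ingredient is the determination of the sign and the $i$-factor of the Gauss sum $g$; the rest is linear algebra over $\mathbb{F}_p$ together with completing the square. I therefore expect the main difficulty to be expository rather than substantive: pinning down $g$ at the right normalization so that the $p\equiv1$ versus $p\equiv3\pmod4$ dichotomy in part (i) produces precisely the stated powers of $i$ and $p$.
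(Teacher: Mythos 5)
Your proposal is correct: the diagonalization $MHM^T=\mbox{diag}(a_1,\ldots,a_r,0,\ldots,0)$ reduces part (i) to the classical quadratic Gauss sum $\sum_{y}\zeta_p^{ay^2}=\left({a\over p}\right)g$ with $g=\sqrt p$ or $i\sqrt p$ according to $p\bmod 4$, and completing the square plus the kernel-shift orthogonality argument correctly handles both branches of part (ii), including the identity $F(B)+AB^T={1\over 2}AB^T=c$. Note that the paper itself gives no proof of this lemma --- it is imported verbatim from \cite{FL001} --- so there is nothing to compare against; your argument is the standard one and is exactly what that reference (and \cite{LH001}) supplies.
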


\subsection{Results about exponential sums}\label{Sec2.2}

In this subsection Lemma \ref{ES02} and Remark \ref{R0022} are from \cite{LL001}, also refer to \cite{FL001} for the calculations of exponential sums that will be needed in the sequel.
\begin{lemma}\label{ES02}
For the quadratic form $F_{\alpha,\ldots,\gamma}(X) = XH_{\alpha,\ldots,\gamma}X^T$ corresponding to $f_{\alpha,\ldots,\gamma}(x)$, see (\ref{QF01})
\begin{enumerate}
\renewcommand{\labelenumi}{$($\mbox{\roman{enumi}}$)$}
\item
if the rank $r_{\alpha,\ldots,\gamma}$ of the symmetric matrix $H_{\alpha,\ldots,\gamma}$ is even, which means that $S(\alpha,\ldots,\gamma) = \varepsilon p^{m-{r_{\alpha,\ldots,\gamma}\over 2}}$, then
\[
R(\alpha,\ldots,\gamma) =\varepsilon (p-1) p^{m-{r_{\alpha,\ldots,\gamma}\over 2}}; %\sum\limits_{a=1}^{p-1}S(a\alpha,\ldots,a\beta)
\]
\item
if the rank $r_{\alpha,\ldots,\gamma}$ of the symmetric matrix $H_{\alpha,\ldots,\gamma}$ is odd, which means that $S(\alpha,\ldots,\gamma) = \varepsilon \sqrt{p^*}p^{m-{{r_{\alpha,\ldots,\gamma}+1}\over 2}}$, then
\[
R(\alpha,\ldots,\gamma) = 0
\]
\end{enumerate}
where $ \varepsilon = \pm 1$ and $p^*=\left({{-1} \over p}\right)p$.
\end{lemma}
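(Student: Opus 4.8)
The plan is to compute $R(\alpha,\ldots,\gamma)=\sum_{a=1}^{p-1}S(a\alpha,\ldots,a\gamma)$ by applying Lemma \ref{FL002}(i) to each summand and exploiting how the associated quadratic form behaves under the scaling $(\alpha,\ldots,\gamma)\mapsto(a\alpha,\ldots,a\gamma)$. Since $f_{a\alpha,\ldots,a\gamma}(x)=a\,f_{\alpha,\ldots,\gamma}(x)$, the symmetric matrix scales as $H_{a\alpha,\ldots,a\gamma}=a\,H_{\alpha,\ldots,\gamma}$. Because $a\in\mathbb{F}_p^*$ is invertible this leaves the rank unchanged, equal to $r:=r_{\alpha,\ldots,\gamma}$, and if $M H_{\alpha,\ldots,\gamma}M^T=\mathrm{diag}(a_1,\ldots,a_r,0,\ldots,0)$ then the same $M$ diagonalizes $aH_{\alpha,\ldots,\gamma}$ to $\mathrm{diag}(aa_1,\ldots,aa_r,0,\ldots,0)$, so the relevant product passes from $\Delta$ to $a^r\Delta$.

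First I would substitute this into Lemma \ref{FL002}(i). The only place $a$ enters the resulting expression for $S(a\alpha,\ldots,a\gamma)$ is through the Legendre symbol $\left({a^r\Delta\over p}\right)=\left({a\over p}\right)^r\left({\Delta\over p}\right)$; the power of $p$, and the factor $i^r$ when $p\equiv 3\ (\mathrm{mod}\ 4)$, depend only on $r$. If $r$ is even then $\left({a\over p}\right)^r=1$ for every $a$, so $S(a\alpha,\ldots,a\gamma)=S(\alpha,\ldots,\gamma)=\varepsilon p^{m-r/2}$ is independent of $a$, and summing over the $p-1$ values of $a$ gives $R=(p-1)\varepsilon p^{m-r/2}$, proving (i). If instead $r$ is odd then $\left({a\over p}\right)^r=\left({a\over p}\right)$, so each summand is a fixed constant times $\left({a\over p}\right)$; summing and using $\sum_{a=1}^{p-1}\left({a\over p}\right)=0$ (equally many quadratic residues and nonresidues modulo $p$) yields $R=0$, proving (ii).

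The identification of the factor $\sqrt{p^*}$ appearing in the odd case is routine bookkeeping, via $p^{m-r/2}=\sqrt{p}\,p^{m-(r+1)/2}$ together with $p^*=\left({-1\over p}\right)p$, and is not actually needed for the conclusion $R=0$. The one step that genuinely requires care is verifying that the rank is preserved and that the determinant transforms as $\Delta\mapsto a^r\Delta$ under $H\mapsto aH$; once that is in hand the whole argument collapses to the elementary vanishing of the Legendre-symbol sum over a complete set of nonzero residues.
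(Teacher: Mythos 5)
Your proof is correct and complete: the scaling $H_{a\alpha,\ldots,a\gamma}=aH_{\alpha,\ldots,\gamma}$ preserves the rank, sends $\Delta$ to $a^{r}\Delta$, and reduces everything to $\left({a\over p}\right)^{r}$, so the even case gives $p-1$ identical summands and the odd case vanishes by $\sum_{a=1}^{p-1}\left({a\over p}\right)=0$. The paper itself imports this lemma from \cite{LL001} without reproducing a proof, but your argument is exactly the technique the paper does use in its proof of Lemma \ref{RO01} (scaling $f$ by $a\in\mathbb{F}_p^*$ and tracking $\Delta\mapsto a^{r}\Delta$ in Lemma \ref{FL002}), so this is essentially the intended route.
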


\begin{lemma}\label{RO01}
 Let $F_{\alpha,\ldots,\gamma}(X) = XH_{\alpha,\ldots,\gamma}X^T$ be the quadratic form corresponding to $f_{\alpha,\ldots,\gamma}(x)$, see (\ref{QF01}).
If the rank $r_{\alpha,\ldots,\gamma}$ of the symmetric matrix $H_{\alpha,\ldots,\gamma}$ is odd, then the number of quadratic forms with exponential sum $\sqrt{p^*}p^{m-{{r_{\alpha,\ldots,\gamma}+1}\over 2}}$ equals the number of quadratic forms with exponential sum $-\sqrt{p^*}p^{m-{{r_{\alpha,\ldots,\gamma}+1}\over 2}}$ where $p^*=\left({{-1} \over p}\right)p$.
\end{lemma}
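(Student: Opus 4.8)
The plan is to produce an explicit sign-reversing bijection on the set of parameter tuples $(\alpha,\ldots,\gamma)$ whose associated quadratic form $F_{\alpha,\ldots,\gamma}$ has the given odd rank $r:=r_{\alpha,\ldots,\gamma}$. First I would pin down exactly what data the sign $\varepsilon$ of Lemma \ref{ES02}(ii) records. Comparing the two evaluations of the same sum, namely the closed form of Lemma \ref{FL002}(i) and the normalized form $\varepsilon\sqrt{p^\ast}\,p^{m-(r+1)/2}$ of Lemma \ref{ES02}(ii), and using $\sqrt{p^\ast}\,p^{-1/2}=\sqrt{\left(\tfrac{-1}{p}\right)}$, one finds that
\[
\varepsilon=\kappa(p,r)\left(\tfrac{\Delta}{p}\right),\qquad \kappa(p,r)=\begin{cases}1,& p\equiv 1\ (\mathrm{mod}\ 4),\\ (-1)^{(r-1)/2},& p\equiv 3\ (\mathrm{mod}\ 4),\end{cases}
\]
where $\Delta=a_1\cdots a_r$ is the discriminant of the diagonalized matrix $H'_{\alpha,\ldots,\gamma}$ and $\left(\tfrac{\Delta}{p}\right)$ is the Legendre symbol of (\ref{D01}). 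The crucial feature is that $\kappa(p,r)$ depends only on $p$ and the fixed rank $r$, so that, once $r$ is fixed, the sign $\varepsilon$ is controlled solely by the quadratic character $\left(\tfrac{\Delta}{p}\right)$ of the discriminant.

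Next I would exploit scaling of the coefficients by a fixed quadratic non-residue $d\in\mathbb{F}_p^\ast$. Since $\mathrm{Tr}$ is $\mathbb{F}_p$-linear and $d\in\mathbb{F}_p$, we have $\mathrm{Tr}\!\left(d\,f_{\alpha,\ldots,\gamma}(x)\right)=d\,\mathrm{Tr}\!\left(f_{\alpha,\ldots,\gamma}(x)\right)$, so that $f_{d\alpha,\ldots,d\gamma}=d\,f_{\alpha,\ldots,\gamma}$ has symmetric matrix $H_{d\alpha,\ldots,d\gamma}=d\,H_{\alpha,\ldots,\gamma}$. Hence the map $\phi_d:(\alpha,\ldots,\gamma)\mapsto(d\alpha,\ldots,d\gamma)$ is a bijection of the parameter space that (i) preserves the rank, because $d\neq 0$, and (ii) replaces the discriminant $\Delta$ by $d^{\,r}\Delta$. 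Because $r$ is odd and $d$ is a non-residue, $\left(\tfrac{d^{\,r}\Delta}{p}\right)=\left(\tfrac{d}{p}\right)^{r}\left(\tfrac{\Delta}{p}\right)=-\left(\tfrac{\Delta}{p}\right)$, so $\phi_d$ flips the Legendre symbol and therefore, by the displayed identity, flips $\varepsilon$.

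Finally, restricting $\phi_d$ to the tuples of rank $r$ yields a bijection carrying those with $\varepsilon=+1$ (exponential sum $\sqrt{p^\ast}\,p^{m-(r+1)/2}$) onto those with $\varepsilon=-1$ (exponential sum $-\sqrt{p^\ast}\,p^{m-(r+1)/2}$); since $\phi_d$ also descends to the quadratic forms themselves (if $F_{\alpha,\ldots,\gamma}=F_{\alpha',\ldots,\gamma'}$ then $F_{d\alpha,\ldots}=d\,F_{\alpha,\ldots}=d\,F_{\alpha',\ldots}=F_{d\alpha',\ldots}$), the two counts coincide, which is the assertion. I expect the only real bookkeeping to lie in the first paragraph: one must verify the normalization relating Lemma \ref{FL002} and Lemma \ref{ES02} carefully enough to be sure that flipping $\left(\tfrac{\Delta}{p}\right)$ genuinely flips $\varepsilon$, rather than merely multiplying the sum by some other root of unity. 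The substantive idea, the non-residue scaling $\phi_d$, is then immediate; for the paper's case $p=3$ one may simply take $d=-1$, for which $\phi_{-1}$ is a fixed-point-free involution on the odd-rank tuples.
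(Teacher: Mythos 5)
Your proposal is correct and is essentially the paper's own argument: the paper likewise scales the coefficients by a fixed quadratic nonresidue $a\in\mathbb{F}_p^*$, notes that $aH_{\alpha,\ldots,\gamma}$ has the same (odd) rank while its discriminant $a^{r}\Delta$ has flipped Legendre symbol, and concludes via Lemma \ref{FL002}. Your extra bookkeeping making $\varepsilon=\kappa(p,r)\left(\frac{\Delta}{p}\right)$ explicit is a sound elaboration of the step the paper leaves implicit, not a different route.
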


\begin{proof}
%For $f_{\alpha,\ldots,\gamma}(x)$ with corresponding rank $r_{\alpha,\ldots,\gamma}$ and symmetric matrix $H_{\alpha,\ldots,\gamma}$.
 Choose a quadratic nonresidue $a\in \mathbb{F}_p^*$, then the symmetric matrix corresponding to $af_{\alpha,\ldots,\gamma}(x)$ is $aH_{\alpha,\ldots,\gamma}$ which also has rank $r_{\alpha,\ldots,\gamma}$, and
 \[
 M_{\alpha,\ldots,\gamma}\left(aH_{\alpha,\ldots,\gamma}\right)M_{\alpha,\ldots,\gamma}^T=\mbox{diag}(aa_1,\ldots,aa_{r_{\alpha,\ldots,\gamma}},0,\ldots,0).
 \] Since $r_{\alpha,\ldots,\gamma}$ is odd,
\[
\left({\Delta \cdot a^{r_{\alpha,\ldots,\gamma}} \over p}\right) =\left({\Delta \cdot a \over p}\right)=\left({\Delta \over p}\right)\cdot \left({a \over p}\right)=-\left({\Delta \over p}\right)
\]
where $\Delta =a_1\cdots a_{r_{\alpha,\ldots,\gamma}}$.
The result follows from Lemma \ref{FL002} and the statement above it.  \qed
\end{proof}

\begin{remark}\label{R0022}
For the exponential sum $S(\alpha,\ldots,\gamma)$ corresponding to $f_{\alpha,\ldots,\gamma}(x)=\alpha x^{p^i+1}+\cdots+\gamma x^{p^j+1}$ with quadratic form $F_{\alpha,\ldots,\gamma}(X)$ and symmetric matrix $H_{\alpha,\ldots,\gamma}$ (equation (\ref{QF01})), consider $S'(\alpha,\ldots,\gamma,\delta)$ with respect to
\begin{equation}\label{0005}
f'_{\alpha,\ldots,\gamma,\delta}(x)=f_{\alpha,\ldots,\gamma}(x)+\delta x,
\end{equation}
 and $R'(\alpha,\ldots,\gamma,\delta)=\sum\limits_{a=1}^{p-1}S'(a\alpha,\ldots,a\gamma,a\delta)$ (equation (\ref{C01})). From Lemma \ref{FL002}, there are four cases to be considered where the first two equations are for the case with symmetric matrices $H$ of even rank and the last two equations for the case of odd rank. %Then%Now, by Lemma \ref{G01} we can calculate
\begin{itemize}
\renewcommand{\labelitemi}{\labelitemiii}
\item
If $S'(\alpha,\ldots,\gamma,\delta)=\varepsilon p^{r'}$, then $R'(\alpha,\ldots,\gamma,\delta)= %\sum\limits_{a=1}^{p-1}\sigma_a(S'(\alpha,\ldots,\gamma,\delta)) = \sum\limits_{a=1}^{p-1}\varepsilon p^{r'} =
\varepsilon (p-1)p^{r'}; $
\item
If $S'(\alpha,\ldots,\gamma,\delta)=\varepsilon \zeta_p^c p^{r'}$, then $R'(\alpha,\ldots,\gamma,\delta)=% \sum\limits_{a=1}^{p-1}\sigma_a(S'(\alpha,\ldots,\gamma,\delta))=\varepsilon p^{r'}\sum\limits_{a=1}^{p-1}\zeta_p^{ac} =
-\varepsilon p^{r'};$
\item
If $S'(\alpha,\ldots,\gamma,\delta)=\varepsilon \sqrt{p^*}p^{r'}$, then $R'(\alpha,\ldots,\gamma,\delta)= %\sum\limits_{a=1}^{p-1}\sigma_a(S'(\alpha,\ldots,\gamma,\delta)) = \varepsilon \sqrt{p^*}p^{r'}\sum\limits_{a=1}^{p-1}\left(a\over p\right) =
0;$
\item
If $S'(\alpha,\ldots,\gamma,\delta)=\varepsilon \zeta_p^c \sqrt{p^*} p^{r'}$, then $R'(\alpha,\ldots,\gamma,\delta)= %\sum\limits_{a=1}^{p-1}\sigma_a(S'(\alpha,\ldots,\gamma,\delta))=\varepsilon \sqrt{p^*}p^{r'}\sum\limits_{a=1}^{p-1}\left(a\over p\right)\zeta_p^{ac} =
\varepsilon \left({-c}\over p\right)p^{{r'}+{1}}$.
\end{itemize}
In the above, $r'$ is a positive integers, $c \in \mathbb{F}_p^* $, $p^*=\left({-1}\over p\right)p$ and $ \varepsilon =\pm 1$.
\end{remark}

\subsection{Cyclotomic cosets and the ranks of certain quadratic forms}\label{Sec2.3}

The cyclotomic coset containing $s$ is defined to be
\begin{equation}\label{CC01}
\mathcal{D}_s=\{s,sp,sp^2,\ldots,sp^{m_s-1}\}
\end{equation}
where $m_s$ is the smallest positive integer such that $p^{m_s}\cdot s \equiv s \ (\mbox{mod} \ p^m-1)$.

In the following, Lemma \ref{GCD002} and Lemma \ref{RQ002} are from \cite{LL001}, also refer to \cite{CH001} for the binary case of Lemma \ref{GCD002}.
\begin{lemma}\label{GCD002}
%\begin{enumerate}
%\item
If $m=2t+1$ is odd, then for $l_i=1+p^i$, the cyclotomic coset $\mathcal{D}_{l_i}$ has size
\[
|\mathcal{D}_{l_i}|=m, \ \ \ 0\leq i \leq t.
\]
%\item
If $m=2t+2$ is even, then for $l_i=1+p^i$, the cyclotomic coset $\mathcal{D}_{l_i}$ has size
\[
|\mathcal{D}_{l_i}| =
\begin{cases}
m, & 0\leq i \leq t \\
m/2, & i=t+1.
\end{cases}
\]
%where $t\ge 1$ is a positive integer.
%\end{enumerate}
\end{lemma}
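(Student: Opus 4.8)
The plan is to compute directly the least positive integer $k$ for which $p^{k} l_i \equiv l_i \pmod{p^m-1}$, since by the definition in (\ref{CC01}) this integer is precisely $|\mathcal{D}_{l_i}| = m_{l_i}$. Writing $l_i = 1 + p^i$, the defining congruence becomes
\[
p^{k} + p^{k+i} \equiv 1 + p^i \pmod{p^m-1}.
\]
Because $p^m \equiv 1 \pmod{p^m-1}$, the exponents matter only modulo $m$; I set $a = k \bmod m$ and $b = (k+i)\bmod m$, so the left-hand side reduces to $p^a + p^b$ with $0 \le a,b \le m-1$. The first elementary observation I would record is that $|\mathcal{D}_{l_i}|$ always divides $m$ (taking $k=m$ satisfies the congruence since $p^m\equiv 1$), so it suffices to locate the least positive $k$.

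The engine of the argument is uniqueness of the base-$p$ expansion. Since $p$ is odd, $p\ge 3$, and any sum $p^a + p^b$ with $0 \le a,b \le m-1$ satisfies $p^a + p^b \le 2p^{m-1} < p^m - 1$, so it is already reduced modulo $p^m-1$ and, when $a \ne b$, has a unique representation as a sum of two distinct powers of $p$ with exponents in $\{0,\dots,m-1\}$. I would first dispose of the degenerate case $i = 0$, where $l_0 = 2$: there $p^a + p^b = 2p^a$ and the congruence $2p^a \equiv 2$ forces $2(p^a-1) \equiv 0 \pmod{p^m-1}$, which by the same size estimate forces $a = 0$, hence $k = m$. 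For $1 \le i \le t$ the two powers on each side are distinct, so matching left and right via uniqueness gives $\{a,b\} = \{0,i\}$, leaving exactly two possibilities: either $a = 0$ (forcing $k \equiv 0 \pmod m$, i.e. $k = m$), or $a = i$ and $b = 0$, which forces $k + i \equiv 0 \pmod m$ and hence $m \mid 2i$.

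It then remains to decide when the second possibility can actually lower the coset size. When $m = 2t+1$ is odd, $m \mid 2i$ is equivalent to $m \mid i$, impossible for $1 \le i \le t$; similarly when $m = 2t+2$ is even, $m \mid 2i$ forces $(t+1) \mid i$, again impossible for $1 \le i \le t$. Thus for every $0 \le i \le t$ only the branch $a = 0$ survives, so $|\mathcal{D}_{l_i}| = m$, which proves both parts of the statement in that range. For the remaining even case $i = t+1 = m/2$ we have $2i = m$, so $m \mid 2i$ holds and $k = i = m/2$ already satisfies the congruence; a direct check shows that no $0 < k < m/2$ can make $\{k,\ (k+i)\bmod m\}$ equal $\{0,\ m/2\}$, so $m/2$ is least and $|\mathcal{D}_{l_{t+1}}| = m/2$.

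The only delicate point throughout is the size estimate $p^a + p^b < p^m-1$ that legitimizes the uniqueness-of-representation step, and this is exactly where the hypothesis that $p$ is an odd prime enters; I would state this inequality carefully before invoking it. Everything else is bookkeeping on the two branches $a=0$ and $a=i$, together with the parity/divisibility analysis of $m \mid 2i$ that cleanly separates the three regimes in the statement.
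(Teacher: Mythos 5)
Your argument is correct and complete. Note, however, that the paper itself gives no proof of this lemma: it is quoted from the authors' companion paper \cite{LL001} (with \cite{CH001} cited for the binary analogue), so there is nothing in this text to compare your route against. What you supply is the standard argument and it fills that gap cleanly: reduce $p^k l_i \equiv l_i \pmod{p^m-1}$ to $p^a+p^b \equiv 1+p^i$ with exponents taken mod $m$, use the bound $p^a+p^b \le 2p^{m-1} < p^m-1$ (valid precisely because $p\ge 3$; this is the step that fails for $p=2$ and is why the binary case needs separate treatment) to turn the congruence into an equality of base-$p$ expansions, and then read off $\{a,b\}=\{0,i\}$, leaving only the two branches $k\equiv 0$ and $2i\equiv 0 \pmod m$. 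The parity analysis of $m\mid 2i$ correctly isolates the exceptional coset $\mathcal{D}_{1+p^{m/2}}$ of size $m/2$ in the even case, and your separate handling of $i=0$ (where the two powers coincide) and your minimality check for $k=m/2$ close the remaining loose ends. The only cosmetic caveat is that the inequality $2p^{m-1}<p^m-1$ requires $m\ge 2$, which is harmless here since the paper always works with $m>1$.
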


For $f_d'(x) =\alpha_0 x^2+\alpha_1x^{p+1}+\cdots+\alpha_dx^{p^d+1}$ with corresponding quadratic form $F_d'(X)=\mbox{Tr}(f_d'(x))=XH_d'X^T$ where $(\alpha_0,\alpha_1,\ldots,\alpha_d)\in \mathbb{F}_q^{d+1}\backslash\{(0,0,\ldots,0)\}$, the following result is about its rank.
\begin{lemma}\label{RQ002}
Let $m$ be a positive integer, $0\leq d\leq \lfloor{m\over 2}\rfloor$. The rank $r_d'$ of the symmetric matrix $H_d'$ satisfies $r_d'\geq m-2d$.
\end{lemma}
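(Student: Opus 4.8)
The plan is to identify $r_d'=\mbox{rank}\,H_d'$ with $m$ minus the dimension of the radical of the bilinear form attached to $F_d'$, and then to show that this radical is the kernel of a linearized $p$-polynomial of small $p$-degree. Write $f_d'(x)=\sum_{i=0}^d \alpha_i x^{p^i+1}$ and let $B(x,y)=F_d'(x+y)-F_d'(x)-F_d'(y)$ be the symmetric bilinear form associated with $F_d'(X)=\mbox{Tr}(f_d'(x))$. Since $p$ is odd, the matrix of $B$ is $2H_d'$, so $r_d'=m-\dim_{\mathbb{F}_p}\mathrm{rad}(B)$, where $\mathrm{rad}(B)=\{x\in\mathbb{F}_q: B(x,y)=0\ \forall\, y\}$. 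Thus it suffices to bound $\dim\mathrm{rad}(B)$ above by $2d$.

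First I would compute $B$ explicitly. Expanding $(x+y)^{p^i+1}$ gives $B(x,y)=\mbox{Tr}\big(\sum_{i=0}^d \alpha_i(x^{p^i}y+xy^{p^i})\big)$. Applying the identity $\mbox{Tr}(z)=\mbox{Tr}(z^{p^{m-i}})$ to each term $\mbox{Tr}(\alpha_i x y^{p^i})$ moves the Frobenius off $y$ and yields $B(x,y)=\mbox{Tr}\big(y\,L(x)\big)$ with the linearized polynomial $L(x)=\sum_{i=0}^d \alpha_i x^{p^i}+\sum_{i=0}^d \alpha_i^{p^{m-i}} x^{p^{m-i}}$. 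Because the trace form $(u,v)\mapsto\mbox{Tr}(uv)$ is nondegenerate, $x\in\mathrm{rad}(B)$ if and only if $L(x)=0$; hence $\mathrm{rad}(B)=\ker L$ and $r_d'=m-\dim_{\mathbb{F}_p}\ker L$.

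The crux is to bound $\dim\ker L$. Since the Frobenius $x\mapsto x^p$ is a bijection of $\mathbb{F}_q$, raising $L$ to the $p^d$-th power does not change the kernel, so $\ker L=\ker Q$ where $Q(x)=L(x)^{p^d}=\sum_{i=0}^d\alpha_i^{p^d}x^{p^{i+d}}+\sum_{i=0}^d\alpha_i^{p^{m-i+d}}x^{p^{m+d-i}}$. On $\mathbb{F}_q$ one has $x^{p^m}=x$, so the exponents $p^{m+d-i}$ reduce to $p^{d-i}$; consequently every index occurring in $Q$ lies in $\{0,1,\dots,2d\}$, and $Q$ is, as a function on $\mathbb{F}_q$, a linearized polynomial $\sum_{k=0}^{2d}\beta_k x^{p^k}$ of $p$-degree at most $2d$. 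A nonzero such polynomial has degree at most $p^{2d}$ and its roots form an $\mathbb{F}_p$-subspace, so $\dim_{\mathbb{F}_p}\ker Q\le 2d$; therefore $r_d'\ge m-2d$.

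Finally I would dispose of the degenerate possibility $Q\equiv0$. Comparing the coefficients of $L$ after reduction modulo $x^{p^m}-x$ shows that when $2d<m$ the index blocks $\{0,\dots,d\}$ and $\{m-d,\dots,m-1\}$ are disjoint except at $0$, so $L\equiv0$ would force $\alpha_k=0$ for $1\le k\le d$ together with $2\alpha_0=0$, i.e. all $\alpha_i=0$, contradicting $(\alpha_0,\dots,\alpha_d)\neq(0,\dots,0)$; and when $2d=m$ the claimed bound $r_d'\ge m-2d=0$ is automatic. The main obstacle is the bookkeeping in the second and third steps: correctly applying $\mbox{Tr}(z)=\mbox{Tr}(z^{p})$ to assemble $L$, and recognizing that raising to the $p^d$-th power is precisely what folds the ``high'' exponents $p^{m-i}$ back into the range $[0,2d]$, so that the gap between the two exponent blocks becomes the source of the rank bound.
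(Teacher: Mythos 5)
The paper gives no proof of this lemma --- it is simply quoted from the authors' reference [LL001] --- but your argument is correct and is essentially the standard one used there and in Feng--Luo [FL001]: identify $m-r_d'$ with $\dim_{\mathbb{F}_p}\ker L$ for the linearized polynomial $L(x)=\sum_{i=0}^d\bigl(\alpha_i x^{p^i}+\alpha_i^{p^{m-i}}x^{p^{m-i}}\bigr)$ arising from the associated bilinear form, fold the exponents into the range $[0,2d]$ by taking the $p^d$-th power, and bound the kernel by $2d$ since a nonzero linearized polynomial of $p$-degree at most $2d$ has root space of dimension at most $2d$. Your handling of the nondegeneracy of the trace form, the factor $2$ (invertible since $p$ is odd), and the degenerate case $Q\equiv 0$ is all sound, so nothing further is needed.
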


The following corollary is a special case of Lemma \ref{RQ002}.% From the above lemma, there is the following result.
\begin{corollary}\label{RQ02}
The rank $r_2'$ of the symmetric matrix $H_2'$ corresponding to $f_2'(x) =\alpha_0 x^2+\alpha_1x^{p+1}+\alpha_2x^{p^2+1}$ has five possible values:
\[
m,m-1,m-2,m-3,m-4.
\]
\end{corollary}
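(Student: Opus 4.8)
The plan is to obtain this directly from Lemma \ref{RQ002} together with the trivial bound on the rank of a matrix. First I would specialize Lemma \ref{RQ002} to $d=2$: the polynomial $f_2'(x)=\alpha_0 x^2+\alpha_1 x^{p+1}+\alpha_2 x^{p^2+1}$ is exactly the instance $f_d'(x)$ of that lemma with $d=2$, and $(\alpha_0,\alpha_1,\alpha_2)\in\mathbb{F}_q^{3}\backslash\{(0,0,0)\}$. Hence the associated symmetric matrix $H_2'$ satisfies the lower bound $r_2'=\mbox{rank}\,H_2'\geq m-2d=m-4$.

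For the matching upper bound I would note that the quadratic form $F_2'(X)=XH_2'X^T$ is defined for $X\in\mathbb{F}_p^m$, so $H_2'$ is an $m\times m$ matrix over $\mathbb{F}_p$ and therefore $r_2'\leq m$. Combining the two estimates yields $m-4\leq r_2'\leq m$, and since the rank is an integer, the only admissible values are $m,\ m-1,\ m-2,\ m-3,\ m-4$, which is the assertion.

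I do not expect any genuine obstacle, since the statement is a straightforward specialization of the preceding lemma combined with the dimension bound $r_2'\leq m$. The one point worth verifying is that the lower bound $r_2'\geq m-4$ supplied by Lemma \ref{RQ002} holds for \emph{every} nonzero coefficient vector $(\alpha_0,\alpha_1,\alpha_2)$, so that no degenerate choice of coefficients can force the rank below $m-4$; this is precisely the content of Lemma \ref{RQ002}, so the enumeration of the five values is complete. (If one additionally wished to show that each of the five values is actually attained, one would exhibit explicit coefficient choices realizing each rank, but this goes beyond what the corollary claims.)
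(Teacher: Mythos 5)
Your proposal is correct and matches the paper's (implicit) argument exactly: the paper simply notes that the corollary is the special case $d=2$ of Lemma \ref{RQ002}, which gives $r_2'\geq m-2d=m-4$, combined with the trivial bound $r_2'\leq m$ for an $m\times m$ matrix. Your additional remark that attainability of all five values is not part of the claim is a fair observation, but no further work is needed.
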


\section{The cyclic code $\mathcal{C}_1$} \label{Sec3}

This section investigates the weight distribution of the cyclic code $\mathcal{C}_1$ over $\mathbb{F}_3$ with length $l=3^m-1$ and nonzeros $\pi^{-2},\pi^{-4}$ and $\pi^{-10}$, where $\pi$ is a primitive element of the finite field $\mathbb{F}_{3^m}$ for an odd integer $m$ satisfying $3\nmid m$.

First, Lemma \ref{SE01} and Lemma \ref{SE02} are stated about the number of solutions of quadratic equations over finite field. Secondly moments of exponential sum $S(\alpha,\beta,\gamma)$ are calculated in Section \ref{Sec3.1} and Section \ref{Sec3.2}, which provide four equations and one equation for the weight distributions respectively. Finally, some relevant results about quadratic forms are presented in Section \ref{Sec3.3} which provides another two equations using association schemes, and main results are provided by using the seven equations in Theorem \ref{C002} of Section \ref{Sec3.4}.

\begin{definition}
For any finite field $\mathbb{F}_q$ the integer-valued function $\upsilon$ on $\mathbb{F}_q$ is defined by $\upsilon (b)=-1$ for $b\in \mathbb{F}_q^*$ and $\upsilon (0)=q-1$.
\end{definition}

\begin{lemma}(Theorem 6.26., \cite{LH001}) \label{SE01}
Let $f$ be a nondegenerate quadratic form over $\mathbb{F}_q$, $q$ odd, in an even number $n$ of indeterminates. Then for $b\in \mathbb{F}_q$ the number of solutions of the equation $f(x_1,\ldots,x_n)=b$ in $\mathbb{F}_q^n$ is
\[
q^{n-1}+\upsilon (b)q^{(n-2)/ 2} \eta \left((-1)^{n/ 2}\Delta\right)
\]
where $\eta$ is the quadratic character of $\mathbb{F}_q$ and $\Delta =\mbox{det}(f)$.
\end{lemma}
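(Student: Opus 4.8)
The plan is to reduce to the diagonal case and then count solutions by Fourier analysis on the additive group of $\mathbb{F}_q$, the one nonelementary input being the value of the quadratic Gauss sum (available from the same reference \cite{LH001}, where it is evaluated in the chapter on Gauss sums). Since $q$ is odd and $f$ is nondegenerate, I would first invoke the standard diagonalization of quadratic forms in odd characteristic to assume $f(x_1,\ldots,x_n)=a_1x_1^2+\cdots+a_nx_n^2$ with every $a_i\in\mathbb{F}_q^*$. A change of variables by an invertible matrix $M$ does not change the number of solutions, and it multiplies $\Delta=\mbox{det}(f)$ by $(\mbox{det}\,M)^2$, a nonzero square; hence $\eta(\Delta)=\eta(a_1\cdots a_n)$ is unaffected and the quantity on the right is well defined.

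Let $\chi$ be the canonical additive character of $\mathbb{F}_q$. Using additive-character orthogonality (the sum $\frac1q\sum_{y\in\mathbb{F}_q}\chi(yt)$ equals $1$ if $t=0$ and $0$ otherwise), I would write the number of solutions $N(b)$ of $f=b$ as
\[
N(b)=\frac1q\sum_{y\in\mathbb{F}_q}\chi(-yb)\sum_{x\in\mathbb{F}_q^n}\chi\big(yf(x)\big).
\]
The term $y=0$ contributes $q^{n-1}$. For $y\neq0$, the diagonal shape of $f$ lets the inner sum factor as $\prod_{i=1}^n\sum_{x_i\in\mathbb{F}_q}\chi(ya_ix_i^2)$, reducing everything to one-variable Gauss sums.

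Next I would apply the evaluation $\sum_{x\in\mathbb{F}_q}\chi(cx^2)=\eta(c)G$ for $c\in\mathbb{F}_q^*$, where $G=\sum_{x\in\mathbb{F}_q}\eta(x)\chi(x)$ satisfies $G^2=\eta(-1)q$. This gives, for $y\neq0$,
\[
\prod_{i=1}^n\sum_{x_i}\chi(ya_ix_i^2)=\eta(y^na_1\cdots a_n)\,G^n=\eta(y)^n\,\eta(\Delta)\,G^n.
\]
This is the crucial place where the hypothesis that $n$ is even enters: since $\eta$ is quadratic, $\eta(y)^n=1$, so the dependence on $y$ disappears, and $G^n=(G^2)^{n/2}=\big(\eta(-1)q\big)^{n/2}=\eta\big((-1)^{n/2}\big)q^{n/2}$. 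Thus the inner sum equals $\eta\big((-1)^{n/2}\Delta\big)q^{n/2}$ for every $y\neq0$. Substituting back,
\[
N(b)=q^{n-1}+\frac1q\,\eta\big((-1)^{n/2}\Delta\big)q^{n/2}\sum_{y\in\mathbb{F}_q^*}\chi(-yb),
\]
and the remaining sum is exactly $\upsilon(b)$: it equals $q-1$ when $b=0$ and equals $-1$ when $b\neq0$, by the definition of $\upsilon$ together with $\sum_{y\in\mathbb{F}_q}\chi(y)=0$. Collecting terms yields $N(b)=q^{n-1}+\upsilon(b)q^{(n-2)/2}\eta\big((-1)^{n/2}\Delta\big)$, as claimed.

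I expect the main difficulty to be bookkeeping rather than conceptual: one must apply $\sum_x\chi(cx^2)=\eta(c)G$ and the closed form $G^2=\eta(-1)q$ correctly, and verify carefully that the evenness of $n$ is precisely what removes the $y$-dependence via $\eta(y)^n=1$ and converts the a priori irrational quantity $G^n$ into the integer power $\eta\big((-1)^{n/2}\big)q^{n/2}$. For odd $n$ the factor $\eta(y)^n=\eta(y)$ would survive, the sum over $y$ would no longer collapse to $\upsilon(b)$, and one would land in the companion formula for odd numbers of indeterminates; this is exactly why the even case produces the clean shape stated here.
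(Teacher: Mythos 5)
Your argument is correct: the diagonalization step, the orthogonality expansion, the evaluation $\sum_x\chi(cx^2)=\eta(c)G$ with $G^2=\eta(-1)q$, the collapse of the $y$-dependence via $\eta(y)^n=1$ for even $n$, and the identification of $\sum_{y\neq 0}\chi(-yb)$ with $\upsilon(b)$ are all sound and assemble into the stated count. Note, however, that the paper offers no proof of this statement at all; it is quoted verbatim as Theorem 6.26 of Lidl--Niederreiter \cite{LH001}, and your Gauss-sum derivation is essentially the standard textbook proof of that theorem, so there is nothing in the paper to compare it against beyond the citation.
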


\begin{lemma}(Theorem 6.27., \cite{LH001}) \label{SE02}
Let $f$ be a nondegenerate quadratic form over $\mathbb{F}_q$, $q$ odd, in an odd number $n$ of indeterminates. Then for $b\in \mathbb{F}_q$ the number of solutions of the equation $f(x_1,\ldots,x_n)=b$ in $\mathbb{F}_q^n$ is
\[
q^{n-1}+q^{(n-1)/ 2} \eta \left((-1)^{(n-1)/ 2}b\Delta\right)
\]
where $\eta$ is the quadratic character of $\mathbb{F}_q$ and $\Delta =\mbox{det}(f)$.
\end{lemma}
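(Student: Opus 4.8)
The plan is to reduce the odd-dimensional count to the even-dimensional count already established in Lemma~\ref{SE01}. Since $q$ is odd, every nondegenerate quadratic form is equivalent under an invertible linear substitution to a diagonal one, and such a substitution multiplies $\det(f)$ by a nonzero square; hence $\eta(\Delta)$ is unchanged, so $\Delta$ is only meaningful modulo squares and I may assume $f(x_1,\dots,x_n)=a_1x_1^2+g(x_2,\dots,x_n)$, where $a_1\in\mathbb{F}_q^*$ and $g=a_2x_2^2+\cdots+a_nx_n^2$ is a nondegenerate form in the even number $n-1$ of indeterminates, with $\Delta=a_1\Delta_g$ and $\Delta_g=\det(g)$.

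Next I would partition the solutions of $f=b$ according to the value $c=a_1x_1^2$ of the first term. Since $a_1x_1^2=c$ has $1+\eta(a_1c)=1+\eta(a_1)\eta(c)$ solutions for $c\neq 0$ and exactly one for $c=0$, writing $N_f$ and $N_g$ for the two solution counts gives
\[
N_f(b)=\sum_{c\in\mathbb{F}_q}N_g(b-c)+\eta(a_1)\sum_{c\in\mathbb{F}_q^*}\eta(c)\,N_g(b-c).
\]
The first sum runs over all translates of $b$ and therefore counts every point of $\mathbb{F}_q^{n-1}$ exactly once, contributing the main term $q^{n-1}$. Into the second sum I substitute the even-case evaluation $N_g(b-c)=q^{n-2}+\upsilon(b-c)\,q^{(n-3)/2}\,\eta\!\left((-1)^{(n-1)/2}\Delta_g\right)$ from Lemma~\ref{SE01}, the exponent $(n-1)/2$ being correct because $g$ has $n-1$ variables.

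The crux is then the evaluation of the surviving character sums. The $q^{n-2}$ part contributes $q^{n-2}\sum_{c\neq 0}\eta(c)=0$. For the other part I would rewrite the definition of $\upsilon$ as $\upsilon(b-c)=-1+q\,\delta_{b,c}$, so that
\[
\sum_{c\in\mathbb{F}_q^*}\eta(c)\,\upsilon(b-c)=-\sum_{c\neq 0}\eta(c)+q\,\eta(b)=q\,\eta(b),
\]
the final equality also covering $b=0$ under the convention $\eta(0)=0$. Assembling the pieces gives $N_f(b)=q^{n-1}+q^{(n-1)/2}\eta(a_1)\eta(b)\,\eta\!\left((-1)^{(n-1)/2}\Delta_g\right)$, and absorbing $a_1$ through $\Delta=a_1\Delta_g$ produces exactly $q^{n-1}+q^{(n-1)/2}\eta\!\left((-1)^{(n-1)/2}b\Delta\right)$.

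I expect the only real obstacle to be the bookkeeping of the quadratic character: confirming that diagonalization leaves $\eta(\Delta)$ invariant, and that the parity-dependent sign $(-1)^{(n-1)/2}$ transfers consistently from the $(n-1)$-variable input of Lemma~\ref{SE01} to the $n$-variable output. A self-contained alternative that bypasses Lemma~\ref{SE01} is to write $N_f(b)=q^{-1}\sum_{y\in\mathbb{F}_q}\sum_{x}\chi\!\left(y(f(x)-b)\right)$ for a nontrivial additive character $\chi$, evaluate each one-dimensional sum as $\sum_{x}\chi(ya_ix^2)=\eta(ya_i)\,G$ with $G=\sum_{x}\eta(x)\chi(x)$, and use $G^2=\eta(-1)q$ together with $\eta(y)^n=\eta(y)$ for odd $n$; this reaches the same answer but requires importing the value of the quadratic Gauss sum, whereas the reduction above stays within the tools already assembled in this section.
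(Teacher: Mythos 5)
Your argument is correct. The paper gives no proof of this statement at all --- it is quoted verbatim as Theorem 6.27 of Lidl--Niederreiter \cite{LH001}, just as Lemma \ref{SE01} is quoted as Theorem 6.26 --- so any comparison is really with the textbook. There the result is proved by induction, peeling off \emph{two} diagonal squares at a time and using an explicit count for $a_1x_1^2+a_2x_2^2=b$; you instead peel off a single square $a_1x_1^2$ and invoke the even-dimensional count of Lemma \ref{SE01} wholesale, which is a legitimate and arguably tidier route given that the even case is already on the table in this section. The key computations all check out: the fibre count $1+\eta(a_1c)$ for $a_1x_1^2=c$ (valid for $c=0$ too under $\eta(0)=0$), the vanishing of $q^{n-2}\sum_{c\neq 0}\eta(c)$, the identity $\sum_{c\neq 0}\eta(c)\,\upsilon(b-c)=q\,\eta(b)$ including the degenerate case $b=0$, and the bookkeeping $\Delta=a_1\Delta_g$ with $\eta(\Delta)$ invariant under the diagonalizing substitution (the determinant changes by a square). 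The exponent $(-1)^{(n-1)/2}$ transfers correctly since $g$ has $n-1$ variables. Your sketched Gauss-sum alternative is essentially the other standard proof; it is more self-contained in the sense of not needing Lemma \ref{SE01}, at the cost of importing $G^2=\eta(-1)q$. Either version would serve as a proof the paper itself does not supply.
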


\subsection{Moments of the exponential sum $S(\alpha,\beta,\gamma)$}\label{Sec3.1}

%We study the cyclic code $\mathcal{C}_1$ over the finite field $\mathbb{F}_p$ of length $l=p^m-1$, whose nonzeros include $\pi^{-2},\pi^{-(p^k+1)}$ and $\pi^{-(p^{2k}+1)}$ where $k=1$. And without particular specification, we assume that $p=3,3\nmid m$, $m$ is an odd integer. %, and $\mathcal{C}$ is . %As in ..., there are weight formulas and results about quadratic forms..... $3$ is not a divisor of $m$.

For an odd prime $p$, this subsection calculates the first three moments of the exponential sum $S(\alpha,\beta,\gamma)$ (equation \ref{ES0001}).

\begin{lemma}\label{ES001}
Let $p$ be an odd prime satisfying $p\equiv 3 \ \mbox{mod}\ 4$, and $q=p^m$ where $m$ is an odd integer with property $3\nmid m$.
 %With notations as in equation (\ref{ES0001}),
 Then there are the following results about the exponential sum $S(\alpha,\beta,\gamma)$ (equation \ref{ES0001}) corresponding to $f_2'(x) =\alpha x^2+\beta x^{p+1}+\gamma x^{p^2+1}$ %where $(\alpha,\beta,\gamma)\in \mathbb{F}_q^3$ %\backslash\{0,0,0\}$.
\begin{enumerate}
\renewcommand{\labelenumi}{$($\mbox{\roman{enumi}}$)$}
\item
$\sum\limits_{\alpha,\beta,\gamma\in \mathbb{F}_q} S(\alpha,\beta,\gamma)=p^{3m}$
\item
$\sum\limits_{\alpha,\beta,\gamma\in \mathbb{F}_q} S(\alpha,\beta,\gamma)^2=p^{3m}$
\item
$\sum\limits_{\alpha,\beta,\gamma\in \mathbb{F}_q} S(\alpha,\beta,\gamma)^3=\left((p+1)(p^m-1)+1\right)p^{3m}$.
\end{enumerate}
\end{lemma}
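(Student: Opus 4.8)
The plan is to compute the three power moments by interchanging the order of summation, exploiting the defining formula for $S(\alpha,\beta,\gamma)$ as a sum over $x\in\mathbb{F}_q$ of $\zeta_p^{\mathrm{Tr}(f_2'(x))}$ and using orthogonality of additive characters. The guiding principle throughout is the standard identity $\sum_{\alpha\in\mathbb{F}_q}\zeta_p^{\mathrm{Tr}(\alpha y)} = q$ if $y=0$ and $0$ otherwise; each power of $S$ contributes one independent summation variable $x_i\in\mathbb{F}_q$, and summing over $\alpha,\beta,\gamma$ forces vanishing of certain symmetric combinations of the $x_i$.

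\medskip

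\textbf{Proof of (i) and (ii).} For (i), I would write $\sum_{\alpha,\beta,\gamma}S(\alpha,\beta,\gamma)=\sum_{x\in\mathbb{F}_q}\sum_{\alpha,\beta,\gamma}\zeta_p^{\mathrm{Tr}(\alpha x^2+\beta x^{p+1}+\gamma x^{p^2+1})}$. The inner triple sum factors as a product of three sums over $\mathbb{F}_q$, each of which is $q$ when the corresponding exponent $x^2$, $x^{p+1}$, $x^{p^2+1}$ vanishes (i.e. $x=0$) and $0$ otherwise; hence only $x=0$ survives, giving $q^3=p^{3m}$. For (ii), expand $S^2$ as a double sum over $x_1,x_2$, interchange to sum over $\alpha,\beta,\gamma$ first, and obtain the conditions $x_1^2+x_2^2=0$, $x_1^{p+1}+x_2^{p+1}=0$, $x_1^{p^2+1}+x_2^{p^2+1}=0$ simultaneously. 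The plan is to show these three equations force $x_1=x_2=0$: the surviving count is then $q^3=p^{3m}$. The key observation is that over $\mathbb{F}_q$ with $p\equiv 3\pmod 4$, one checks that nonzero solutions of the combined system cannot exist (the equation $x_2^{p+1}=-x_1^{p+1}$ with $x_1^{p+1},x_2^{p+1}$ being norms into the subfield forces $x_2=0$ once combined with the others).

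\medskip

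\textbf{Proof of (iii).} This is the substantive case. Expanding $S^3$ over $x_1,x_2,x_3$ and summing over $\alpha,\beta,\gamma$ first reduces the count to the number $N$ of triples $(x_1,x_2,x_3)\in\mathbb{F}_q^3$ satisfying the simultaneous system
\begin{equation}\label{sys001}
x_1^2+x_2^2+x_3^2=0,\quad x_1^{p+1}+x_2^{p+1}+x_3^{p+1}=0,\quad x_1^{p^2+1}+x_2^{p^2+1}+x_3^{p^2+1}=0,
\end{equation}
and then $\sum S^3 = q^3 \cdot N / q^{?}$ must be organized so the final value equals $\bigl((p+1)(p^m-1)+1\bigr)p^{3m}$. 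The plan is to count the solutions of \eqref{sys001} directly: separate the case where one or more of the $x_i$ vanish (reducing to the two-variable system already analyzed, which contributes only the trivial solution) from the case where all three are nonzero. For the all-nonzero case I would set $u=x_2/x_1$, $v=x_3/x_1$ and reduce to a system of equations in $u,v$; the three equations become $1+u^2+v^2=0$, $1+u^{p+1}+v^{p+1}=0$, $1+u^{p^2+1}+v^{p^2+1}=0$. Counting these solution points, multiplying by the $q-1$ choices of $x_1$, and assembling with the multiplicities produces the stated formula.

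\medskip

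\textbf{Where the obstacle lies.} The routine character-orthogonality bookkeeping for (i) and (ii) is mechanical; \emph{the hard part will be} the exact count of solutions to the third-moment system \eqref{sys001}, specifically the all-nonzero case. The appearance of the factor $(p+1)(p^m-1)+1$ strongly suggests the count decomposes as a ``diagonal'' contribution plus a contribution of size proportional to $p^m-1$ with coefficient $p+1$, which I would try to match by a careful case analysis on how many of $u,v$ lie in $\mathbb{F}_p$ or satisfy Frobenius-invariance conditions. I expect to invoke Lemma \ref{SE01} and Lemma \ref{SE02} to count solutions of the quadratic pieces, and the hypotheses $p\equiv 3\pmod 4$, $m$ odd, and $3\nmid m$ to be exactly what is needed to rule out spurious subfield solutions and to pin down the quadratic-character values; the condition $3\nmid m$ in particular should guarantee that $\gcd(p^2+1,p^m-1)$ and related quantities take the values that make the $x^{p^2+1}$ equation behave like a clean norm/quadratic condition rather than introducing extra solutions.
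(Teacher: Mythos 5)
Your overall strategy---interchange the order of summation, reduce each moment to counting solutions of a system of diagonal equations, and dehomogenize the third-moment system---is exactly the paper's. Part (i) is complete. Two points need attention.

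For (ii), your parenthetical justification is off: the map $x\mapsto x^{p+1}$ is not a norm into a subfield here, since $\mathbb{F}_{p^2}\not\subseteq\mathbb{F}_{p^m}$ for odd $m$. The correct (and much simpler) argument, which is the paper's, uses only the first equation $x_1^2+x_2^2=0$: because $p\equiv 3\pmod 4$ and $m$ is odd, $q\equiv 3\pmod 4$, so $-1$ is a nonsquare in $\mathbb{F}_q$ and the only solution is $x_1=x_2=0$, giving $M_2=1$. This is an easy repair, not a real gap.

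For (iii) there is a genuine gap. You correctly set up the decomposition $M_3=M_2+(q-1)T_3$, where $T_3$ counts solutions of the dehomogenized system $x^2+y^2+1=0$, $x^{p+1}+y^{p+1}+1=0$, $x^{p^2+1}+y^{p^2+1}+1=0$, and you correctly anticipate that Lemma \ref{SE01} and the hypothesis $3\nmid m$ must enter. But the single nontrivial idea of the whole lemma---showing that every solution has $x,y\in\mathbb{F}_p$---is exactly the step you leave as ``a careful case analysis on Frobenius-invariance conditions,'' i.e.\ a placeholder. The paper obtains it by eliminating $y$ between the last two equations: $(x^{p+1}+1)^{p^2+1}=(x^{p^2+1}+1)^{p+1}$ expands to $(x^{p^2}-x^p)(x^{p^3}-x)=0$, so $x\in\mathbb{F}_p$ or $x\in\mathbb{F}_{p^3}\cap\mathbb{F}_{p^m}=\mathbb{F}_p$ (here $3\nmid m$ is used), and symmetrically for $y$. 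Once $x,y\in\mathbb{F}_p$, the last two equations follow from the first, and Lemma \ref{SE01} over $\mathbb{F}_p$ with $n=2$, $b=-1$, $\Delta=1$ gives $T_3=p-\eta(-1)=p+1$, whence $M_3=(p+1)(q-1)+1$. Until you supply this elimination (or an equivalent argument pinning $x,y$ to $\mathbb{F}_p$), the third moment is not proved.
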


\begin{proof}
From definition, changing the order of summations,
(i) can be calculated as follows
\[
\begin{array}{ll}
\sum\limits_{\alpha,\beta,\gamma\in \mathbb{F}_q} S(\alpha,\beta,\gamma)
&=\sum\limits_{\alpha,\beta,\gamma\in \mathbb{F}_q}\sum\limits_{x\in\mathbb{F}_q}\zeta_p^{\mbox{Tr}\left(\alpha x^2+\beta x^{p+1}+\gamma x^{p^2+1}\right)} \\
 &=\sum\limits_{x\in \mathbb{F}_q}\sum\limits_{\alpha \in \mathbb{F}_q}\zeta_p^{\mbox{Tr}\left(\alpha x^2\right)}\sum\limits_{\beta \in \mathbb{F}_q}\zeta_p^{\mbox{Tr}\left(\beta x^{p+1}\right)}\sum\limits_{\gamma \in \mathbb{F}_q}\zeta_p^{\mbox{Tr}\left(\gamma x^{p^2+1}\right)}   \\
 &=\sum\limits_{\stackrel{\alpha \in \mathbb{F}_q}{x=0}}\zeta_p^{\mbox{Tr}\left(\alpha x^2\right)}\sum\limits_{\stackrel{\beta \in \mathbb{F}_q}{x=0}}\zeta_p^{\mbox{Tr}\left(\beta x^{p+1}\right)}\sum\limits_{\stackrel{\gamma \in \mathbb{F}_q}{x=0}}\zeta_p^{\mbox{Tr}\left(\gamma x^{p^2+1}\right)} \\
 &=q^3=p^{3m}.
 \end{array}
\]

Equation (ii) can also be calculated in this way
\[
\begin{array}{lll}
&&\sum\limits_{\alpha,\beta,\gamma\in \mathbb{F}_q} S(\alpha,\beta,\gamma)^2 \\
&=&\sum\limits_{x,y\in \mathbb{F}_q}\sum\limits_{\alpha \in \mathbb{F}_q}\zeta_p^{\mbox{Tr}\left(\alpha\left(x^{2}+y^{2}\right)\right)}\sum\limits_{\beta \in \mathbb{F}_q}\zeta_p^{\mbox{Tr}\left(\beta\left(x^{p+1}+y^{p+1}\right)\right)}\sum\limits_{\gamma \in \mathbb{F}_q}\zeta_p^{\mbox{Tr}\left(\gamma\left(x^{p^2+1}+y^{p^2+1}\right)\right)}\\
&=&M_2\cdot p^{3m}
\end{array}
\]
where $M_2$ is the number of solutions to the equation system
\[
\left\{
 \begin{array}{ll}
 x^2+y^2 &=0\\
 x^{p+1}+y^{p+1} &=0\\
 x^{p^2+1}+y^{p^2+1} &=0.
 \end{array}
\right.
\]
Since it is assumed that $p\equiv 3 \ \mbox{mod}\ 4$ and $m$ is an odd integer, there is not an element $x_0 \in \mathbb{F}_q$ satisfying $x_0^2=-1$. The only solution to above system is $x=y=0$, that is $M_2=1$.

As to (iii), we have
\begin{equation}\label{CM02}
\sum\limits_{\alpha,\beta,\gamma\in \mathbb{F}_q} S(\alpha,\beta,\gamma)^3=M_3\cdot p^{3m}
\end{equation}
where
\[
\begin{array}{lll}
M_3&=\#\{(x,y,z)\in \mathbb{F}_q^3|&x^2+y^2+z^2=0,\\
    &       &x^{p+1}+y^{p+1}+z^{p+1}=0,x^{p^2+1}+y^{p^2+1}+z^{p^2+1}=0\}\\
 &=M_2+T_3\cdot (q-1),
\end{array}
\]
and $T_3$ is the number of solutions of
\begin{equation}\label{ES01}
\left\{
\begin{array}{ll}
x^2+y^2+1 &=0 \\
x^{p+1}+y^{p+1}+1 &=0 \\
x^{p^2+1}+y^{p^2+1}+1 &=0.
\end{array}
\right.
\end{equation}
To study equation system (\ref{ES01}), consider the last two equations. Canceling $y$ there is
\[
(x^{p+1}+1)^{p^2+1}=(x^{p^2+1}+1)^{p+1},
\]
after simplification, it becomes
\begin{equation}\label{ES0002}
(x^{p^2}-x^p)(x^{p^3}-x)=(x^{p}-x)^p(x^{p^3}-x)=0.
\end{equation}
Since $3$ is not a divisor of $m$, from (\ref{ES0002}) it can be checked that $x\in \mathbb{F}_p$. In the same way, it implies that $y\in \mathbb{F}_p$. Since $a^p=a$ for any $a\in \mathbb{F}_p$, we only need to consider the first one of system (\ref{ES01}). In case of Lemma \ref{SE01}, $\Delta=1, b=-1, n=2$ and $-1$ is a quadratic nonresidue of $\mathbb{F}_p$. So $|T_3|=p+1$, and then
\begin{equation}\label{CM022}
M_3=(p+1)(q-1)+1.
\end{equation}
Substituting to equation (\ref{CM02}), the third statement of the lemma is obtained. \qed
% In this case $p=3$, $x=\pm 1,y=\pm 1$. So $T'=4$, and $M_3=4(q-1)+1$.  \qed
\end{proof}

Corresponding to Lemma \ref{FL002} and Corollary \ref{RQ02}, we introduce the following notations for convenience.
Let
\begin{equation}\label{0001}
N_{\varepsilon,j}=\left\{(\alpha,\beta,\gamma)\in \mathbb{F}_q^3\backslash\{(0,0,0)\}|S(\alpha,\beta,\gamma)=\varepsilon p^{{m+j}\over 2}\right\}
\end{equation}
where $\varepsilon=\pm 1$ and $j=1,3$. Also, denote $n_{\varepsilon,j}=|N_{\varepsilon,j}|$ for $j=1,3$.
And
\begin{equation}\label{0002}
N_{\varepsilon,j}=\left\{(\alpha,\beta,\gamma)\in \mathbb{F}_q^3\backslash\{(0,0,0)\}|S(\alpha,\beta,\gamma)=\varepsilon i p^{{m+j}\over 2}\right\}
\end{equation}
for $j=0,2,4$, where $i$ is the imaginary unit.
 By Lemma \ref{RO01}, set
 \[
 n_{j}=n_{\varepsilon,j}=|N_{\varepsilon,j}|
 \]
  for $j=0,2,4$, since $m-j$ is odd.
 Using the above notations, Lemma \ref{ES001} can be restated as follows.

\begin{lemma}\label{ES002}
Let $p$ be an odd prime satisfying $p\equiv 3 \ \mbox{mod}\ 4$, and $q=p^m$ where $m$ is an odd integer with property $3\nmid m$.
\[
\begin{array}{ll}
2(n_0+n_2+n_4)+n_{-1,1}+n_{1,1}+n_{-1,3}+n_{1,3} & =p^{3m}-1\\
   n_{1,1}-n_{-1,1}
 +p(n_{1,3}-n_{-1,3})  &=p^{{m-1}\over 2}\left(p^{2m}-1\right) \\
 -2\left(n_0+p^2n_2+p^4n_4\right)
  +p (n_{1,1}+n_{-1,1})
+p^{3 }(n_{1,3}+n_{-1,3}) &=p^{m} \left(p^m-1\right)\\
    n_{1,1}-n_{-1,1}
 +p^{3}(n_{1,3}-n_{-1,3})  &= (p+1)p^{{3(m-1)}\over 2}\left(p^{m}-1\right)
\end{array}
\]
\end{lemma}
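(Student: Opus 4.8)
The plan is to read off all four identities as linear relations among the cardinalities $n_0,n_2,n_4,n_{\pm1,1},n_{\pm1,3}$, by computing the total number of nonzero triples together with the three moments supplied by Lemma \ref{ES001} in two ways: once using the closed-form values of Lemma \ref{ES001}, and once by summing class-by-class over the sets $N_{\varepsilon,j}$ of (\ref{0001})--(\ref{0002}).

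First I would fix the classification. By Corollary \ref{RQ02}, for every $(\alpha,\beta,\gamma)\in\mathbb{F}_q^3\setminus\{(0,0,0)\}$ the rank $r$ of $H_2'$ lies in $\{m,m-1,m-2,m-3,m-4\}$; writing $r=m-j$ and using that $m$ is odd, the even-rank cases are $j=1,3$ and the odd-rank cases are $j=0,2,4$. Lemma \ref{ES02} (equivalently Lemma \ref{FL002}) then forces $S(\alpha,\beta,\gamma)=\varepsilon p^{(m+j)/2}$ for $j=1,3$, and, since $p\equiv 3\ (\mbox{mod}\ 4)$ makes $\sqrt{p^*}=i\sqrt p$, it gives $S(\alpha,\beta,\gamma)=\varepsilon i\,p^{(m+j)/2}$ for $j=0,2,4$; these are precisely the sets defined in (\ref{0001}) and (\ref{0002}). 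The first identity is then immediate: each nonzero triple lies in exactly one class, so adding the cardinalities and using $n_{1,j}=n_{-1,j}=n_j$ for $j=0,2,4$ (Lemma \ref{RO01}) yields $2(n_0+n_2+n_4)+n_{1,1}+n_{-1,1}+n_{1,3}+n_{-1,3}=p^{3m}-1$.

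Next I would evaluate each moment $\sum_{\alpha,\beta,\gamma}S^k$ for $k=1,2,3$. On one side these equal $p^{3m}$, $p^{3m}$, and $((p+1)(p^m-1)+1)p^{3m}$ by Lemma \ref{ES001}. On the other side I split off the triple $(0,0,0)$, which contributes $S(0,0,0)^k=p^{mk}$, and sum $S^k$ over each class. The decisive simplifications are: for the odd powers $k=1,3$ the purely imaginary values of the classes $j=0,2,4$ occur in cancelling pairs $\pm i$, so by $n_{1,j}=n_{-1,j}$ these classes contribute nothing and only the differences $n_{1,1}-n_{-1,1}$ and $n_{1,3}-n_{-1,3}$ survive; for the even power $k=2$, squaring turns $\varepsilon i\,p^{(m+j)/2}$ into $-p^{m+j}$, producing the coefficients $-2n_j$, while the real classes contribute with a plus sign and only the sums $n_{1,1}+n_{-1,1}$ and $n_{1,3}+n_{-1,3}$ appear. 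Carrying this out, the $k=1$ moment gives $p^m+p^{(m+1)/2}\bigl[(n_{1,1}-n_{-1,1})+p(n_{1,3}-n_{-1,3})\bigr]=p^{3m}$, the $k=2$ moment gives $p^{2m}-2(n_0+p^2 n_2+p^4 n_4)p^m+p^{m+1}(n_{1,1}+n_{-1,1})+p^{m+3}(n_{1,3}+n_{-1,3})=p^{3m}$, and the $k=3$ moment gives $p^{3m}+p^{3(m+1)/2}\bigl[(n_{1,1}-n_{-1,1})+p^3(n_{1,3}-n_{-1,3})\bigr]=((p+1)(p^m-1)+1)p^{3m}$.

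Finally I would clear the common powers of $p$ in each moment equation — dividing by $p^{(m+1)/2}$, $p^m$, and $p^{3(m+1)/2}$ respectively and simplifying exponents such as $3m-3(m+1)/2=3(m-1)/2$ — to land on the second, third, and fourth displayed identities. The work here is entirely routine bookkeeping; the only points requiring care are tracking the half-integer exponents $(m+j)/2$, handling the sign from $i^2=-1$ that produces the negative coefficients in the even moment, and invoking Lemma \ref{RO01} so that the odd moments collapse onto the differences $n_{1,j}-n_{-1,j}$ while the raw count and the even moment retain the sums. I do not anticipate a genuine obstacle beyond this algebraic accounting.
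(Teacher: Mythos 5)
Your proposal is correct and follows essentially the same route as the paper: substitute the class decomposition of (\ref{0001})--(\ref{0002}) into the count of nonzero triples and the three moments of Lemma \ref{ES001}, use $n_{1,j}=n_{-1,j}$ (Lemma \ref{RO01}) to kill the imaginary classes in the odd moments and the sign $i^2=-1$ to produce the $-2n_j$ coefficients in the second moment, then clear powers of $p$. The only difference is that you spell out the exhaustiveness of the classification via Corollary \ref{RQ02}, which the paper leaves implicit.
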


\begin{proof}
Substituting the symbols of (\ref{0001}) and (\ref{0002}) to Lemma \ref{ES001}, we have the following four equations
 \[
\begin{array}{lll}
&&2(n_0+n_2+n_4)+n_{-1,1}+n_{1,1}+n_{-1,3}+n_{1,3} \\
 &=&p^{3m}-1\\
\\
&&\sum\limits_{\alpha,\beta,\gamma\in \mathbb{F}_q} S(\alpha,\beta,\gamma)\\
&=&i{p^{m\over 2}}(n_{1,0}-n_{-1,0})+p^{{m+1}\over 2}(n_{1,1}-n_{-1,1})\\
&&+i{p^{{m+2}\over 2}}(n_{1,2}-n_{-1,2})+p^{{m+3}\over 2}(n_{1,3}-n_{-1,3})+i{p^{{m+4}\over 2}}(n_{1,4}-n_{-1,4})+p^m\\
&=&p^{{m+1}\over 2}(n_{1,1}-n_{-1,1})+p^{{m+3}\over 2}(n_{1,3}-n_{-1,3})+p^m\\
&=&p^{3m}
\end{array}
\]
 \[
\begin{array}{lll}
&&\sum\limits_{\alpha,\beta,\gamma\in \mathbb{F}_q} S(\alpha,\beta,\gamma)^2\\
&=&-{p^{m}}(n_{1,0}+n_{-1,0})+p^{{m+1} }(n_{1,1}+n_{-1,1})\\
&&-{p^{{m+2} }}(n_{1,2}+n_{-1,2})+p^{{m+3} }(n_{1,3}+n_{-1,3})- {p^{{m+4}}}(n_{1,4}+n_{-1,4})+p^{2m}\\
&=&p^{3m}\\
\\
 &&\sum\limits_{\alpha,\beta,\gamma\in \mathbb{F}_q} S(\alpha,\beta,\gamma)^3\\
&=&-i{p^{{3m}\over 2}}(n_{1,0}-n_{-1,0})+p^{{3(m+1)}\over 2}(n_{1,1}-n_{-1,1})\\
&&-i{p^{{3(m+2)}\over 2}}(n_{1,2}-n_{-1,2})+p^{{3(m+3)}\over 2}(n_{1,3}-n_{-1,3})-i{p^{{3(m+4)}\over 2}}(n_{1,4}-n_{-1,4})+p^{3m}\\
&=& p^{{3(m+1)}\over 2}(n_{1,1}-n_{-1,1})+p^{{3(m+3)}\over 2}(n_{1,3}-n_{-1,3})++p^{3m}\\
&=&\left((p+1)\left(p^m-1\right)+1\right)p^{3m}
\end{array}
\]
 where the first one comes from the fact that there are $p^{3m}-1$ elements in the set $\mathbb{F}_q^3\backslash\{(0,0,0)\}$. Also, note that $S(\alpha,\beta,\gamma)=p^m$ when $\alpha=\beta=\gamma=0$.

Using $n_{j}=n_{\varepsilon,j}=|N_{\varepsilon,j}|$ for $j=0,2,4$, the result is obtained by simplification. \qed
\end{proof}

\subsection{The fourth moment of $S(\alpha,\beta,\gamma)$} \label{Sec3.2}

For the fourth moment of $S(\alpha,\beta,\gamma)$ in the particular case of $p=3$, we calculate the
number of solutions of the following equation system
\begin{equation}\label{E02}
\left\{
\begin{array}{ll}
x^2+y^2+z^2+1 &=0\\
x^{p+1}+y^{p+1}+z^{p+1}+1 &=0\\
x^{p^2+1}+y^{p^2+1}+z^{p^2+1}+1 &=0
\end{array}
\right.
\end{equation}
in Lemma \ref{CM001}, which is denoted by $T_4$.

\begin{lemma}\label{CM001}
Let $p=3$ and $q=p^m$, then
\[
T_4=4\left(2p^m-3\right).
\]
\end{lemma}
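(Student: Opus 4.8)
The plan is to exploit that every exponent appearing in (\ref{E02}) is even. With $p=3$ one has $x^{p+1}=x^4=(x^2)^2$ and $x^{p^2+1}=x^{10}=(x^2)^5$, so after setting $u=x^2,\ v=y^2,\ w=z^2$ the three equations of (\ref{E02}) become the power-sum conditions $p_1=p_2=p_5=-1$, where $p_k=u^k+v^k+w^k$. The first and crucial step is to run Newton's identities in characteristic $3$: from $p_1=-1$ and $p_2=-1$ one solves $e_1=u+v+w=-1$ and $e_2=uv+vw+wu=1$, and then the recursion $p_k=e_1p_{k-1}-e_2p_{k-2}+e_3p_{k-3}$ yields $p_3=-1$ and $p_5=-1$ automatically, independently of $e_3=uvw$ (the term $3e_3$ occurring for $p_3$ vanishes in characteristic $3$). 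Hence the third equation of (\ref{E02}) is already implied by the first two, and
\[
T_4=\#\{(x,y,z)\in\mathbb{F}_q^3 : x^2+y^2+z^2=-1,\ x^4+y^4+z^4=-1\}.
\]
I expect this redundancy to be the main obstacle, since it is what collapses the system to something one-dimensional; it is genuinely special to $p=3$, where both the exponent $5=(p^2+1)/2$ and the vanishing of $3e_3$ conspire.

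Next I would turn the surviving system into a weighted point-count on a conic. Writing $a=x^2,\ b=y^2,\ c=z^2$ and using $x^4+y^4+z^4=(a+b+c)^2-2(ab+bc+ca)$, the two equations become $a+b+c=-1$ and $ab+bc+ca=1$; call this plane conic $E$. Because the number of $x\in\mathbb{F}_q$ with $x^2=a$ equals $1+\eta(a)$, where $\eta$ is the quadratic character of $\mathbb{F}_q$ with the convention $\eta(0)=0$, one has
\[
T_4=\sum_{(a,b,c)\in E}\bigl(1+\eta(a)\bigr)\bigl(1+\eta(b)\bigr)\bigl(1+\eta(c)\bigr).
\]
To evaluate this sum I would parametrize $E$ explicitly: eliminating $c=-1-a-b$ and invoking the characteristic-$3$ identity $a^2+ab+b^2=(a-b)^2$, the equation of $E$ collapses to $(a-b)^2+a+b+1=0$, and the substitution $s=a-b$ produces a bijection $\mathbb{F}_q\to E$ (readily checked to be injective, since $s=a-b$ recovers the point) given by
\[
a=(s+1)^2,\qquad b=(s-1)^2,\qquad c=s^2 .
\]

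The count is then immediate. The decisive feature of this parametrization is that $a,b,c$ are perfect squares for every $s\in\mathbb{F}_q$, so each point of $E$ lifts, and the summand $(1+\eta(a))(1+\eta(b))(1+\eta(c))$ simply counts the sign choices in $(x,y,z)=(\pm(s+1),\pm(s-1),\pm s)$. This equals $8$ unless one of $s+1,\ s-1,\ s$ is zero, that is unless $s\in\{-1,1,0\}$, where exactly one coordinate is pinned to $0$ and the summand drops to $4$. Summing over the $q$ values of $s$,
\[
T_4=8(q-3)+3\cdot 4=8q-12=4\bigl(2p^m-3\bigr),
\]
as claimed; for $q=3$ this reads $T_4=12$, matching the three points of $E$ each contributing $4$. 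I note in passing that the argument uses only $p=3$, not the parity of $m$ or the condition $3\nmid m$.
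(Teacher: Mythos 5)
Your proof is correct, and it arrives at the count by a route that differs in technique from the paper's, even though both ultimately describe the same one-parameter solution family. The paper works directly with $(x,y,z)$: eliminating $z$ between the first two equations of (\ref{E02}) yields a quadratic in $y^2$ whose discriminant is the perfect square $x_0^2$, giving $y=\pm(x_0\pm1)$ and then $z=\pm(x_0\mp1)$; it then checks by explicit substitution that these families also satisfy the third equation (e.g.\ $x_0^{10}+(x_0\pm1)^{10}+(x_0\mp1)^{10}+1=3(x_0^{10}+1)=0$), and finally counts by analysing the coincidences among $\pm x_0\pm1$ at $x_0\in\{0,1,-1\}$ to get $12+8(p^m-3)$. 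You instead substitute $u=x^2$, $v=y^2$, $w=z^2$ and use Newton's identities in characteristic $3$ to prove \emph{a priori} that $p_1=p_2=-1$ forces $p_5=-1$ independently of $e_3$, so the third equation is redundant before any solution is exhibited; you then parametrize the conic $e_1=-1$, $e_2=1$ by $s=a-b$ as $(a,b,c)=\left((s+1)^2,(s-1)^2,s^2\right)$ and count square roots via the weights $1+\eta(\cdot)$. Your $s$ plays the role of the paper's $x_0$ after relabelling coordinates, and both counts reduce to $8$ generic sign choices with three exceptional parameter values contributing $4$ each. What your version buys is a conceptual explanation of why the third equation drops out (the paper only verifies this post hoc) and a cleaner bookkeeping of the degenerate fibers, at the cost of invoking symmetric-function machinery. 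All the individual computations in your argument (the values $e_1=-1$ and $e_2=1$, the cancellation of $e_3$ in $p_5$, the identities $a=(s+1)^2$, $b=(s-1)^2$, $c=s^2$, and the final tally $8(q-3)+12$) check out, and you are right that only $p=3$ is used, consistent with the hypotheses of the lemma.
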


\begin{proof}
The following process is composed of three parts: Part I is to find the values of the possible solutions $(x_0,y_0,z_0)$; Part II is to verify that they are actually a solution of equation (\ref{E02}); Part III is to find the number of the solutions.

Part I: To study equation system (\ref{E02}), this part tries to get the formula (\ref{E03001}) by using (\ref{E03},\ref{E0301},\ref{E0302}), and then obtain the solution cases (\ref{E0303},\ref{E0304}). Consider the first two equations
\begin{equation}\label{E03}
\left\{
\begin{array}{ll}
x^2+y^2+z^2+1 &=0\\
x^{p+1}+y^{p+1}+z^{p+1}+1 &=0,
\end{array}
\right.
\end{equation}
we find that
\begin{equation}\label{E0301}
z^{2}=-(x^{2}+y^2+1).
\end{equation}
Substituting (\ref{E0301}) to the second one of (\ref{E03})
\[
\begin{array}{ll}
x_0^{p+1}+y_0^{p+1}+z_0^{p+1}+1&= x^4+y^4+\left(-(x^{2}+y^2+1)\right)^2+1\\
&=x^4+y^4+x^4+y^4+1+2x^2+2y^2+2x^2y^2+1\\
&=2x^4+2y^4+2x^2y^2+2x^2+2y^2+2\\
&=0,
\end{array}
\]
that is
\begin{equation}\label{E0302}
x^4+y^4+x^2y^2+x^2+y^2+1=0.
\end{equation}
For equation (\ref{E0302}), set $x=x_0$ and consider $y$ as the variable to be determined, then
\[
x_0^4+y^4+x_0^2y^2+x_0^2+y^2+1=y^4+\left(x_0^2+1\right)y^2+\left(x_0^4+x_0^2+1\right).
\]
Set $y'=y^2$, the above equation becomes
\begin{equation}\label{E03001}
y'^2+(x_0^2+1)y'+(x_0^4+x_0^2+1)
\end{equation}
which is a quadratic polynomial about variable $y'$ over the finite field $\mathbb{F}_q$.
Set $b=x_0^2+1$ and $c=x_0^4+x_0^2+1$, then
\[
\begin{array}{ll}
\Delta &=b^2-4c\\
&=(x_0^2+1)^2-4(x_0^4+x_0^2+1)\\
&=x_0^4+2x_0^2+1-(x_0^4+x_0^2+1)\\
&=x_0^2.
\end{array}
\]
Note that the characteristic of the finite field $\mathbb{F}_q$ is $3$ and the elements in $\mathbb{F}_3$ is $0,1,2=-1$.
Corresponding to the solutions of equation (\ref{E0302})
\[
\begin{array}{ll}
y'&={{-b\pm \sqrt{\Delta}}\over 2}\\
 &=b\pm \sqrt{\Delta}\\
 &=x_0^2+1\pm x_0\\
 &=(x_0\pm 1)^2
\end{array}
\]
where we have used the fact that $x_0^2+x_0+1=x_0^2-2x_0+1=(x_0-1)^2$ and $x_0^2-x_0+1=x_0^2+2x_0+1=(x_0+1)^2$.
That is to say
$y'=y^2=(x_0\pm 1)^2,$
so
\begin{equation}\label{E0303}
y=x_0\pm 1 \ \ \ \mbox{or} \ \ \ y=-x_0\pm 1.
\end{equation}

Let's consider the first case of (\ref{E0303}) where $y=y_0=x_0\pm 1$. By the first equation of (\ref{E03})
\[
\begin{array}{ll}
x_0^2+y_0^2+z^2+1&=x_0^2+(x_0\pm 1)^2+z^2+1\\
&=x_0^2+x_0^2\pm 2x_0 +1+z^2+1\\
&=2\left(x_0^2\pm x_0 +1\right)+z^2\\
&=-(x_0\mp 1)^2+z^2 \\
&=0,
\end{array}
\]
i.e., $
z^2=(x_0\mp 1)^2 $
and
\begin{equation}\label{E0304}
z=z_0=x_0\mp 1 \ \ \ \mbox{or} \ \ \ z=z_0=-x_0\pm 1
\end{equation}
note that the symbols $+,-$ in (\ref{E0304}) are taken with respect to the symbols $\pm$ of (\ref{E0303}) when we set $y=y_0$.

  Part II: For (\ref{E0303}) and (\ref{E0304}), this part considers two first cases. Now, for the possible solutions $x=x_0, y=y_0=x_0\pm 1$ and $z=z_0=x_0\mp 1$, substituting to the second equation of (\ref{E02}) we verify that
\[
\begin{array}{ll}
x_0^{p+1}+y_0^{p+1}+z_0^{p+1}+1&=x_0^4+(x_0\pm 1)^4+(x_0\mp 1)^4+1 \\
 &=x_0^4+\left(x_0^3\pm 1\right)(x_0\pm 1)+\left(x_0^3\mp 1\right)(x_0\mp 1)+1\\
 &=x_0^4+x_0^4\pm x_0^3\pm x_0 + 1 +x_0^4 \mp x_0^3\mp x_0+1+1\\
 &=3\left(x_0^4+1\right)\\
 &=0
\end{array}
\]
which satisfies the second equation of (\ref{E02}).

Substituting the above values of $x_0,y_0$ and $z_0$ to the third equation of (\ref{E02}),
\[
\begin{array}{ll}
x_0^{p^2+1}+y_0^{p^2+1}+z_0^{p^2+1}+1&=x_0^{10}+ (x_0\pm 1)^{10}+(x_0\mp 1)^{10}+1\\
&=x_0^{10}+\left(x_0^9\pm 1\right)(x_0\pm 1)+\left(x_0^9\mp 1\right)(x_0\mp 1)+1\\
&=x_0^{10}+x_0^{10}\pm x_0^9 \pm x_0 +1 + x_0^{10}\mp x_0^{9}\mp x_0 +1 +1 \\
&=3\left(x_0^{10}+1\right)\\
&=0
\end{array}
\]
which implies that the third equation of (\ref{E02}) is also satisfied. Therefore the possible solutions $(x_0,y_0,z_0)$ satisfy system (\ref{E02}).

As to the first case of (\ref{E0303}) and the second case of (\ref{E0304}), since $a^2=(-a)^2$ for any $a\in \mathbb{F}_q$, it can be checked that $x_0,y_0=x_0\pm 1,z_0=-x_0\pm 1$ also satisfy (\ref{E02}).
For other cases, similar results can be obtained. %the second case of (\ref{E0303}), find that $x_0,y_0=-x_0\pm 1$ and $z_0=-x_0\mp 1$ or $z_0=x_0\pm 1$ satisfy the equation system (\ref{E02}).

  Part III: For the values in (\ref{E0303}), easy to see that $x_0+1\not= x_0-1$ and $-x_0+1\not=-x_0-1$.
  \begin{enumerate}
\renewcommand{\labelenumi}{$($\mbox{\roman{enumi}}$)$}
\item If $x_0+1=-x_0+1$ then $x_0=0$;
\item If $x_0+1=-x_0-1$ then $x_0=-1$;
\item If $x_0-1=-x_0+1$ then $x_0=1$;
\item If $x_0-1=-x_0-1$ then $x_0=0$.
\end{enumerate}
Therefore the possible values of $x_0$ which can produce same $y_0$ in (\ref{E0303}), are $0,1,-1$. And all other values of $x_0\in \mathbb{F}_q$ will reduce to different values of $y_0$. The same situation occurs for the cases of $z_0$. Let's consider these particular values of $x_0$. %Based on the above analysis, there are three cases to be considered
\begin{enumerate}
\renewcommand{\labelenumi}{$($\mbox{\roman{enumi}}$)$}
\item Suppose $x_0=0$.
\begin{itemize}
\renewcommand{\labelitemi}{\labelitemiii}
\item
In equation (\ref{E0303}), if $y_0=x_0+1=1$, then $z_0=x_0-1=-1$ or $-x_0+1=1$ in (\ref{E0304}).
\item If $y_0=x_0-1=-1$, then $z_0=x_0+1=1$ or $-x_0-1=-1$.
\item If $y_0=-x_0+1=1$, then $z_0=-x_0-1=-1$ or $z_0=x_0+1=1$.
\item If $y_0=-x_0-1=-1$, then $z_0=-x_0+1=1$ or $z_0=x_0-1=-1$.
\end{itemize}
 So there are $4$ solutions of $(x_0,y_0,z_0)$ when $x_0=0$.
\item For $x_0=1$ and $x_0=-1$, there are also $4$ solutions respectively.
\end{enumerate}
  Altogether there are $12$ solutions for $x_0=0,1$ and $-1$.

For each $x_0\not\in \{0,1,-1\}$, there are $4$ cases of $y_0$ in (\ref{E0303}), and for each selected $y_0$, there are two choices for $z_0$ in (\ref{E0304}), which leads to $4\times 2=8$ solutions of $(x_0,y_0,z_0)$.
So the number of solutions of (\ref{E02}) is $T_4=12+\left(p^m-3\right)\cdot 8=4\left(2p^m-3\right)$.   \qed

\end{proof}

\begin{lemma}\label{CM01}
Let $p=3$ and $q=p^m$ where $m$ is an odd integer satisfying $3\nmid m$. The number of solutions of the following equation system
\begin{equation}\label{E01}
\left\{
\begin{array}{ll}
x^2+y^2+z^2+w^2 &=0\\
x^{p+1}+y^{p+1}+z^{p+1}+w^{p+1} &=0\\
x^{p^2+1}+y^{p^2+1}+z^{p^2+1}+w^{p^2+1} &=0\\
\end{array}
\right.
\end{equation}
is $M_4=8\left(p^m-1\right)^2+1$.
\end{lemma}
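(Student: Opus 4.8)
The plan is to count the solutions of system (\ref{E01}) by partitioning them according to the value of the last coordinate $w$, reducing each piece to a quantity that has already been computed, namely $M_3$ from the proof of Lemma \ref{ES001} and $T_4$ from Lemma \ref{CM001}. The one structural feature I would exploit throughout is that the three equations are homogeneous, of degrees $2$, $p+1$ and $p^2+1$ respectively, so multiplying a solution $(x,y,z,w)$ by any scalar $\lambda \in \mathbb{F}_q^*$ again yields a solution.

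First I would treat the case $w=0$. Here system (\ref{E01}) collapses to the three-variable homogeneous system in $(x,y,z)$ whose number of solutions is exactly $M_3=(p+1)(q-1)+1$, already determined inside the proof of Lemma \ref{ES001}(iii). The hypotheses needed there, namely $p\equiv 3 \pmod 4$, $m$ odd and $3\nmid m$, are all met for $p=3$. This contribution contains the trivial solution $(0,0,0,0)$, which is what will ultimately produce the final $+1$ in the claimed formula.

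Next I would treat the case $w\neq 0$. For a fixed $w=w_0\in\mathbb{F}_q^*$, the substitution $x=w_0x'$, $y=w_0y'$, $z=w_0z'$ is a bijection of $\mathbb{F}_q^3$ onto itself, and dividing the first, second and third equations of (\ref{E01}) by $w_0^2$, $w_0^{p+1}$ and $w_0^{p^2+1}$ respectively turns the system into precisely (\ref{E02}) in the variables $(x',y',z')$. Hence each $w_0\neq 0$ contributes exactly $T_4=4(2p^m-3)$ solutions, and summing over the $q-1$ admissible values of $w_0$ gives $(q-1)T_4$.

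Adding the two contributions yields $M_4=M_3+(q-1)T_4$, and substituting $p=3$, $q=p^m$ and simplifying (the bracket $1+(2q-3)$ collapses to $2(q-1)$) produces $M_4=8(q-1)^2+1$, as required. I expect no genuine obstacle: the substantive combinatorial work is already absorbed into Lemmas \ref{ES001} and \ref{CM001}, so the crux here is only the homogeneity-scaling bijection in the $w\neq 0$ case, together with the bookkeeping check that the $w=0$ stratum reproduces $M_3$ and that the final arithmetic telescopes correctly.
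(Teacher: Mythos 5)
Your proposal is correct and follows essentially the same route as the paper: stratify by $w=0$ (contributing $M_3$) versus $w\neq 0$ (each of the $q-1$ values contributing $T_4$ via the homogeneity scaling), then compute $M_4=(q-1)T_4+M_3=8(q-1)^2+1$. The only difference is that you spell out the scaling bijection and the final arithmetic a bit more explicitly than the paper does.
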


\begin{proof}
For $w\not=0$, divide the three equations in (\ref{E01}) by $w^2, w^{p+1}$ and $w^{p^2+1}$ respectively, then equation system (\ref{E02}) is obtained, and the number of solutions of which is $T_4$ by Lemma \ref{CM001}. For $w=0$, the number of solutions of (\ref{E01}) is $M_3$ (equation \ref{CM022}) which assume that $m$ is an odd integer satisfying $3\nmid m$. Altogether, the number of solutions of equation system (\ref{E01}) is $M_4=\left(p^m-1\right)T_4+M_3=8\left(p^m-1\right)^2+1$.    \qed
\end{proof}

%Now, we have
Applying Lemma \ref{CM01}, the following result about the fourth moment of the exponential sum $S(\alpha,\beta,\gamma)$ can be obtained.
\begin{lemma}\label{S04}
Let $p=3$ and $q=p^m$ where $m$ is an odd integer satisfying $3\nmid m$. Then
\[
\sum\limits_{\alpha,\beta,\gamma\in \mathbb{F}_q} S(\alpha,\beta,\gamma)^4=M_4\cdot p^{3m}=\left(8(p^m-1)^2+1\right)p^{3m}.
\]
\end{lemma}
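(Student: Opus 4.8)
The plan is to follow exactly the moment-computation template established in the proof of Lemma~\ref{ES001}, reducing the fourth power sum to a solution count that has already been evaluated in Lemma~\ref{CM01}. First I would expand $S(\alpha,\beta,\gamma)^4$ as a fourfold sum over variables $x,y,z,w\in\mathbb{F}_q$,
\[
S(\alpha,\beta,\gamma)^4=\sum_{x,y,z,w\in\mathbb{F}_q}\zeta_p^{\mbox{Tr}\left(\alpha(x^2+y^2+z^2+w^2)+\beta(x^{p+1}+y^{p+1}+z^{p+1}+w^{p+1})+\gamma(x^{p^2+1}+y^{p^2+1}+z^{p^2+1}+w^{p^2+1})\right)},
\]
so that the four individual exponential sums combine into a single trace of the three symmetric power sums weighted by $\alpha,\beta,\gamma$.

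Next I would interchange the order of summation and carry out the sums over $\alpha$, $\beta$, $\gamma$ on the inside, exactly as in parts (ii) and (iii) of Lemma~\ref{ES001}. By orthogonality of additive characters, each of the three inner sums over $\mathbb{F}_q$ equals $q$ when the corresponding symmetric sum of powers vanishes and equals $0$ otherwise. Hence the triple product contributes $q^3=p^{3m}$ precisely for those quadruples $(x,y,z,w)$ satisfying all three equations of system~(\ref{E01}), and contributes $0$ for every other quadruple. Consequently
\[
\sum_{\alpha,\beta,\gamma\in\mathbb{F}_q}S(\alpha,\beta,\gamma)^4=M_4\cdot p^{3m},
\]
where $M_4$ denotes the number of solutions of~(\ref{E01}).

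Finally I would invoke Lemma~\ref{CM01}, which under the hypotheses $p=3$ and $3\nmid m$ gives $M_4=8(p^m-1)^2+1$, to obtain the asserted identity. I do not expect any genuine obstacle at this stage: the orthogonality step is routine and identical to the lower-moment cases, and all the combinatorial difficulty has already been absorbed into the count $T_4$ of Lemma~\ref{CM001} and its consequence $M_4$ in Lemma~\ref{CM01}. The only point requiring a little care is the bookkeeping when interchanging summations, making sure the three inner character sums act on the correct fixed symmetric combinations of $x,y,z,w$; this is purely formal.
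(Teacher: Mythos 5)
Your proposal is correct and follows exactly the route the paper intends: the paper gives no explicit proof of Lemma~\ref{S04}, simply stating that it follows by "applying Lemma~\ref{CM01}," and the implied argument is precisely your orthogonality-based reduction of the fourth moment to the solution count $M_4$ of system~(\ref{E01}), mirroring parts (ii) and (iii) of Lemma~\ref{ES001}. No discrepancies to report.
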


Using the symbols of (\ref{0001}) and (\ref{0002}), Lemma \ref{S04} can be rewritten as the following corollary.
\begin{corollary}\label{C02}
Let $p=3$ and $q=p^m$ where $m$ is an odd integer satisfying $3\nmid m$. Then
\[
2n_0+p^2(n_{-1,1}+n_{1,1})+p^4\cdot 2n_2+p^6(n_{-1,3}+n_{1,3})+p^8\cdot 2n_4=\left(8(p^m-1)^2-p^m+1\right)p^m.
\]
\end{corollary}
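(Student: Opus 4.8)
The plan is to expand the fourth-moment identity of Lemma \ref{S04} by sorting the triples $(\alpha,\beta,\gamma)$ according to the value of $S(\alpha,\beta,\gamma)$, exactly as was done for the first three moments in Lemma \ref{ES002}. The starting point is
\[
\sum_{\alpha,\beta,\gamma\in\mathbb{F}_q}S(\alpha,\beta,\gamma)^4=\left(8(p^m-1)^2+1\right)p^{3m}.
\]

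First I would recall from Lemma \ref{ES02} and Corollary \ref{RQ02} that the rank $r$ of $H_2'$ takes only the values $m,m-1,m-2,m-3,m-4$, and that $S(\alpha,\beta,\gamma)=\varepsilon p^{m-r/2}$ when $r$ is even while $S(\alpha,\beta,\gamma)=\varepsilon\sqrt{p^*}\,p^{m-(r+1)/2}$ when $r$ is odd, with $\varepsilon=\pm1$. Since $p=3\equiv 3\pmod 4$ we have $p^*=-p$, so $\sqrt{p^*}=i\sqrt p$, and substituting $r=m-j$ the five cases become precisely the values $\varepsilon p^{(m+j)/2}$ for $j=1,3$ (even rank) and $\varepsilon i\,p^{(m+j)/2}$ for $j=0,2,4$ (odd rank) used in the definitions (\ref{0001}) and (\ref{0002}). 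This identifies the index $j$ with the rank deficiency $m-r$ and lets me read off the multiplicities $n_{\varepsilon,1},n_{\varepsilon,3}$ and, via Lemma \ref{RO01}, $n_0,n_2,n_4$.

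Next I would compute $S(\alpha,\beta,\gamma)^4$ in each case. The key observation is that $i^4=1$, so whether $S$ is real or purely imaginary its fourth power is simply $p^{2(m+j)}$; hence every power of the imaginary unit disappears and all contributions are positive real numbers. Grouping the two signs $\varepsilon=\pm1$ together, the triples with $j=0,2,4$ contribute $2n_0\,p^{2m}$, $2n_2\,p^{2m+4}$, $2n_4\,p^{2m+8}$, while those with $j=1,3$ contribute $(n_{-1,1}+n_{1,1})p^{2m+2}$ and $(n_{-1,3}+n_{1,3})p^{2m+6}$. I must also add separately the excluded triple $(0,0,0)$, for which $S(0,0,0)=p^m$ and $S(0,0,0)^4=p^{4m}$.

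Finally I would equate this expansion with the right-hand side of Lemma \ref{S04}, divide through by $p^{2m}$, move the term $p^{2m}$ coming from $(0,0,0)$ to the right, and simplify $\left(8(p^m-1)^2+1\right)p^m-p^{2m}=\left(8(p^m-1)^2-p^m+1\right)p^m$ to obtain the stated identity. There is no real obstacle here; the only point requiring care is the bookkeeping that matches each rank value $r=m-j$ to the correct index $j$ and confirms that the powers of the imaginary unit all collapse to $1$, so that the fourth moment is a clean nonnegative combination of the $n$'s.
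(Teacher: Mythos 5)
Your proposal is correct and follows exactly the route the paper intends: the paper presents Corollary~\ref{C02} as a direct rewriting of the fourth moment in Lemma~\ref{S04} using the notation of (\ref{0001}) and (\ref{0002}), which is precisely your expansion (with the correct observation that $i^4=1$ makes every contribution $p^{2(m+j)}$, and the correct handling of the excluded triple $(0,0,0)$ contributing $p^{4m}$). The final simplification $\left(8(p^m-1)^2+1\right)p^m-p^{2m}=\left(8(p^m-1)^2-p^m+1\right)p^m$ checks out.
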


\subsection{Association schemes}\label{Sec3.3}

The following introduction is about skew-symmetric matrix (Lemma \ref{MS04}) and symmetric matrix (Proposition \ref{EF002}), and a relevant discussion \cite{LL001}.
 In fact, they correspond to two association schemes,
the fundamental properties of which are referred to \cite{CV,D02,MS001,S01}. Note that there exists a one-to-one correspondence between the set of alternating bilinear forms and the set of skew-symmetric matrices, and a one-to-one correspondence from the set of quadratic forms to the set of symmetric matrices.

A skew-symmetric matrix $B = [b_{i,j}]$ of order $m$ is a matrix which satisfies
\[
b_{i,i}=0, \ \ \ b_{i,j}+b_{j,i} = 0,
\]
and it has even rank.
Let $Y_m=Y(m,p)$ denote the set of skew-symmetric matrices of order $m$ over $\mathbb{F}_p$. It can be checked that $Y_m$ is an
${m(m-1)\over 2}$-dimensional vector space over $\mathbb{F}_p$.

Set $n=\lfloor m/2\rfloor$.
For $k=0,1,\ldots,n$, the partition $R' =\left\{R_0',R_1',\ldots,R_n' \right\}$ of $Y_m^2=Y_m \times Y_m$ is defined by
\[
R_k' =\{(A,B)\in Y_M^2|\mbox{rank}(A-B)=2k\}.
\]
Then $(Y_m,R')$ is an association scheme with $n$ classes \cite{DG01}.
The distance distribution of a nonempty subset $Y$ of $Y_m$ in the scheme $(Y_m,R')$ is the $(n+1)$-tuple $\mathbf{a} = (a_0,a_1,\ldots,a_n)$ of rational numbers $a_i$, where $|Y|a_i = |Y^2\bigcap R_i'|$. Easy to see that
\[
a_0 = 1, \ \ \  \mbox{and} \ \ \ a_0+a_1+\cdots+a_n=|Y|.
\]

\noindent An $(m,d)$-set $Y$ is a subset of $Y_m$ satisfying that
\[
\mbox{rank}(A-B)\geq 2d, \ \ \ \forall A,B \in Y, \ \ \ A\not= B,
\]
where $1\leq d\leq n$. In other words,
\[
a_1=a_2=\cdots =a_{d-1} = 0.
\]

For a real number $b\not= 1$ and all nonnegative integers $k$, denote the Gaussian binomial coefficients with basis $b$ by
$\left[ \begin{array}{c}
x\\
k
\end{array}
\right]_b$:

\[\left[ \begin{array}{c}
x\\
0
\end{array}
\right]_b =1 , \ \ \ \left[ \begin{array}{c}
x\\
k
\end{array}
\right]_b = \prod\limits_{i=0}^{k-1}{{(b^x-b^i)}/{(b^k-b^i)}},  \ \ \quad k=1,2,\ldots.
\]
 Set $b=p^2$ and $c = p^{m(m-1)/{2n}}$. The following result is about $(m,d)$-set \cite{DG01}.
\begin{lemma}(Theorem 4., \cite{DG01})\label{MS04}
\begin{enumerate}
\renewcommand{\labelenumi}{$($\roman{enumi}$)$}
\item
For any $(m,d)$-set $Y$, we have (the Singleton bound)
\[
|Y| \leq c^{n-d+1}.
\]
\item
In case of equality, the distance distribution of $Y$ is uniquely determined by
\[
a_{n-i} = \sum\limits_{j=i}^{n-d}(-1)^{j-i}
{b^{\left( \begin{array}{c}
j-i\\
2
\end{array}
\right)}
{\left[ \begin{array}{c}
j\\
i
\end{array}\right]_b}
{\left[ \begin{array}{c}
n \\
j
\end{array}\right]_b}(c^{n-d+1-j}-1), }
\]
for $i=0,1,\ldots,n-d$. Here, $\left(\begin{array}{c}j-i\\2\end{array}\right)$ represents the general binomial coefficient.
\end{enumerate}
 \end{lemma}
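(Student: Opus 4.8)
The plan is to regard $(Y_m,R')$ as a \emph{translation} association scheme on the additive group $Y_m$ and to establish both parts by Delsarte's linear-programming (duality) method. First I would assemble the combinatorial data of the scheme. Since every skew-symmetric matrix has even rank, the relations jump by $2$ and the natural base for all enumeration is $b=p^{2}$; in particular the valencies $v_{k}=\#\{B\in Y_{m}:\mathrm{rank}(B)=2k\}$ and, more importantly, the eigenvalues of the scheme are $q$-analogues of the Eberlein polynomials, expressible through the Gaussian binomials $\left[\begin{array}{c}x\\k\end{array}\right]_{b}$ introduced above. Because the scheme is a translation scheme, to every nonempty $Y\subseteq Y_{m}$ one attaches its distance distribution $\mathbf{a}=(a_{0},\ldots,a_{n})$ together with a dual (MacWilliams-transformed) distribution $\mathbf{a}'=\frac{1}{|Y|}\mathbf{a}Q$, where $Q$ is the second eigenmatrix; Delsarte's inequalities then guarantee $a_{k}'\ge 0$ for every $k$, while $a_{0}'=1$ and $\sum_{k}a_{k}'=|Y_{m}|/|Y|$. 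It is also worth noting that $c=|Y_{m}|^{1/n}$, so the asserted bound reads $|Y|\le |Y_{m}|^{(n-d+1)/n}$.

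For part (i) the $(m,d)$-set hypothesis is exactly $a_{1}=\cdots=a_{d-1}=0$, i.e. $Y$ is a \emph{$d$-code} in the rank metric. I would then run the LP bound: since the scheme is cometric ($Q$-polynomial) for the rank ordering, one selects a feasible polynomial of degree $n-d$ (an annihilator adapted to the forbidden distances) whose expansion in the eigenbasis has nonnegative coefficients, and duality collapses the estimate to $|Y|\le c^{\,n-d+1}$. I expect this to be the main obstacle, precisely because the elementary bounds do not suffice: a natural anticode, namely the skew-symmetric matrices supported on $d-1$ fixed rows and the corresponding columns (which has rank-diameter $2(d-1)$), only yields $|Y|\le p^{\binom{m-d+1}{2}}$ — already $p^{6}$ rather than the correct $p^{5}$ when $m=5$, $d=2$ — so the code--anticode inequality is strictly weaker than the claim and the eigenvalues of the scheme must genuinely enter. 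Verifying that the chosen LP polynomial is feasible (nonnegativity of its dual coefficients) is the technical heart of this step.

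For part (ii), equality $|Y|=c^{\,n-d+1}$ forces the LP to be tight, and tightness pins down the dual distribution: $a_{k}'=0$ must hold for all $k$ outside the minimal admissible range $0\le k\le n-d$, while on that range the $a_{k}'$ are fixed by $a_{0}'=1$, $\sum_{k}a_{k}'=|Y_{m}|/|Y|$ and the $d$-code constraints, giving values of the shape $\left[\begin{array}{c}n\\j\end{array}\right]_{b}(c^{\,n-d+1-j}-1)$. Once $\mathbf{a}'$ is known, $\mathbf{a}$ is recovered from the inverse transform $\mathbf{a}=\frac{|Y|}{|Y_{m}|}\mathbf{a}'P$, using $PQ=|Y_{m}|I$. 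Because $P$ and $Q$ are built from Gaussian binomials with base $b=p^{2}$, this inversion is exactly a $q$-binomial (M\"obius) inversion over the lattice of $\mathbb{F}_{b}$-subspaces, whose kernel is $(-1)^{j-i}\,b^{\binom{j-i}{2}}\left[\begin{array}{c}j\\i\end{array}\right]_{b}$; carrying out the inversion and matching terms reproduces the displayed closed form for $a_{n-i}$. The only remaining work is routine manipulation of Gaussian binomial identities.
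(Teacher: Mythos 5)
The paper does not actually prove this statement: it is imported verbatim as Theorem~4 of Delsarte--Goethals \cite{DG01}, and the companion result for symmetric matrices (Proposition~\ref{EF002}) is likewise dispatched with ``refer to \cite{DG01}.'' So there is no in-paper argument to measure you against; the relevant comparison is with the cited source, and your plan does follow the same route that Delsarte and Goethals take --- treat $(Y_m,R')$ as a translation scheme, bound $|Y|$ by Delsarte's duality/linear-programming method using the eigenvalues (the $q$-analogues of the Eberlein polynomials in base $b=p^2$), and in the case of equality recover the distance distribution by inverting the MacWilliams transform via Gaussian-binomial M\"obius inversion. Your observation that the elementary code--anticode (equivalently, row/column-deletion) argument only gives $|Y|\le p^{\binom{m-d+1}{2}}$, e.g.\ $p^{6}$ instead of the correct $c^{n-d+1}=p^{5}$ for $m=5$, $d=2$, is accurate and is exactly the reason the eigenvalue machinery cannot be avoided here, unlike in the unrestricted bilinear-forms (MRD) setting.

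That said, what you have written is a plan rather than a proof, and the gap sits precisely where you flag it. In part (i) you never exhibit the feasible polynomial of degree $n-d$ (the annihilator of the admissible nonzero ranks, normalized so that its expansion in the eigenbasis has nonnegative coefficients), and the nonnegativity of those dual coefficients is the entire content of the bound; asserting that ``duality collapses the estimate to $|Y|\le c^{n-d+1}$'' presupposes the computation you have not done. In part (ii) the claim that tightness forces $a'_k=0$ outside $0\le k\le n-d$ and pins the remaining $a'_j$ to $\left[\begin{array}{c}n\\j\end{array}\right]_b(c^{n-d+1-j}-1)$ again requires the explicit eigenvalues $P$ and $Q$ and the complementary-slackness argument, followed by the inversion with kernel $(-1)^{j-i}b^{\binom{j-i}{2}}\left[\begin{array}{c}j\\i\end{array}\right]_b$; none of this is carried out. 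To turn the sketch into a proof you would essentially have to reproduce Sections~3--4 of \cite{DG01}, which is presumably why the present paper simply cites the result.
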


The set of symmetric matrices $X_m=X(m,p)$ forms a vector space of dimension $m(m+1)/2$ over $\mathbb{F}_p$.
Let
\[
R = \{R_i\},  \ \ \ i=0,1,2,\ldots, \lfloor {{m+1}\over 2}\rfloor,
\]
be the set of symmetric relations $R_i$ on $X_m$ defined by
\[
R_i = \{ (A,B)|A,B \in X_m, \mbox{rank}(A-B)=2i-1 \ \mbox{or} \ 2i   \}.
\]
Comparing to the association scheme $(Y_m,R')$ of skew-symmetric matrices, the following two lemmas are about the scheme $(X_m,R)$ \cite{E001}.
\begin{lemma} (Theorem 1., \cite{E001})
$(X_m,R)$ forms an association scheme of class $\lfloor(m+1)/2\rfloor$.
\end{lemma}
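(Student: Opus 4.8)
The plan is to verify directly the defining axioms of a symmetric association scheme for $(X_m,R)$: that $R_0$ is the diagonal relation, that $\{R_i\}$ partitions $X_m^2$, that each $R_i$ is symmetric, and that the intersection numbers $p_{ij}^k$ are well defined. The first three requirements are essentially bookkeeping. Since $\mbox{rank}(A-B)=0$ holds exactly when $A=B$, the relation $R_0$ is the diagonal; every ordered pair $(A,B)$ is assigned to exactly one $R_i$ because $\mbox{rank}(A-B)$ takes a unique value in $\{0,1,\ldots,m\}$ and the blocks $\{2i-1,2i\}$ cover $\{1,\ldots,m\}$; and symmetry of each $R_i$ follows from $\mbox{rank}(A-B)=\mbox{rank}(B-A)$. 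Counting the admissible values of $i$, the largest index is $\lfloor(m+1)/2\rfloor$ (when $m=2n$ the top rank $2n$ forces $i=n$, and when $m=2n+1$ the top rank $2n+1=2(n+1)-1$ forces $i=n+1$), which gives the claimed number of classes.

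The substantive content is the regularity of the intersection numbers, and here I would exploit the fact that $(X_m,R)$ is a \emph{translation} scheme on the additive group $(X_m,+)$: since each relation depends only on the difference $A-B$, the intersection number $p_{ij}^k$ equals, for any fixed symmetric matrix $E$ whose rank lies in class $k$, the number of symmetric matrices $D$ such that $\mbox{rank}(D)$ lies in class $i$ and $\mbox{rank}(E-D)$ lies in class $j$. The task is thus to prove that this count is independent of the choice of representative $E$ within $R_k$. The main tool is the congruence action $X\mapsto MXM^T$ of $\mbox{GL}_m(\mathbb{F}_p)$, which maps $X_m$ onto itself, preserves rank, and hence permutes each relation $R_i$ while fixing the partition; consequently the count is constant on each congruence orbit of representatives $E$.

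The principal obstacle is that, for odd $p$, the congruence action is \emph{not} transitive on the symmetric matrices of a fixed rank: beyond the rank, the discriminant (a class in $\mathbb{F}_p^*/(\mathbb{F}_p^*)^2$, cf. the Legendre symbol in Lemma \ref{FL002}) is an additional invariant, so $R_k$ splits into several congruence orbits. The crux is to show that fusing the two consecutive ranks $2i-1$ and $2i$ into the single relation $R_i$ is exactly what restores constancy of the intersection numbers across these orbits. I would establish this by reducing the whole computation to explicit solution counts for quadratic equations: Lemma \ref{SE01} and Lemma \ref{SE02} give the number of solutions of $f(x_1,\ldots,x_n)=b$ for a nondegenerate quadratic form in an even and odd number of indeterminates respectively, and these counts determine both the number of symmetric matrices of each $(\mbox{rank},\ \mbox{discriminant})$ type and the number of ways a fixed $E$ decomposes as $D+(E-D)$ with prescribed rank classes. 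Summing over both discriminant types within the fused pair $\{2i-1,2i\}$, the discriminant-dependent contributions cancel, leaving a value depending only on $i,j,k$; this cancellation is the heart of the argument and the place where the specific grouping in the definition of $R$ is indispensable.
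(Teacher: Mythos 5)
This lemma is not proved in the paper at all: it is quoted verbatim from Egawa (Theorem 1 of \cite{E001}) and used as a black box, so the only comparison available is between your sketch and Egawa's actual argument. Your bookkeeping steps (diagonal relation, partition, symmetry, and the class count $\lfloor(m+1)/2\rfloor$) are correct, and your reduction of the problem to a translation scheme — showing that the number of $D$ with $\mbox{rank}(D)$ in class $i$ and $\mbox{rank}(E-D)$ in class $j$ is independent of the representative $E\in R_k$ — is the right formulation. But that independence \emph{is} the theorem, and your proposal does not prove it. The congruence-orbit argument only shows the count is constant on each $\mbox{GL}_m(\mathbb{F}_p)$-orbit, and a fused class $R_k$ ($k\geq 1$) is a union of up to four such orbits (two ranks $\times$ two discriminant classes); the claimed cancellation "summing over both discriminant types within the fused pair" is asserted, not established, and as stated it concerns the sub-orbits of the classes $i$ and $j$ being counted rather than the independence from which orbit of $R_k$ the fixed matrix $E$ lies in. Moreover, Lemma \ref{SE01} and Lemma \ref{SE02} count points on a single quadric $f(x_1,\ldots,x_n)=b$; they do not, without substantial further work, give the number of symmetric matrices $D$ of prescribed rank such that $E-D$ also has prescribed rank. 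That enumeration of rank-restricted decompositions is precisely the hard combinatorial core, and the sketch offers no mechanism for it.

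It is also worth noting that Egawa's proof takes a genuinely different route from the one you propose: rather than computing the intersection numbers of $(X_m,R)$ directly, he constructs a correspondence between quadratic forms in $m$ variables and alternating bilinear forms in $m+1$ variables, transporting the known association-scheme structure of $(Y_{m+1},R')$ (the Delsarte--Goethals scheme of \cite{DG01}, Lemma \ref{MS04} in this paper) to $(X_m,R)$; this is exactly what makes Lemma \ref{PQS} (equality of all parameters of the two schemes) come for free. If you want to complete a direct proof along your lines, you would need explicit formulas for the number of symmetric matrices of each rank and discriminant type whose difference from a fixed $E$ has a given rank and type — a computation comparable in length to Egawa's paper — so the honest assessment is that your proposal identifies the obstacle correctly but leaves the entire substantive content unproved.
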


\begin{lemma}(Theorem 2., \cite{E001})\label{PQS}
All the parameters (and consequently all the eigenvalues) of the two association schemes $(X_m,R)$ and $(Y_{m+1},R')$ of class $\lfloor(m+1)/2\rfloor$ are exactly the same.
\end{lemma}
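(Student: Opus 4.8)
The plan is to treat both $(X_m,R)$ and $(Y_{m+1},R')$ as translation association schemes on the additive groups of the respective matrix spaces. The first observation is that these groups have the same size: the symmetric matrices of order $m$ form a space of dimension $m(m+1)/2$, while the skew-symmetric matrices of order $m+1$ form a space of dimension $(m+1)m/2$, so $|X_m|=|Y_{m+1}|=p^{m(m+1)/2}$, and both schemes have exactly $\lfloor(m+1)/2\rfloor$ classes. Since in a translation scheme the relation of two matrices depends only on the rank class of their difference, each intersection number $p^k_{ij}$ is the number of ways of writing a fixed matrix of rank class $k$ as a sum of a matrix of rank class $i$ and one of rank class $j$; equivalently, all parameters are determined by the two character tables (the $P$-matrices), whose entries are the incomplete character sums $\sum_{C}\zeta_p^{\mathrm{Tr}(CD)}$ over each rank class. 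It therefore suffices to match, class by class, the valencies and these character sums across the two schemes.

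First I would record the enumeration by rank over $\mathbb{F}_p$: the number of symmetric matrices of order $m$ of each rank $r$, and the number of skew-symmetric matrices of order $m+1$ of each rank $2k$, both standard closed forms in the Gaussian binomials of Lemma~\ref{MS04}. The first checkpoint is the valency identity
\[
\#\{A\in X_m:\mathrm{rank}\,A=2i-1\}+\#\{A\in X_m:\mathrm{rank}\,A=2i\}=\#\{B\in Y_{m+1}:\mathrm{rank}\,B=2i\},
\]
which forces $R_i$ and $R'_i$ to have equal valency $k_i$ for every $i$ and is proved as a direct Gaussian-binomial identity (one checks it on the small cases $m=1,3$ as a sanity test before the general manipulation).

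Next I would compute the eigenvalues. Using Lemma~\ref{FL002}, together with the solution counts of Lemmas~\ref{SE01} and~\ref{SE02}, the character sum of a symmetric matrix of given rank against a test matrix $D$ collapses to a Gauss-sum expression depending only on the rank of $D$; this is precisely the self-duality that makes the $P$-matrix well defined, and it parallels the character-sum computation for skew-symmetric matrices that underlies the scheme $(Y_{m+1},R')$ in Lemma~\ref{MS04}. Matching these two families of sums entry by entry yields equality of the full eigenvalue matrices $P$, after which the intersection numbers $p^k_{ij}$ coincide as well, being recoverable from $P$ through the orthogonality relations of the Bose--Mesner algebra.

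The main obstacle is this last matching step rather than the valency count. Two subtleties must be handled. First, in odd characteristic the rank of a symmetric matrix does not determine it up to equivalence: the discriminant, through the Legendre symbol $\left(\Delta/p\right)$ of Lemma~\ref{FL002}, intervenes, so the two symmetric classes of ranks $2i-1$ and $2i$ merged into the single relation $R_i$ must be shown to contribute character sums that recombine into exactly the single skew-symmetric value for rank $2i$; this is where a cancellation of the $\varepsilon=\pm1$ type established in Lemma~\ref{RO01} is needed. Second, one must verify rigorously that each character sum depends on $D$ only through its rank class, which is the self-duality assertion itself; proving this for the symmetric scheme is the technical heart of the argument, and it is what ultimately certifies that the two parameter sets, and hence all eigenvalues, are identical.
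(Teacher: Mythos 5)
The paper does not prove this lemma at all: it is quoted verbatim as Theorem~2 of Egawa's paper \cite{E001} (``Association schemes of quadratic forms''), so there is no in-paper argument to compare yours against. Judged on its own terms, your proposal is a plausible roadmap -- viewing both $(X_m,R)$ and $(Y_{m+1},R')$ as translation schemes on groups of the same order $p^{m(m+1)/2}$, reducing ``all parameters'' to the first eigenmatrix $P$, and matching valencies and character sums class by class is indeed the standard framework in which such parameter coincidences are established. But as written it is a plan, not a proof: every step that actually carries content is deferred. You never write down the rank enumerations (the number of symmetric $m\times m$ matrices of rank $r$ split by discriminant class, and the number of alternating $(m+1)\times(m+1)$ matrices of rank $2k$), so the valency identity is asserted rather than verified; and you explicitly flag the entry-by-entry matching of the character sums as ``the main obstacle'' and ``the technical heart'' without performing it. A proof that stops at the point it identifies as the hard part has a genuine gap.

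Two specific issues would need to be resolved to close that gap. First, in odd characteristic a fixed rank $r$ of symmetric matrices splits into two $\mathrm{GL}_m(\mathbb{F}_p)$-orbits according to $\left({\Delta\over p}\right)$, so the relation $R_i$ is a union of up to four orbits; you must show not only that the summed character values over this union reproduce the single alternating-forms eigenvalue for rank $2i$, but also that the sum $\sum_{C\in R_i(0)}\zeta_p^{\mathrm{Tr}(CD)}$ is independent of the discriminant class of the test matrix $D$, not merely of its rank pair -- otherwise the $P$-matrix of $(X_m,R)$ is not even well defined by your recipe. Lemma~\ref{RO01} gives a $\pm$ cancellation only for odd rank, so the even-rank contributions (where the Gauss sum carries the sign $\left({\Delta\over p}\right)$ without a compensating partner) need a separate argument. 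Second, recovering the intersection numbers $p_{ij}^k$ from $P$ requires the schemes to be commutative with matched multiplicities; this is true here but should be stated, since ``all the parameters'' in the lemma means the $p_{ij}^k$ and not only the eigenvalues. Until the explicit Gaussian-binomial computations are supplied, the proposal establishes the shape of a proof rather than the result.
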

Distance distribution of a nonempty subset $X$ of $X_m$ % and $P$-transform of $(X_m,R)$
 can be defined as that of $(Y_m,R')$.
A set $X\subset X_m$ is called an $(m,d)$-set if it satisfies
\[
\mbox{rank}(A-B)\geq 2d-1, \ \ \ \forall A,B \in X, \ \ \ A\not= B,
\]
where $1\leq d \leq \lfloor(m+1)/2\rfloor$.

Similar to Lemma \ref{MS04}, Proposition \ref{EF002} is about the association scheme $(X_m,R)$.
\begin{proposition}\label{EF002}
 Set $b=p^2, c=p^{m(m+1)/{2n}}, n = \lfloor(m+1)/2\rfloor$.
\begin{enumerate}
\renewcommand{\labelenumi}{$($\roman{enumi}$)$}
\item
For any $(m,d)$-set $X\subset X_m$, we have (the Singleton bound)
\[
|X| \leq c^{n-d+1}.
\]
\item
In case of equality, the distance distribution of $X$ is uniquely determined by
\[
a_{n-i} = \sum\limits_{j=i}^{n-d}(-1)^{j-i}
{b^{\left( \begin{array}{c}
j-i\\
2
\end{array}
\right)}
{\left[ \begin{array}{c}
j\\
i
\end{array}\right]_b}
{\left[ \begin{array}{c}
n \\
j
\end{array}\right]_b}(c^{n-d+1-j}-1), }
\]
for $i=0,1,\ldots,n-d.$ Here, $\left(\begin{array}{c}j-i\\2\end{array}\right)$ represents the general binomial coefficient.
\end{enumerate}
\end{proposition}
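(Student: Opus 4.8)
The plan is to deduce Proposition \ref{EF002} from its skew-symmetric analogue, Lemma \ref{MS04}, by transporting the latter across the identity of scheme parameters supplied by Lemma \ref{PQS}. The guiding observation is that both the Singleton bound and the distance distribution in the extremal case are consequences of Delsarte's linear-programming theory for association schemes, and therefore are determined entirely by the eigenvalues (equivalently, the $P$- and $Q$-matrices) together with the number of classes. Lemma \ref{PQS} asserts that $(X_m,R)$ and $(Y_{m+1},R')$ share all of these, so any invariant computed from them must agree.

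First I would make the dictionary between the two $(m,d)$-set notions precise at the level of relations. In $(X_m,R)$ the relation $R_i$ collects pairs of rank $2i-1$ or $2i$, so the defining condition $\mathrm{rank}(A-B)\geq 2d-1$ is equivalent to saying that a pair lies in $R_i$ only for $i\geq d$; in distance-distribution language this is exactly $a_1=a_2=\cdots=a_{d-1}=0$. In $(Y_{m+1},R')$ the relation $R_k'$ collects pairs of rank $2k$, and $\mathrm{rank}(A-B)\geq 2d$ likewise reads $a_1=\cdots=a_{d-1}=0$. Thus an $(m,d)$-set in $X_m$ and an $(m+1,d)$-set in $Y_{m+1}$ are the same object from the viewpoint of the abstract scheme: a $d$-code whose inner distribution vanishes on relations $1,\ldots,d-1$. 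Both schemes have $n=\lfloor (m+1)/2\rfloor$ classes, so the index ranges coincide.

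Next I would check that the parameters substitute correctly. Applying Lemma \ref{MS04} to $(Y_{m+1},R')$ means replacing $m$ by $m+1$ throughout; its basis stays $b=p^2$, while its $c$ becomes $p^{(m+1)((m+1)-1)/(2n)}=p^{m(m+1)/(2n)}$ with $n=\lfloor (m+1)/2\rfloor$, which is precisely the $c$ appearing in Proposition \ref{EF002}. With the dictionary of the previous step, part (i) of Lemma \ref{MS04} yields the Singleton bound $|X|\leq c^{n-d+1}$, and part (ii) yields, verbatim (same $b$, same $c$, same Gaussian binomial coefficients, same summation range $i=0,\ldots,n-d$), the stated distance distribution $a_{n-i}$ in the equality case.

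The only real point requiring care, and hence the main obstacle, is the rigorous justification of the transfer itself: one must be sure that the Singleton bound and the equality-case distribution are genuinely functions of the scheme parameters alone, so that Lemma \ref{PQS} suffices, rather than of some finer feature of the matrix model, and that the off-by-one in matrix size ($X_m$ versus $Y_{m+1}$) together with the grouping of ranks $2i-1,2i$ into a single relation does not disturb the index bookkeeping. Once the correspondence of relations and the matching of $b$ and $c$ are pinned down as above, no further computation is needed: the proposition is Lemma \ref{MS04} read through the isomorphism of parameters in Lemma \ref{PQS}.
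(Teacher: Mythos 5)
Your proposal is correct and is essentially the paper's own argument: the paper's entire proof is the single line ``Refer to the proof of Lemma \ref{MS04} \cite{DG01}'', with Lemma \ref{PQS} (Egawa's identity of all parameters and eigenvalues between $(X_m,R)$ and $(Y_{m+1},R')$) supplying exactly the transfer you spell out. You simply make explicit the relation-by-relation dictionary and the check that $b$ and $c$ specialize correctly, which the paper leaves implicit.
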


\begin{proof}
 Refer to the proof of Lemma \ref{MS04} \cite{DG01}.  \qed
\end{proof}

\subsection{Main results}\label{Sec3.4}

This subsection contains the main results about the cyclic code $\mathcal{C}_1$. Before going on, Lemma \ref{MS02} is for the distance distribution of relevant subset $X$ of $X_m $.
With notations as in Proposition \ref{EF002} for $m=2t+1$, we have $n=t+1$ and $c=p^m$. In fact, Lemma \ref{ES002}, Corollary \ref{C02} and Corollary \ref{CQ02} result in Corollary \ref{CQ002}, and then lead to Theorem \ref{C002}.
\begin{lemma}\label{MS02}
Let $m=2t+1, t\geq 2$ and $q=p^m-1$ where $p$ is an odd prime.
Let $X$ denote the set of quadratic forms corresponding to $f_2'(x) =\alpha_0 x^2+\alpha_1x^{p+1}+\alpha_2x^{p^2+1}, \alpha_0,\alpha_1,\alpha_2 \in \mathbb{F}_q$ (equation \ref{QF01}).
  Then $X$ is an $(m,t-1)$-set, and $a_{n-i}$ is the number of quadratic forms in $X$ with rank $2(n-i)$ or $2(n-i)-1$ for $i=0,1,2$.
\end{lemma}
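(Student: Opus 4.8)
The plan is to exploit the linearity of the coefficient-to-matrix correspondence $(\alpha_0,\alpha_1,\alpha_2)\mapsto H_2'$, which makes $X$ an additive subgroup of the ambient space $X_m$ of symmetric matrices, and then to read off both assertions directly from Lemma \ref{RQ002} and Corollary \ref{RQ02}. First I would record the linearity: since $F_2'(X)=\mbox{Tr}(f_2'(x))$ is $\mathbb{F}_p$-linear in the coefficient vector $(\alpha_0,\alpha_1,\alpha_2)$, so is the associated symmetric matrix $H_2'$. Hence, if $A,B\in X$ correspond to $(\alpha_0,\alpha_1,\alpha_2)$ and $(\beta_0,\beta_1,\beta_2)$, then $A-B\in X$ corresponds to $(\alpha_0-\beta_0,\alpha_1-\beta_1,\alpha_2-\beta_2)$. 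In particular $X$ is closed under subtraction and is an additive subgroup of $X_m$.

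For the $(m,t-1)$-set claim I would take distinct $A,B\in X$. Then $A-B\in X$ corresponds to a \emph{nonzero} coefficient vector, so Lemma \ref{RQ002} applied with $d=2$ gives $\mbox{rank}(A-B)=r_2'\geq m-4=2t-3=2(t-1)-1$. This is exactly the defining inequality of an $(m,t-1)$-set, with $d=t-1$ lying in the admissible range $1\le d\le n=t+1$ because $t\ge 2$; equivalently $a_1=\cdots=a_{t-2}=0$.

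For the second claim I would use the group structure to collapse the distance distribution into a rank census. For a fixed $C\in X$, the pairs $(A,B)\in X^2$ with $A-B=C$ are precisely $\{(B+C,B):B\in X\}$, numbering $|X|$. Summing over those matrices $C\in X$ whose rank equals $2(n-i)-1$ or $2(n-i)$ yields $|X^2\cap R_{n-i}|=|X|\cdot\#\{C\in X:\mbox{rank}(C)=2(n-i)-1\ \mbox{or}\ 2(n-i)\}$; dividing by $|X|$ shows that $a_{n-i}$ is the number of quadratic forms in $X$ of rank $2(n-i)$ or $2(n-i)-1$. Finally, Corollary \ref{RQ02} limits the possible ranks of elements of $X$ to $m,m-1,m-2,m-3,m-4=2t+1,2t,2t-1,2t-2,2t-3$ (together with rank $0$ for the zero matrix, giving $a_0=1$). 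Matching these ranks against the relations $R_i=\{(A,B):\mbox{rank}(A-B)=2i-1\ \mbox{or}\ 2i\}$ places rank $m$ in $R_n$, ranks $\{m-1,m-2\}$ in $R_{n-1}$, and ranks $\{m-3,m-4\}$ in $R_{n-2}$, so the only relations carrying nonzero ranks are indexed by $i=0,1,2$, which is the asserted range.

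The routine part is the linearity and the application of Lemma \ref{RQ002}. The step requiring the most care is the translation-invariance bookkeeping that turns $|X|a_{n-i}=|X^2\cap R_{n-i}|$ into a plain census of ranks over $X$, together with the verification that the five rank values permitted by Corollary \ref{RQ02} fall exactly into the three relations $R_{n-2},R_{n-1},R_n$ corresponding to $i=0,1,2$.
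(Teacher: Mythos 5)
Your proposal is correct and follows essentially the same route as the paper: the rank bound from Lemma \ref{RQ002}/Corollary \ref{RQ02} gives the $(m,t-1)$-set property, and the additive closure of $X$ converts $|X|a_{n-i}=|X^2\cap R_{n-i}|$ into a census of ranks, with the five admissible ranks $m,\dots,m-4$ falling into $R_n,R_{n-1},R_{n-2}$. Your write-up is in fact slightly more careful than the paper's (explicit linearity of $(\alpha_0,\alpha_1,\alpha_2)\mapsto H_2'$, explicit matching of ranks to relations), but there is no substantive difference in method.
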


\begin{proof}
By Corollary \ref{RQ02}, the ranks of the quadratic forms contained in $X$ satisfy $r\geq m-4=2t+1-4=2(t-1)-1$. So $X$ is an $(m,t-1)$-set.
For $i=0,1,2$, by definition
\begin{equation}\label{QF0001}
|X|a_{n-i}=|X^2\bigcap R_{n-i}|.
 \end{equation}
 Let $X_i'$ denote the set of quadratic forms of rank $2(n-i)$ or $2(n-i)-1$ in $X$. Easy to find that the sum of two quadratic forms from $X$ also lies in $X$. For given $A\in X$, we find that $(A,C)=(A,A+B)\in R_{n-i}$ for any $B\in X_i'$, here $C=A+B$. And no element else in $X$ satisfies this property. So
 \begin{equation}\label{QF0002}
 |X^2\bigcap R_{n-i}|=|X||X_i'|,
 \end{equation}
  and $a_{n-i}=|X_i'|$ by comparing (\ref{QF0001}) with (\ref{QF0002}). \qed
\end{proof}

\begin{corollary}\label{CQ02}
Let $m=2t+1, t\geq 2$ and $q=p^m-1$ where $p$ is an odd prime.
The notations in (\ref{0001}) and (\ref{0002}) satisfy
\[
2n_0=a_{n}, n_{-1,1}+n_{1,1}+2n_2=a_{n-1},\ \mbox{and}\ \ n_{-1,3}+n_{1,3}+2n_4=a_{n-2}
\]
where $a_{n-i}$ is as defined in Proposition \ref{EF002} for $i=0,1,2$.
\end{corollary}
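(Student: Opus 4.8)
The plan is to match, rank by rank, the counting quantities $n_{\varepsilon,j}$ of (\ref{0001})--(\ref{0002}) against the distance-distribution numbers $a_{n-i}$, exploiting the fact that by Lemma \ref{FL002} the value of $S(\alpha,\beta,\gamma)$ is completely governed by the rank $r$ of the associated symmetric matrix $H_{\alpha,\beta,\gamma}$. Since the corollary is applied with $p=3$, so that $p\equiv 3\pmod 4$, Lemma \ref{FL002}(i) gives $S=i^r\left({\Delta\over p}\right)p^{m-r/2}$. Here the modulus $p^{m-r/2}$ reads off $r$, while the factor $i^r$ is real ($\pm 1$) exactly when $r$ is even and purely imaginary ($\pm i$) exactly when $r$ is odd. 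Comparing with the prescribed forms in (\ref{0001}) (real value $\varepsilon p^{(m+j)/2}$, $j=1,3$) and (\ref{0002}) (imaginary value $\varepsilon i p^{(m+j)/2}$, $j=0,2,4$), a contribution to index $j$ satisfies $m-r/2=(m+j)/2$, i.e.\ has rank $r=m-j$. Thus $j=0,2,4$ correspond to the odd ranks $r=m,\,m-2,\,m-4$ and $j=1,3$ to the even ranks $r=m-1,\,m-3$; this parity matching is forced because $m$ is odd, so $r=m-j$ has parity opposite to $j$.

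Next I would translate the $n$'s into total counts of forms of a given rank. For the odd ranks $r=m-j$ ($j=0,2,4$), Lemma \ref{RO01} shows that the two sign classes $\varepsilon=\pm 1$ have equal cardinality, so the total number of forms in $X$ of rank $m-j$ is $2n_j$. For the even ranks $r=m-j$ ($j=1,3$) no such forced symmetry holds, so the total number of forms of rank $m-j$ is simply $n_{-1,j}+n_{1,j}$. I would then invoke Lemma \ref{MS02}: with $m=2t+1$ and $n=t+1$, the number $a_{n-i}$ equals the number of quadratic forms in $X$ of rank $2(n-i)$ or $2(n-i)-1$, for $i=0,1,2$. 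Substituting $n=t+1$ and $m=2t+1$ yields the rank pairs $\{m+1,m\}$ for $i=0$, $\{m-1,m-2\}$ for $i=1$, and $\{m-3,m-4\}$ for $i=2$.

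Finally I would assemble the three equalities. A symmetric $m\times m$ matrix has rank at most $m$, so rank $m+1$ is vacuous and $a_n$ counts only the rank-$m$ forms, giving $a_n=2n_0$. The pair for $a_{n-1}$ bundles rank $m-1$ (even, index $j=1$, contributing $n_{-1,1}+n_{1,1}$) with rank $m-2$ (odd, index $j=2$, contributing $2n_2$), so $a_{n-1}=n_{-1,1}+n_{1,1}+2n_2$; likewise $a_{n-2}$ bundles rank $m-3$ (index $j=3$) with rank $m-4$ (index $j=4$), giving $a_{n-2}=n_{-1,3}+n_{1,3}+2n_4$, which is exactly the asserted system. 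The argument is essentially bookkeeping; the only point requiring care --- the \emph{main obstacle} --- is verifying that each $a_{n-i}$ bundles precisely one even rank with one adjacent odd rank, so that every aggregate pairs one ``two-sign'' term $n_{-1,j}+n_{1,j}$ with one ``doubled'' term $2n_{j'}$, and that the cap $r\le m$ is what collapses $a_n$ to the single term $2n_0$.
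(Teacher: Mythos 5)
Your argument is correct and is essentially the paper's own proof, with the bookkeeping the paper leaves implicit spelled out carefully: Lemma \ref{MS02} supplies the rank pairs $\{m+1,m\}$, $\{m-1,m-2\}$, $\{m-3,m-4\}$, Lemma \ref{RO01} supplies the factor $2$ at the odd ranks, and the cap $r\le m$ collapses $a_n$ to $2n_0$. The one step of the paper's proof you omit is the verification that $|X|=q^3=p^{3m}=c^{n-d+1}$, i.e.\ that the Singleton bound of Proposition \ref{EF002} is attained with equality --- this is needed because $a_{n-i}$ ``as defined in Proposition \ref{EF002}'' means the explicit formula of part (ii), which is only available in the equality case; the equality holds because distinct coefficient triples give distinct quadratic forms (a nonzero triple has rank at least $m-4>0$ by Corollary \ref{RQ02}, so the map from $\mathbb{F}_q^3$ to $X$ is injective).
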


\begin{proof}
With notations as stated in Lemma \ref{MS02}, $X$ is an $(m,d)$-set with $d=t-1$ and $n-d=(t+1)-(t-1)=2$. By the first statement of Proposition \ref{EF002}, $|X|\leq c^{n-d+1}=p^{3m}$. Since the size of $X$ is $p^{3m}$, applying the second statement of Proposition \ref{EF002}, the result is obtained from Lemma \ref{MS02}.  \qed
\end{proof}

\begin{corollary}\label{CQ002}
Let $p=3$ and $q=p^m$ where $m>1$ is an odd integer satisfying $3\nmid m$.
The notations in (\ref{0001}) and (\ref{0002}) satisfy
\[
\begin{array}{ll}
2n_0&=a_{n}\\
2n_4 &={{B-A}\over {p^2(p^2-1)}} \\
2n_2 &={{p^2A-B}\over {p^2-1}} \\
2n_{1,3}&=p^{{m-3}\over 2}{{(p^m-1)(p^{m-1}-1)}\over {p^2-1}}+p^{3m}-1-a_n-a_{n-1}-2n_4 \\
2n_{-1,3}&=p^{3m}-1-a_n-a_{n-1}-2n_4-p^{{m-3}\over 2}{{(p^m-1)(p^{m-1}-1)}\over {p^2-1}} \\
2n_{1,1}&={{p^{m+2}-4p^{m-1}+p^2}\over {p^2-1}}\cdot p^{{m-1}\over 2}(p^m-1)+a_{n-1}-2n_2 \\
2n_{-1,1}&=a_{n-1}-2n_2-{{p^{m+2}-4p^{m-1}+p^2}\over {p^2-1}}\cdot p^{{m-1}\over 2}(p^m-1)
\end{array}
\]
where
\[
\begin{array}{ll}
A&={{p^{3m+2}-p^{m-1}(p^m-1)-p^2}\over {p+1}}-{{a_n(p^2-p+1)}\over p}-a_{n-1}(p-1), \\
B&={{(p^m-1)(8p^m-9)p^{m-2}+p^4-p^{3m+4}}\over {p^2-1}}+a_n\cdot {{p^4+p^2+1}\over p^2}+a_{n-1}\cdot (p^2+1)\\
 &={{(p^m-1)(8p^m-9)p^{m-2}-p^{3m}+1}\over {p^2-1}}-(p^2+1)(p^{3m}-1)+a_n\cdot {{p^4+p^2+1}\over p^2}+a_{n-1}\cdot (p^2+1),
\end{array}
\]
and $a_{n-i}$ is as defined in Proposition \ref{EF002} for $i=0,1,2$.
\end{corollary}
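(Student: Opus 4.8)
The plan is to treat Corollary~\ref{CQ002} as the explicit solution of a linear system in the seven unknowns $n_0$, $n_2$, $n_4$, $n_{1,1}$, $n_{-1,1}$, $n_{1,3}$, $n_{-1,3}$, whose coefficients involve the quantities $a_n$, $a_{n-1}$, $a_{n-2}$ produced by the association scheme. First I would collect every available linear relation: the four equations of Lemma~\ref{ES002}, the single equation of Corollary~\ref{C02}, and the three equations of Corollary~\ref{CQ02}. The first point to notice is that these eight relations are not independent. Summing the three relations of Corollary~\ref{CQ02} and comparing with the first equation of Lemma~\ref{ES002} shows that both encode the same identity $a_n+a_{n-1}+a_{n-2}=p^{3m}-1$, i.e. that the distance distribution of $X$ sums to $|X|=p^{3m}$ with $a_0=1$ (this is where the Singleton-equality case of Proposition~\ref{EF002}, via Corollary~\ref{CQ02}, and the rank bound of Corollary~\ref{RQ02} enter). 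This one redundancy is exactly what reconciles eight relations with seven unknowns, and it lets me eliminate $a_{n-2}$ in favour of $p^{3m}-1-a_n-a_{n-1}$, explaining why $a_{n-2}$ never appears in the final formulas.

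I would then split the system into a ``difference'' part and a ``sum'' part. The value $2n_0=a_n$ is immediate from Corollary~\ref{CQ02}. For the differences $d_1:=n_{1,1}-n_{-1,1}$ and $d_3:=n_{1,3}-n_{-1,3}$, the second and fourth equations of Lemma~\ref{ES002} read $d_1+p\,d_3=p^{(m-1)/2}(p^{2m}-1)$ and $d_1+p^3 d_3=(p+1)p^{3(m-1)/2}(p^m-1)$; subtracting isolates $d_3$, and back-substitution then gives $d_1$. The clean closed forms come out of the factorization $p^{2m}-1=(p^m-1)(p^m+1)$ and the identity $(p+1)p^{m-1}-(p^m+1)=p^{m-1}-1$; at the last step the reduction $-p^m-p^{m-1}=-4p^{m-1}$, valid precisely because $p=3$, produces the numerator $p^{m+2}-4p^{m-1}+p^2$ that appears in the formulas for $2n_{1,1}$ and $2n_{-1,1}$.

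For the sums I would set $s_1:=n_{1,1}+n_{-1,1}$ and $s_3:=n_{1,3}+n_{-1,3}$, so that Corollary~\ref{CQ02} gives $s_1=a_{n-1}-2n_2$ and $s_3=a_{n-2}-2n_4$. Substituting $2n_0=a_n$, these expressions for $s_1,s_3$, and $a_{n-2}=p^{3m}-1-a_n-a_{n-1}$ into the third equation of Lemma~\ref{ES002} and into Corollary~\ref{C02} collapses each to a single linear relation between $n_2$ and $n_4$; these are exactly $A=2(n_2+p^2n_4)$ and $B=2(n_2+p^4n_4)$, which is the source of the definitions of $A$ and $B$ (the two displayed forms of $B$ differ only by reinserting the value of $a_{n-2}$). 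Solving this $2\times 2$ system yields $2n_4=(B-A)/(p^2(p^2-1))$ and $2n_2=(p^2A-B)/(p^2-1)$, after which $2n_{\pm1,1}=s_1\pm d_1$ and $2n_{\pm1,3}=s_3\pm d_3$ deliver the remaining four formulas.

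The genuinely delicate point is not any single computation but the global consistency of an a~priori over-determined system: I must be sure that the eighth relation is a true consequence of the other seven and carries no new information, which is precisely the association-scheme total identified in the first paragraph. Beyond that, the remaining difficulty is disciplined algebraic bookkeeping---keeping the half-integer exponents $p^{(m-1)/2}$ and $p^{(m-3)/2}$ straight, carrying out the factorizations without error, and invoking the specialization $p=3$ only at the one step where it is actually needed---so that the symbolic expressions for $A$, $B$, $d_1$, and $d_3$ collapse to the stated closed forms.
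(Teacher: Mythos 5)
Your proposal is correct and follows essentially the same route as the paper: the paper's proof simply lists the seven linear relations (the four from Lemma \ref{ES002}, the one from Corollary \ref{C02}, and the first two from Corollary \ref{CQ02}) and states that simplification yields $2n_2+p^2\cdot 2n_4=A$ and $2n_2+p^4\cdot 2n_4=B$, exactly the $2\times 2$ system you derive. Your only departure is organizational --- you carry all eight relations and identify the single redundancy $a_n+a_{n-1}+a_{n-2}=p^{3m}-1$, whereas the paper just omits the third relation of Corollary \ref{CQ02} from the outset --- and your sum/difference decomposition and the $p=3$ specialization reproduce the stated closed forms.
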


\begin{proof}
From Lemma \ref{ES002}, Corollary \ref{C02} and Corollary \ref{CQ02}, there are the following equations
\[
\begin{array}{ll}
2n_0&=a_{n}\\
 n_{-1,1}+n_{1,1}+2n_2 &=a_{n-1}\\
2(n_0+n_2+n_4)+n_{-1,1}+n_{1,1}+n_{-1,3}+n_{1,3} & =p^{3m}-1\\
  n_{1,1}-n_{-1,1}
 +p(n_{1,3}-n_{-1,3})  &=p^{{m-1}\over 2}\left(p^{2m}-1\right) \\
 -2(n_0+p^2n_2+p^4n_4)
  +p (n_{1,1}+n_{-1,1})
+p^{3 }(n_{1,3}+n_{-1,3}) &=p^{m} \left(p^m-1\right)\\
   n_{1,1}-n_{-1,1}
 +p^{3}(n_{1,3}-n_{-1,3})  &= 4p^{{3(m-1)}\over 2}\left(p^{m}-1\right)\\
 2n_0+p^2(n_{-1,1}+n_{1,1})+p^4\cdot 2n_2+p^6(n_{-1,3}+n_{1,3})+p^8\cdot 2n_4 &=\left(8(p^m-1)^2-p^m+1\right)p^m.
\end{array}
\]
After simplification, we find that
\[
\begin{array}{ll}
2n_2+p^2\cdot 2n_4 &=A \\
2n_2+p^4\cdot 2n_4 &=B. \\
\end{array}
\]
From which $2n_2$ and $2n_4$ can be calculated, then all the other notations can be obtained. \qed

\end{proof}

\begin{remark}
It can be calculated in above paragraph that
\[
\begin{array}{ll}
a_n&=p^{3m}-1-{{p^{m+1}-1}\over {p^2-1}}(p^{2m}-1)+p^2\cdot {{p^{m+1}-1}\over {p^4-1}}\cdot {{p^{m-1}-1}\over {p^2-1}}\cdot (p^m-1); \\
a_{n-1}&= {{p^{m+1}-1}\over {p^2-1}}(p^{2m}-1)-(p^2+1)\cdot {{p^{m+1}-1}\over {p^4-1}}\cdot {{p^{m-1}-1}\over {p^2-1}}\cdot (p^m-1); \\
a_{n-2}&= {{p^{m+1}-1}\over {p^4-1}}\cdot {{p^{m-1}-1}\over {p^2-1}}\cdot (p^m-1).
\end{array}
\]

\end{remark}

\begin{lemma}
Let $p=3$ and $q=p^m$ where $m>1$ is an odd integer satisfying $3\nmid m$.
The number of solutions of the equation system
\begin{equation}\label{M05}
\left\{
\begin{array}{ll}
x^{2}+y^2+z^2+w^2+u^2 &=0\\
x^{p+1}+y^{p+1}+z^{p+1}+w^{p+1}+u^{p+1} &=0\\
x^{p^2+1}+y^{p^2+1}+z^{p^2+1}+w^{p^2+1}+u^{p^2+1} &=0,
\end{array}
\right.
\end{equation}
is
\[
p^{2m}+p^{{5-m}\over 2}\left(n_{1,1}-n_{-1,1}+p^5(n_{1,3}-n_{-1,3})\right)
\]
where $n_{1,1},n_{-1,1},n_{1,3}$ and $n_{-1,3}$ are those of Corollary \ref{CQ002}.
\end{lemma}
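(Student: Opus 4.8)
The plan is to identify the number of solutions of (\ref{M05}) with the fifth moment of $S(\alpha,\beta,\gamma)$, in exactly the manner the second and third moments were treated inside the proof of Lemma \ref{ES001}. First I would interchange the order of summation and write
\[
\sum_{\alpha,\beta,\gamma\in\mathbb{F}_q}S(\alpha,\beta,\gamma)^5=\sum_{x_1,\ldots,x_5\in\mathbb{F}_q}\Big(\sum_{\alpha\in\mathbb{F}_q}\zeta_p^{\mbox{Tr}(\alpha\sum_i x_i^2)}\Big)\Big(\sum_{\beta\in\mathbb{F}_q}\zeta_p^{\mbox{Tr}(\beta\sum_i x_i^{p+1})}\Big)\Big(\sum_{\gamma\in\mathbb{F}_q}\zeta_p^{\mbox{Tr}(\gamma\sum_i x_i^{p^2+1})}\Big).
\]
Each inner character sum equals $p^m$ when its exponent vanishes and $0$ otherwise, so the product is $p^{3m}$ exactly on the solution set of (\ref{M05}) and $0$ elsewhere. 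Hence $\sum_{\alpha,\beta,\gamma}S^5=M_5\cdot p^{3m}$, where $M_5$ denotes the number of solutions we seek, and it remains only to evaluate the fifth moment and divide by $p^{3m}$.

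Next I would compute the fifth moment by partitioning $\mathbb{F}_q^3$ according to the value of $S(\alpha,\beta,\gamma)$ via the notation of (\ref{0001}) and (\ref{0002}). The triple $(0,0,0)$ contributes $(p^m)^5=p^{5m}$. For the remaining triples, since $p\equiv 3\pmod 4$, Lemma \ref{FL002} gives that the odd-rank classes $j=0,2,4$ produce the purely imaginary values $\varepsilon i\,p^{(m+j)/2}$, whose fifth powers $\varepsilon i\,p^{5(m+j)/2}$ remain purely imaginary, while the even-rank classes $j=1,3$ produce the real values $\varepsilon p^{(m+j)/2}$ with real fifth powers $\varepsilon p^{5(m+j)/2}$. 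Collecting the two signs $\varepsilon=\pm1$, the classes $j=0,2,4$ contribute multiples of $n_{1,j}-n_{-1,j}$; but by Lemma \ref{RO01} we have set $n_j=n_{1,j}=n_{-1,j}$ for these $j$, so the imaginary contributions cancel identically, leaving
\[
\sum_{\alpha,\beta,\gamma}S^5=p^{5m}+p^{5(m+1)/2}\big(n_{1,1}-n_{-1,1}\big)+p^{5(m+3)/2}\big(n_{1,3}-n_{-1,3}\big).
\]

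Finally I would divide by $p^{3m}$ and simplify the exponents, using $5(m+1)/2-3m=(5-m)/2$ and $5(m+3)/2-3m=(5-m)/2+5$, to obtain
\[
M_5=p^{2m}+p^{(5-m)/2}\big(n_{1,1}-n_{-1,1}\big)+p^{(15-m)/2}\big(n_{1,3}-n_{-1,3}\big)=p^{2m}+p^{(5-m)/2}\Big(n_{1,1}-n_{-1,1}+p^5\big(n_{1,3}-n_{-1,3}\big)\Big),
\]
which is the asserted formula. The whole argument is parallel to the third-moment calculation already carried out in Lemma \ref{ES002}, the fifth being another odd moment of the same type; the only point that needs care is verifying that the odd-rank (imaginary) contributions cancel through Lemma \ref{RO01}, so that the final count comes out as a real integer expressed purely through the even-rank counts. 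I do not expect any genuine obstacle beyond this cancellation and the routine exponent arithmetic.
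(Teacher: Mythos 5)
Your proposal is correct and follows essentially the same route as the paper: both identify $M_5\cdot p^{3m}$ with the fifth moment of $S(\alpha,\beta,\gamma)$, expand that moment over the value classes of (\ref{0001}) and (\ref{0002}) (with the imaginary odd-rank contributions cancelling via Lemma \ref{RO01}), and solve for $M_5$. The paper's version is merely terser, stating the moment identity directly without spelling out the interchange of summation or the cancellation, both of which you have verified correctly.
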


\begin{proof}
The following moment of exponential sum $S(\alpha,\beta,\gamma)$ satisfies
\[
\begin{array}{ll}
\sum\limits_{\alpha,\beta,\gamma \in \mathbb{F}_q}S(\alpha,\beta,\gamma)^5
&= p^{{5(m+1)}\over 2}(n_{1,1}-n_{-1,1})+p^{{5(m+3)}\over 2}(n_{1,3}-n_{-1,3})+{(p^{m})}^5\\
&=M_5\cdot p^{3m}
\end{array}
\]
where $M_5$ is the number of solutions of (\ref{M05}). Solving the above equation for $M_5$, the result is obtained. \qed
\end{proof}
Equation (\ref{M05}) considers the case for $5$ variables. Using the sixth moment of $S(\alpha,\beta,\gamma)$, the number of solutions can be calculated when there are $6$ variables, etc.

%\section{.....}

%.........

\begin{theorem}\label{C002}
Let $p=3$ and $q=p^m$ where $m>1$ is an odd integer satisfying $3\nmid m$.
 The cyclic code $\mathcal{C}_1$ with nonzeros $\pi^{-2},\pi^{-(p+1)}$ and $\pi^{-(p^2+1)}$ has five nonzero weights with distribution
\[
\begin{array}{ll}
A_{p^{m-1}(p-1)}&=2(n_0+n_2+n_4),\\
A_{p^{m-1}(p-1)-{{p-1}\over p}p^{{m+1}\over 2}}&=n_{1,1},\\
A_{p^{m-1}(p-1)+{{p-1}\over p}p^{{m+1}\over 2}}&=n_{-1,1},\\
A_{p^{m-1}(p-1)-{{p-1}\over p}p^{{m+3}\over 2}}&=n_{1,3},\\
A_{p^{m-1}(p-1)+{{p-1}\over p}p^{{m+3}\over 2}}&=n_{-1,3},
\end{array}
\]
where the notations on the right hand side are defined in Corollary \ref{CQ002}.
\end{theorem}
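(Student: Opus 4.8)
The plan is to assemble the pieces already in place, using the weight formula (\ref{C01}) together with the rank analysis of $f_2'(x)=\alpha x^2+\beta x^{p+1}+\gamma x^{p^2+1}$. First I would record the elementary facts of the situation: here $p=3\equiv 3\ (\mathrm{mod}\ 4)$, so $p^*=\left(\frac{-1}{p}\right)p=-p$, and the three nonzeros correspond to the exponents $2=1+p^0$, $p+1=1+p^1$ and $p^2+1=1+p^2$. Since $m=2t+1$ is odd with $3\nmid m$ and $m>1$, we have $m\geq 5$ and hence $t\geq 2$, so Lemma \ref{GCD002} gives $|\mathcal{D}_{1+p^i}|=m$ for $i=0,1,2$. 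As these three cyclotomic cosets are distinct (the nonzeros are non-conjugate), the dimension of $\mathcal{C}_1$ equals $3m$. Consequently $\mathcal{C}_1$ has exactly $p^{3m}=|\mathbb{F}_q^3|$ codewords, and the linear parametrization $(\alpha,\beta,\gamma)\mapsto c(\alpha,\beta,\gamma)$ of (\ref{CW001}) is a bijection. This is the step that legitimizes counting codewords by counting triples.

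Next I would translate each possible rank into a weight. By (\ref{C01}), $w_H(c)=p^{m-1}(p-1)-\frac1p R(\alpha,\beta,\gamma)$, and by Corollary \ref{RQ02} the rank $r$ of $H_{\alpha,\beta,\gamma}$ lies in $\{m,m-1,m-2,m-3,m-4\}$. For the odd ranks $m,m-2,m-4$ the exponential sum is purely imaginary (the cases $j=0,2,4$ of (\ref{0002})), so Lemma \ref{ES02}(ii) gives $R(\alpha,\beta,\gamma)=0$ and the single weight $p^{m-1}(p-1)$. For the even ranks $m-1$ and $m-3$ (the cases $j=1,3$ of (\ref{0001})), Lemma \ref{ES02}(i) gives $R(\alpha,\beta,\gamma)=\varepsilon(p-1)p^{(m+j)/2}$ with $\varepsilon=\pm1$, so that $w_H(c)=p^{m-1}(p-1)-\varepsilon\frac{p-1}{p}p^{(m+j)/2}$. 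Taking $j=1,3$ and $\varepsilon=\pm1$ produces the four remaining weights in the statement; since the two offsets $\frac{p-1}{p}p^{(m+1)/2}$ and $\frac{p-1}{p}p^{(m+3)/2}$ differ by the factor $p$ and are nonzero, all five weights are pairwise distinct.

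Finally I would count. Under the bijection above, $A_w$ equals the number of triples $(\alpha,\beta,\gamma)$ whose sum $S(\alpha,\beta,\gamma)$ falls in the corresponding value class. The four non-central weights are counted directly by $n_{1,1},n_{-1,1},n_{1,3},n_{-1,3}$ of (\ref{0001}), while the central weight $p^{m-1}(p-1)$ collects all triples of odd rank, namely $\sum_{j\in\{0,2,4\}}(n_{1,j}+n_{-1,j})=2(n_0+n_2+n_4)$ after invoking Lemma \ref{RO01} to set $n_{1,j}=n_{-1,j}=n_j$. Explicit closed forms for all these quantities are furnished by Corollary \ref{CQ002}, which completes the determination of the distribution.

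The genuinely new content of the theorem is already carried by the earlier results (the moment computations of Lemma \ref{ES002} and Corollary \ref{C02}, together with the association-scheme input of Corollary \ref{CQ02}), so I do not expect a serious obstacle in the final assembly. The one point that must be handled with care is the bijectivity claim, since it underlies every cardinality count: I would make sure the three cosets $\mathcal{D}_2,\mathcal{D}_{p+1},\mathcal{D}_{p^2+1}$ are genuinely distinct in the given range of $m$, so that $\dim\mathcal{C}_1=3m$ rather than something smaller. After that, the only delicate part is the rank-to-weight bookkeeping—matching each $r$ to its parity, then to its $j$-label, and thence to its weight and sign—where an exponent or a sign could otherwise slip.
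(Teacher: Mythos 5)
Your proposal is correct and follows essentially the same route as the paper's own (much terser) proof: apply the weight formula (\ref{C01}), convert each rank/sign class of $S(\alpha,\beta,\gamma)$ from (\ref{0001}) and (\ref{0002}) into a value of $R(\alpha,\beta,\gamma)$ via Lemma \ref{ES02}, and read off the weights and their multiplicities. The extra checks you add (distinctness of the three cyclotomic cosets so that the parametrization by $(\alpha,\beta,\gamma)$ is a bijection, and pairwise distinctness of the five weights) are sensible details the paper leaves implicit.
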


\begin{proof}
There is a relation between the weight of a codeword in $\mathcal{C}_1$ and the corresponding exponential sum (equation \ref{C01})
\[
w_H(c)=p^{m-1}(p-1)-{1\over p}R(\alpha,\beta,\gamma).
\]
%Corollary \ref{RQ02} states that there are five possible values of the ranks contained in the cyclic code $\mathcal{C}_1$, $m,m-1,m-2,m-3$ and $m-4$. By Lemma \ref{FL002},
 The possible values of corresponding exponential sums $S(\alpha,\beta,\gamma)$ are indicated in (\ref{0001}) and (\ref{0002}). By Lemma \ref{ES02}, the values of the sums $R(\alpha,\beta,\gamma)$ can be calculated, and then the corresponding codeword weights and distribution. \qed
\end{proof}

\begin{example}
Let $m=5, p=3$ and $q=p^m$. Let $\mathcal{C}_1$ be the cyclic code with nonzeros $\pi^{-(p^0+1)}=\pi^{-2}$, $\pi^{-(p^1+1)}=\pi^{-4}$ and $\pi^{-(p^2+1)}=\pi^{-10}$ where $\pi$ is a primitive element of the finite field $\mathbb{F}_q$. Using Matlab, $\mathcal{C}_1$ has five nonzero weights
\[
\begin{array}{l}
A_{162}= 9740258, A_{144}=2548260, A_{180}=2038608,\ \ \mbox{and}\\
  A_{108}=14520, A_{216}=7260,
\end{array}
\]
which verifies the result of Theorem \ref{C002}.
\end{example}

\section{The cyclic code $\mathcal{C}_2$}\label{Sec4}

Let $p=3$ and $q=p^m$ where $m\equiv 1 \ \mbox{mod} \ 4$ satisfying $3\nmid m$.
In this section, we study the cyclic code $\mathcal{C}_2$ with nonzeros $\pi^{-1},\pi^{-2},\pi^{-(p+1)}$ and $\pi^{-(p^2+1)}$ where $\pi$ is a primitive element of the finite field $\mathbb{F}_{p^m}$. Lemma \ref{C001} is for the calculation of the multiplicities of exponential sum $S'(\alpha,\beta,\gamma,\delta)$ which leads to the result of weight distribution in Theorem \ref{C022}.

For this, let's consider the possible values of the exponential sums $S'(\alpha,\beta,\gamma,\delta)$, $R'(\alpha,\beta,\gamma,\delta)$ (Remark \ref{R0022}) and their multiplicities. By Lemma \ref{FL002}
\[
S(\alpha,\beta,\gamma)=i^r\left({\Delta \over p}\right)p^{m-r/2}
\]
 where $r$ is the rank of the corresponding quadratic form, and $\Delta$ is defined in equation (\ref{D01}).
For clearness, we list the notations of (\ref{0001}) and (\ref{0002}) in Table \ref{TAB01}.
%\vspace{0.2cm}
%\hspace{-0.6cm}
 \begin{table}[htbp]
\renewcommand\thetable{\Roman{table}}
 \caption{ }\label{TAB01} 
%\caption{ }\label{T001}
 \begin{tabular}{|l|l|l|l|l|l|l|l|l|l|l|}
 %\hline
%\multicolumn{11}{|c|}{Table I} \\
\hline
\multicolumn{1}{|l|}{rank}&\multicolumn{2}{c}{m}&\multicolumn{2}{|c|}{m-1}&\multicolumn{2}{|c|}{m-2}&\multicolumn{2}{|c|}{m-3}&\multicolumn{2}{|c|}{m-4}\\
 \hline
 $S(\alpha,\beta,\gamma)$&$ip^{m\over 2}$&$-ip^{m\over 2}$ &$p^{{m+1}\over 2}$&$-p^{{m+1}\over 2}$&$ip^{{m+2}\over 2}$& $-ip^{{m+2}\over 2}$&$ p^{{m+3}\over 2}$& $-p^{{m+3}\over 2}$&$ip^{{m+4}\over 2}$& $-ip^{{m+4}\over 2}$ \\
 \hline
multiplicity &$n_0$& $n_0$ &$n_{1,1}$& $n_{-1,1}$&$n_2$& $n_2$&$n_{1,3}$& $n_{-1,3}$&$n_4$&$n_4$\\
 \hline
\end{tabular}
\end{table}

%\vspace{0.5cm}

\noindent For quadratic form $F(X)$ with corresponding symmetric matrix $H$ of rank $r$, depending on Table \ref{TAB01}, Lemma \ref{C001} considers the exponential sum
\begin{equation}\label{SE0003}
S'(\alpha,\beta,\gamma,\delta)=\sum\limits_{X\in \mathbb{F}_p^m}\zeta_p^{\mbox{Tr}\left(F(X)+AX^T\right)}
\end{equation}
where $A$ varies over $\mathbb{F}_p^m$.

\begin{lemma}\label{C001}
Let $p=3$ and $q=p^m$ where $m\equiv 1 \ \mbox{mod} \ 4$ satisfying $3\nmid m$.
The exponential sums $S'(\alpha,\beta,\gamma,\delta)$ and their multiplicities are listed in Table \ref{TAB02} where $(\alpha,\beta,\gamma)\in \mathbb{F}_q^3\backslash \{(0,0,0)\}$.
 \end{lemma}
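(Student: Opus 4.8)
The plan is to reduce the computation of each $S'(\alpha,\beta,\gamma,\delta)$ to the already-known value of $S(\alpha,\beta,\gamma)$ by means of Lemma \ref{FL002}(ii), and then to track how the extra character factor $\zeta_p^c$ redistributes the multiplicities recorded in Table \ref{TAB01}. First I would fix a basis of $\mathbb{F}_q$ over $\mathbb{F}_p$ and write the linear part $\mbox{Tr}(\delta x)$ as $AX^T$ for a vector $A\in\mathbb{F}_p^m$; since the trace form is nondegenerate the assignment $\delta\mapsto A$ is an $\mathbb{F}_p$-linear bijection. For a fixed quadratic form $F_{\alpha,\beta,\gamma}(X)=XH_{\alpha,\beta,\gamma}X^T$ of rank $r$, Lemma \ref{FL002}(ii) then gives $S'(\alpha,\beta,\gamma,\delta)=\zeta_p^{c}\,S(\alpha,\beta,\gamma)$ whenever the system $2YH_{\alpha,\beta,\gamma}+A=0$ is solvable (with $c=\tfrac12 AB^T$ for a solution $B$), and $S'(\alpha,\beta,\gamma,\delta)=0$ otherwise.

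Next I would count, for each fixed rank $r\in\{m,m-1,m-2,m-3,m-4\}$ from Corollary \ref{RQ02}, how the $p^m$ choices of $\delta$ split. Solvability is exactly the condition that $A$ lie in the row space of the symmetric matrix $H_{\alpha,\beta,\gamma}$, which has dimension $r$; hence $p^r$ values of $\delta$ are solvable and the remaining $p^m-p^r$ give $S'=0$. Writing $M H_{\alpha,\beta,\gamma} M^T=\mbox{diag}(a_1,\dots,a_r,0,\dots,0)$ and completing the square coordinate by coordinate, the exponent $c$ becomes a full-rank quadratic form in free parameters $(a_1',\dots,a_r')$ whose determinant is a fixed multiple of $\Delta^{-1}$. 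As $\delta$ runs over the $p^r$ solvable values these parameters range over all of $\mathbb{F}_p^r$, so the number of $\delta$ producing $c=0$, $c$ a nonzero square, and $c$ a nonzero nonsquare is given by Lemma \ref{SE01} (for even $r$) and Lemma \ref{SE02} (for odd $r$). Because $p=3$, the only possible values of $\zeta_p^c$ are $1,\zeta_3,\zeta_3^2$, so these three counts are precisely what is needed.

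Finally I would assemble Table \ref{TAB02} by pairing each rank (and each sign of the Legendre symbol $(\Delta/p)$, i.e.\ each column of Table \ref{TAB01}) with the three shifts $\zeta_3^c$, and multiplying the corresponding base multiplicity $n_0,n_{\pm1,1},n_2,n_{\pm1,3},n_4$ by the $c$-counts from the previous step. The main obstacle is the bookkeeping of signs in this last assembly: the induced discriminant acquires a factor $(-1)^r$ together with $\eta(\Delta)$, so for odd-rank forms the square/nonsquare counts from Lemma \ref{SE02} carry a factor $\eta\!\left((-1)^{(r-1)/2}\right)$ that varies with $r$. It is exactly the extra hypothesis $m\equiv 1\pmod 4$ (beyond the oddness and $3\nmid m$ used for $\mathcal{C}_1$) that pins down these sign factors and makes the solvable-case contributions collapse into the clean entries of Table \ref{TAB02}; I would check each of the five rank cases separately and confirm consistency against the moment identities of Lemma \ref{ES002} and Corollary \ref{C02}.
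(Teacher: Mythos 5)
Your proposal is correct and follows essentially the same route as the paper's proof: reduce $S'(\alpha,\beta,\gamma,\delta)$ to $\zeta_p^c S(\alpha,\beta,\gamma)$ via Lemma \ref{FL002}(ii), count the $p^m-p^r$ non-solvable shifts giving $S'=0$, and distribute the $p^r$ solvable ones over $c=0,1,2$ by counting solutions of the diagonalized quadratic equation with Lemmas \ref{SE01} and \ref{SE02}, using $m\equiv 1\pmod 4$ to fix the signs $\eta\left((-1)^{(r-1)/2}\right)$. Your parametrization of $c$ as a form in the coordinates of $A'$ (determinant a unit multiple of $\Delta^{-1}$) is equivalent to the paper's parametrization by $Y'$, so the counts agree.
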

 \begin{table}
 \renewcommand\thetable{\Roman{table}}
 \caption{ }\label{TAB02}
 \begin{tabular}{|l|l|l|l|}
%\hline
%\multicolumn{4}{|c|}{r=m}\\
\hline
\multicolumn{2}{|c|}{$\Delta=1$}&\multicolumn{2}{|c|}{$\Delta=2$}\\
\hline
 $S'(\alpha,\beta,\gamma,\delta)$ & multiplicity & $S'(\alpha,\beta,\gamma,\delta)$ & multiplicity \\
 \hline
0 &0 &0 & 0\\
\hline
$ip^{m\over 2}$& $n_0\cdot p^{m-1}$ &$ -ip^{m\over 2}$& $n_0\cdot p^{m-1}$ \\
\hline
$\zeta_pip^{m\over 2}$& $n_0\cdot (p^{m-1}-p^{{m-1}\over 2})$ &$ -\zeta_pip^{m\over 2}$& $n_0\cdot (p^{m-1}+p^{{m-1}\over 2})$ \\
\hline
$\zeta_p^2ip^{m\over 2}$& $n_0\cdot (p^{m-1}+p^{{m-1}\over 2})$ &$ -\zeta_p^2ip^{m\over 2}$& $n_0\cdot (p^{m-1}-p^{{m-1}\over 2})$ \\
\hline
\multicolumn{4}{|c|}{r=m-1}\\
\hline
\multicolumn{2}{|c|}{$\Delta=1$}&\multicolumn{2}{|c|}{$\Delta=2$}\\
\hline
 $S'(\alpha,\beta,\gamma,\delta)$ & multiplicity & $S'(\alpha,\beta,\gamma,\delta)$ & multiplicity \\
 \hline
0 &$n_{1,1}(p^m-p^{m-1})$ &0 & $n_{-1,1}(p^m-p^{m-1})$\\
\hline
$ p^{{m+1}\over 2}$& $n_{1,1}\cdot (p^{m-2}+(p-1)p^{{m-3}\over 2})$ &$ -p^{{m+1}\over 2}$& $n_{-1,1}\cdot (p^{m-2}-(p-1)p^{{m-3}\over 2})$\\
\hline
$\zeta_pp^{{m+1}\over 2}$& $n_{1,1}\cdot (p^{m-2}-p^{{m-3}\over 2})$ &$ -\zeta_pp^{{m+1}\over 2}$& $n_{-1,1}\cdot (p^{m-2}+p^{{m-3}\over 2})$ \\
\hline
$\zeta_p^2p^{{m+1}\over 2}$& $n_{1,1}\cdot (p^{m-2}-p^{{m-3}\over 2})$ &$ -\zeta_p^2p^{{m+1}\over 2}$& $n_{-1,1}\cdot (p^{m-2}+p^{{m-3}\over 2})$ \\
\hline
\multicolumn{4}{|c|}{r=m-2}\\
\hline
\multicolumn{2}{|c|}{$\Delta=1$}&\multicolumn{2}{|c|}{$\Delta=2$}\\
\hline
 $S'(\alpha,\beta,\gamma,\delta)$ & multiplicity & $S'(\alpha,\beta,\gamma,\delta)$ & multiplicity \\
 \hline
0 &$n_2(p^m-p^{m-2})$ &0 & $n_2(p^m-p^{m-2})$\\
\hline
$ -ip^{{m+2}\over 2}$& $n_2\cdot p^{m-3}$ &$  ip^{{m+2}\over 2}$& $n_2\cdot p^{m-3}$\\
\hline
$-i\zeta_pp^{{m+2}\over 2}$& $n_2\cdot (p^{m-3}+p^{{m-3}\over 2})$ &$ i\zeta_pp^{{m+2}\over 2}$& $n_2\cdot (p^{m-3}-p^{{m-3}\over 2})$ \\
\hline
$-i\zeta_p^2p^{{m+2}\over 2}$& $n_2\cdot (p^{m-3}-p^{{m-3}\over 2})$ &$ i\zeta_p^2p^{{m+2}\over 2}$& $n_2\cdot (p^{m-3}+p^{{m-3}\over 2})$ \\
\hline
\multicolumn{4}{|c|}{r=m-3}\\
\hline
\multicolumn{2}{|c|}{$\Delta=1$}&\multicolumn{2}{|c|}{$\Delta=2$}\\
\hline
 $S'(\alpha,\beta,\gamma,\delta)$ & multiplicity & $S'(\alpha,\beta,\gamma,\delta)$ & multiplicity \\
 \hline
0 &$n_{-1,3}(p^m-p^{m-3})$ &0 & $n_{1,3}(p^m-p^{m-3})$\\
\hline
$ -p^{{m+3}\over 2}$& $n_{-1,3}\cdot (p^{m-4}-(p-1)p^{{m-5}\over 2})$ &$  p^{{m+3}\over 2}$& $n_{ 1,3}\cdot (p^{m-4}+(p-1)p^{{m-5}\over 2})$\\
\hline
$-\zeta_pp^{{m+3}\over 2}$& $n_{-1,3}\cdot (p^{m-4}+p^{{m-5}\over 2})$ &$  \zeta_pp^{{m+3}\over 2}$& $n_{ 1,3}\cdot (p^{m-4}-p^{{m-5}\over 2})$ \\
\hline
$-\zeta_p^2p^{{m+3}\over 2}$& $n_{-1,3}\cdot (p^{m-4}+p^{{m-5}\over 2})$ &$  \zeta_p^2p^{{m+3}\over 2}$& $n_{ 1,3}\cdot (p^{m-4}-p^{{m-5}\over 2})$ \\
\hline
\multicolumn{4}{|c|}{r=m-4}\\
\hline
\multicolumn{2}{|c|}{$\Delta=1$}&\multicolumn{2}{|c|}{$\Delta=2$}\\
\hline
 $S'(\alpha,\beta,\gamma,\delta)$ & multiplicity & $S'(\alpha,\beta,\gamma,\delta)$ & multiplicity \\
 \hline
0 &$n_4(p^m-p^{m-4})$ &0 & $n_4(p^m-p^{m-4})$\\
\hline
$  ip^{{m+4}\over 2}$& $n_4\cdot p^{m-5}$ &$  -ip^{{m+4}\over 2}$& $n_4\cdot p^{m-5}$\\
\hline
$ i\zeta_pp^{{m+4}\over 2}$& $n_4\cdot (p^{m-5}-p^{{m-5}\over 2})$ &$ -i\zeta_pp^{{m+4}\over 2}$& $n_4\cdot (p^{m-5}+p^{{m-5}\over 2})$ \\
\hline
$ i\zeta_p^2p^{{m+4}\over 2}$& $n_4\cdot (p^{m-5}+p^{{m-5}\over 2})$ &$ -i\zeta_p^2p^{{m+4}\over 2}$& $n_4\cdot (p^{m-5}-p^{{m-5}\over 2})$ \\
\hline
\end{tabular}
\end{table}

\begin{proof}
 In the following, there are three parts for the proof: Part I is for general analysis and the case where (\ref{N001}) is not solvable; Part II gives the counting formula for the times that $S'(\alpha,\beta,\gamma,\delta)$ takes each possible value according to the rank of corresponding quadratic form; Part III presents the calculation of the counting formula.

Part I: Lemma \ref{FL002} implies that the value of exponential sum $S'(\alpha,\beta,\gamma,\delta)$ (\ref{SE0003}) is
\[
\zeta_p^cS(\alpha,\beta,\gamma)
\]
where $c={1\over 2}AB^T\in \mathbb{F}_p$. And the value is $0$ if
\begin{equation}\label{N001}
2YH+A=0
\end{equation}
 does not have a solution $Y=B \in \mathbb{F}_p^m$. For more details of (\ref{N001}), see follows.

Denote by $M\in \mbox{GL}_m(\mathbb{F}_p)$ such that
\[
MHM^T=H'=\mbox{diag}(h_1,h_2,\ldots,h_r,0,0,\ldots,0)
\]
where $h_i\in \mathbb{F}_p^*$. Let $Y'=2YM^{-1}$ and $A'=-AM^T$, then (\ref{N001}) is equivalent to
\begin{equation}\label{N01}
Y'H'=A'.
\end{equation}
Since $A$ varies over the elements of $\mathbb{F}_p^m$ and $M$ is nonsingular, $A'$ also varies over the elements of $\mathbb{F}_p^m$. And if solvable, it can be checked that
\begin{equation}\label{QE01}
\begin{array}{ll}
Y'A'^T&=2YM^{-1} \cdot \left(-AM^T\right)^T\\
 &=2YM^{-1} \cdot \left(-MA^T\right)=YA^T\\
 &=-\left({1\over 2}AB^T\right)\\
 &=-c
\end{array}
\end{equation}
where $B=-Y'M$ is a solution of equation (\ref{N001}).
Note that equation (\ref{QE01}) can also be written as follows
\begin{equation}\label{QE02}
\begin{array}{ll}
Y'A'^T &=Y'H'Y'^T \\
       & =h_1y_1'^2+h_2y_2'^2+\cdots +h_ry_r'^2\\
       &=-c
\end{array}
\end{equation}
where $y_1',y_2',\ldots,y_r'$ are the elements of the first $r$ coordinates of the vector $Y'\in \mathbb{F}_p^m$.

It can be checked that equation (\ref{N01}) is solvable only when the elements of the last $m-r$ coordinates of $A'$ are $0$, and in this case for any such $A'$ the number of solutions of equation (\ref{N01}) is $p^{m-r}$. The number of such vectors $A'$ is $p^{r}$. So $S'(\alpha,\beta,\gamma,\delta)$ is zero for
\begin{equation}\label{N02}
p^m-p^r
\end{equation}
 vectors $A'\in \mathbb{F}_p^m$ when equation (\ref{N01}) is not solvable. %The main results of this part are equations (\ref{QE02}) and (\ref{N02}).

Part II: In addition to (\ref{N02}), let's consider the case where $S'(\alpha,\beta,\gamma,\delta)$ is not zero. From Lemma \ref{FL002}, we find that
\[
S'(\alpha,\beta,\gamma,\delta)=\zeta_p^cS(\alpha,\beta,\gamma)
\]
with $c$ as explained above by equations (\ref{QE01}) and (\ref{QE02}).
There are two cases to be considered.
\begin{enumerate}
\renewcommand{\labelenumi}{$($\mbox{\roman{enumi}}$)$}
\item The rank $r$ of the corresponding quadratic form is even.
\begin{itemize}
\renewcommand{\labelitemi}{\labelitemiii}
\item For the case of $c=0$, equation (\ref{QE02}) becomes
\[
h_1y_1'^2+h_2y_2'^2+\cdots +h_ry_r'^2=0.
\]
By Lemma \ref{SE01} the number of solutions of the above equation is
\begin{equation}\label{SE0001}
p^{r-1}+(p-1)p^{{r-2}\over 2}\eta \left((-1)^{r\over 2}\Delta \right)
\end{equation}
where $\Delta =h_1'h_2'\cdots h_r'$ and $\eta$ is the quadratic character of $\mathbb{F}_p$. Note that, when $y_1',y_2',\ldots,y_r'$ are set, $A'$ is determined by equation (\ref{N01}) with $Y'=(y_1',y_2',\ldots,y_r',0,0,\ldots,0)$, and then $A$ can be calculated by $A'=-AM^T$.
\item For the cases of $c=1$ and $c=2$, the number of solutions is
\[
p^{r-1}-p^{{r-2}\over 2}\eta \left((-1)^{r\over 2}\Delta \right).
\]
\end{itemize}

\item The rank $r$ of the corresponding quadratic form is odd.
\begin{itemize}
\renewcommand{\labelitemi}{\labelitemiii}
\item  For the case of $c=0$, equation (\ref{QE02}) becomes
\[
h_1y_1'^2+h_2y_2'^2+\cdots +h_ry_r'^2=0.
\]
By Lemma \ref{SE02} the number of solutions of the above equation is
\[
p^{r-1}.
\]

\item For the case of $c=1$, the number of solutions is
\begin{equation}\label{SE0002}
p^{r-1}+p^{{r-1}\over 2}\eta \left((-1)^{{r-1}\over 2}(-1)\Delta \right).
\end{equation}

\item For the case of $c=2$, the number of solutions is
\[
p^{r-1}+p^{{r-1}\over 2}\eta \left((-1)^{{r-1}\over 2}\Delta \right).
\]
\end{itemize}
\end{enumerate}

Part III: This part considers the calculation of above counting formulas in the two cases when $r=m$ and $r=m-1$ for odd and even ranks respectively.
\begin{enumerate}
\renewcommand{\labelenumi}{$($\mbox{\roman{enumi}}$)$}
\item For the case of $r=m$, assume that $\Delta=1$, then $S(\alpha,\beta,\gamma)=ip^{m\over 2}$ and the number of such quadratic forms is $n_0$ (Table \ref{TAB01}).

\begin{itemize}
\renewcommand{\labelitemi}{\labelitemiii}
\item By equation (\ref{N02}), the number of times that $S'(\alpha,\beta,\gamma,\delta)=0$ is equal to
\[
n_0\left(p^m-p^m\right)=0.
\]
\item The number of times that $S'(\alpha,\beta,\gamma,\delta)=ip^{m\over 2} (c=0)$  is equal to
\[
n_0\cdot p^{m-1}.
\]

\item The number of times that $S'(\alpha,\beta,\gamma,\delta)=\zeta_p \cdot ip^{m\over 2} (c=1)$  is equal to
\[
n_0\cdot \left(p^{m-1}-p^{{m-1}\over 2}\right)
\]
where $\eta\left((-1)^{{m-1}\over 2}(-1)\Delta \right)=\eta(-1)=-1$ using \ref{SE0002}.

\item The number of times that $S'(\alpha,\beta,\gamma,\delta)=\zeta_p^2 \cdot ip^{m\over 2} (c=2)$  is equal to
\[
n_0\cdot \left(p^{m-1}+p^{{m-1}\over 2}\right)
\]
where $\eta\left((-1)^{{m-1}\over 2}\Delta \right)=1$.
\end{itemize}
Similarly, the case of $\Delta =2$ can be analyzed.

\item Now, let's consider the case of $r=m-1$, and assume that $\Delta =1$. In this case $S(\alpha,\beta,\gamma)=p^{{m+1}\over 2}$, and the number of such quadratic forms is $n_{1,1}$.
\begin{itemize}
 \renewcommand{\labelitemi}{\labelitemiii}
\item By equation (\ref{N02}), the number of times that $S'(\alpha,\beta,\gamma,\delta)=0$ is equal to
\[
n_{1,1}\cdot \left(p^m-p^{m-1}\right).
\]
\item The number of times that $S'(\alpha,\beta,\gamma,\delta)=p^{{m+1}\over 2} (c=0)$ is equal to
\[
n_{1,1}\cdot \left(p^{m-2}+(p-1)p^{{m-3}\over 2}\right),
\]
noting that $\eta((-1)^{{m-1}\over 2}\Delta)=1$ using \ref{SE0001}.
\item The number of times that $S'(\alpha,\beta,\gamma,\delta)=\zeta_pp^{{m+1}\over 2} (c=1)$ or $\zeta_p^2p^{{m+1}\over 2} (c=2)$ is equal to
\[
n_{1,1}\cdot \left(p^{m-2}-p^{{m-3}\over 2}\right).
\]
% \item The number of times that $S'(\alpha,\beta,\gamma,\delta)=\zeta_p^2p^{{m+1}\over 2} (c=2)$ is equal to
%\[
%n_{1,1}\cdot \left(p^{m-2}-p^{{m-3}\over 2}\right).
%\]
\end{itemize}
The case of $\Delta =2$ can also be investigated in this way.
\end{enumerate}
Using similar ideas on the other cases of $r$, the lemma is obtained. \qed
\end{proof}

The weight distribution of the cyclic code $\mathcal{C}_2$ is obtained in Theorem \ref{C022} by analyzing exponential sum $R'(\alpha,\beta,\gamma,\delta)$ from Lemma \ref{C001} and Remark \ref{R0022}.

\begin{theorem}\label{C022}
  Let $p=3$ and $q=p^m$ where $m\equiv 1 \ \mbox{mod} \ 4$ satisfying $3\nmid m$.
  The multiplicities of the exponential sums $R'(\alpha,\beta,\gamma,\delta)$ and the weight distribution of the cyclic code $\mathcal{C}_2$ are listed in Table \ref{TAB03}.
\end{theorem}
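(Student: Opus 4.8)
The plan is to reduce the weight distribution of $\mathcal{C}_2$ to a counting problem for the exponential sum $R'(\alpha,\beta,\gamma,\delta)$, which Lemma \ref{C001} has already prepared. Since the four nonzeros $\pi^{-1},\pi^{-2},\pi^{-(p+1)},\pi^{-(p^2+1)}$ lie in distinct cyclotomic cosets each of size $m$ (Lemma \ref{GCD002}, using $m$ odd and $3\nmid m$), the codewords of $\mathcal{C}_2$ are parametrized by $(\alpha,\beta,\gamma,\delta)\in\mathbb{F}_q^4$ via $f'(x)=\alpha x^2+\beta x^{p+1}+\gamma x^{p^2+1}+\delta x$. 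By the weight formula (\ref{C01}), every codeword satisfies $w_H(c)=p^{m-1}(p-1)-\frac1p R'(\alpha,\beta,\gamma,\delta)$, so it suffices to determine, for each attained value of $R'$, how many of the $p^{4m}$ tuples realize it; dividing that data through the weight formula produces Table \ref{TAB03}.

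First I would pass from $S'$ to $R'$. Lemma \ref{C001} lists, for every $(\alpha,\beta,\gamma)\neq(0,0,0)$ and every $\delta$, the exact value of $S'(\alpha,\beta,\gamma,\delta)$ with its multiplicity, sorted by the rank $r\in\{m,m-1,m-2,m-3,m-4\}$ of the associated quadratic form and by $\Delta\in\{1,2\}$, $c\in\{0,1,2\}$. Applying Remark \ref{R0022} row by row converts each such $S'$ into its $R'$. For the even ranks $r=m-1,m-3$ the sums $S'$ are real, so the first two bullets give $R'=\varepsilon(p-1)p^{r'}$ when $c=0$ and $R'=-\varepsilon p^{r'}$ when $c\neq0$; for the odd ranks $r=m,m-2,m-4$ the sums carry the factor $i$ (equivalently $\sqrt{p^*}$), so the last two bullets give $R'=0$ when $c=0$ and $R'=\varepsilon\left(\frac{-c}{p}\right)p^{r'+1}$ when $c\neq0$. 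The ``not solvable'' rows, where $S'=0$, also contribute $R'=0$: the linear system (\ref{N001}) is unchanged under the scaling $a$ appearing in $R'=\sum_{a=1}^{p-1}S'(a\alpha,a\beta,a\gamma,a\delta)$, so $S'(\alpha,\beta,\gamma,\delta)=0$ forces every summand to vanish.

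Next I would aggregate multiplicities by the resulting $R'$-value. Here the symmetry between $\Delta=1$ and $\Delta=2$ makes rows coalesce: for each odd rank the data for $(\Delta=1,c)$ and $(\Delta=2,-c)$ land on the same $R'$ and must be added, while for each even rank the two cases $c=1,2$ collapse onto a single $R'$. I would then treat the degenerate locus $(\alpha,\beta,\gamma)=(0,0,0)$ by hand: $\delta=0$ gives $S'=p^m$, hence the zero codeword of weight $0$, whereas each of the $p^m-1$ nonzero $\delta$ gives $S'=0$, hence $R'=0$ and weight $p^{m-1}(p-1)$, so these fold into the $R'=0$ bucket alongside the odd-rank $c=0$ and not-solvable contributions. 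Throughout I would keep $n_0,n_2,n_4,n_{\pm1,1},n_{\pm1,3}$ as the explicit quantities of Corollary \ref{CQ002}, which applies because $m\equiv1\ \mbox{mod}\ 4$ forces $m$ odd with $3\nmid m$.

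The main obstacle is the bookkeeping in the aggregation, not any single hard idea: for every rank one must correctly fix the Gauss-sum sign $\left(\frac{-c}{p}\right)$ and the value of $i^r$ under $m\equiv1\ \mbox{mod}\ 4$ (precisely the congruence that pins down the sign pattern of Lemma \ref{C001}), so that the $(\Delta,c)$ rows merge into the claimed $R'$-multiplicities without double counting or omission. The natural safeguard is the global count: the rows of Table \ref{TAB02} account for $(p^{3m}-1)p^m$ tuples and the degenerate locus for the remaining $p^m$, and checking that all listed multiplicities sum to $p^{4m}$, together with the first moment $\sum_i A_i=p^{4m}$, confirms that the weight distribution in Table \ref{TAB03} is complete.
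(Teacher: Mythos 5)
Your proposal is correct and follows essentially the same route as the paper, which derives Theorem \ref{C022} precisely by converting the $S'$-values of Lemma \ref{C001} (Table \ref{TAB02}) into $R'$-values via Remark \ref{R0022}, merging the $(\Delta,c)$ rows, adding the degenerate $(\alpha,\beta,\gamma)=(0,0,0)$ contribution to the $R'=0$ bucket, and reading off weights from $w_H(c)=p^{m-1}(p-1)-\frac{1}{p}R'$. Your write-up in fact supplies more of the bookkeeping (the scaling-invariance of the unsolvable case, the $(\Delta=1,c)\leftrightarrow(\Delta=2,-c)$ coalescence, and the $p^{4m}$ total-count check) than the paper makes explicit.
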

\begin{table}[htbp]
\renewcommand\thetable{\Roman{table}}
 \caption{ }\label{TAB03}
  \begin{tabular}{|l|l|l|}
 %\hline
 %\multicolumn{3}{|c|}{Table II} \\
\hline
$R'(\alpha,\beta,\gamma,\delta)$ &weight & multiplicity \\
\hline
 & & $2n_0p^{m-1}+(n_{-1,1}+n_{1,1})(p^m-p^{m-1})+2n_2(p^m-2p^{m-3})$\\
  \raisebox{1.6ex}[0pt]{0}& \raisebox{1.6ex}[0pt]{$p^{m-1}(p-1)$} & $+(n_{-1,3}+n_{1,3})(p^m-p^{m-3})+2n_4(p^m-2p^{m-5})+p^m-1$\\
  \hline
$-p^{{m+1}\over 2}$&$p^{m-1}(p-1)+p^{{m-1}\over 2}$& $2n_0(p^{m-1}-p^{{m-1}\over 2})+2n_{1,1}(p^{m-2}-p^{{m-3}\over 2})$\\
\hline
$ p^{{m+1}\over 2}$&$p^{m-1}(p-1)- p^{{m-1}\over 2}$& $2n_0(p^{m-1}+p^{{m-1}\over 2})+2n_{-1,1}(p^{m-2}+p^{{m-3}\over 2})$\\
\hline
$ p^{{m+3}\over 2}$&$ p^{m-1}(p-1)-p^{{m+1}\over 2}$ & $2n_2(p^{m-3}+p^{{m-3}\over 2})+2n_{-1,3}(p^{m-4}+p^{{m-5}\over 2})$\\
\hline
$-p^{{m+3}\over 2}$& $p^{m-1}(p-1)+p^{{m+1}\over 2}$& $2n_2(p^{m-3}-p^{{m-3}\over 2})+2n_{1,3}(p^{m-4}-p^{{m-5}\over 2})$\\
\hline
$(p-1)p^{{m+1}\over 2}$&$p^{m-1}(p-1)-(p-1)p^{{m-1}\over 2}$& $n_{1,1}(p^{m-2}+(p-1)p^{{m-3}\over 2})$\\
\hline
$-(p-1)p^{{m+1}\over 2}$&$p^{m-1}(p-1)+(p-1)p^{{m-1}\over 2}$&  $n_{-1,1}(p^{m-2}-(p-1)p^{{m-3}\over 2})$\\
\hline
$(p-1)p^{{m+3}\over 2}$& $p^{m-1}(p-1)-(p-1)p^{{m+1}\over 2}$ & $n_{1,3}(p^{m-4}+(p-1)p^{{m-5}\over 2})$\\
\hline
$-(p-1)p^{{m+3}\over 2}$&$p^{m-1}(p-1)+(p-1)p^{{m+1}\over 2}$ &$n_{-1,3}(p^{m-4}-(p-1)p^{{m-5}\over 2})$\\
\hline
$-p^{{m+5}\over 2}$& $p^{m-1}(p-1)+p^{{m+3}\over 2}$& $2n_4(p^{m-5}-p^{{m-5}\over 2})$\\
\hline
$ p^{{m+5}\over 2}$&$ p^{m-1}(p-1)-p^{{m+3}\over 2}$ & $2n_4(p^{m-5}+p^{{m-5}\over 2})$\\
\hline
\end{tabular}
\end{table}

%Note that $p^m-1$ in the third line of Table II, corresponds to the exponential sum $R'(\alpha,\beta,\gamma,\delta)$ for the cases where $\alpha,\beta,\gamma$ are all zero.

\begin{example}
Let $m=5, p=3$ and $q=p^m$. Let $\mathcal{C}_2$ be the cyclic code with nonzeros $\pi^{-1}, \pi^{-(p^0+1)}=\pi^{-2}$, $\pi^{-(p^1+1)}=\pi^{-4}$ and $\pi^{-(p^2+1)}=\pi^{-10}$ where $\pi$ is a primitive element of the finite field $\mathbb{F}_q$. Using Matlab, $\mathcal{C}_2$ has weight distribution
\[
\begin{array}{l}
A_{162}=1618713316, A_{171}=782825472, A_{153}=947952720,A_{135}=6853440, \\A_{189}=3455760,
A_{144}=84092580, A_{180}=42810768,  A_{108}=72600, A_{216}=7260, \\
A_{81}=484,
\end{array}
\]
which verifies the result of Theorem \ref{C022}.
\end{example}

\section{Conclusions}\label{Sec5}

%Even though the minimum distances of cyclic codes are in general not as good as some other linear codes of the same length and dimension, they are commonly applied in many areas. % because there are more efficient encoding and decoding algorithms.
% Using the polynomial representations, there are efficient hardware configurations for performing the encoding operations of cyclic codes. %To decode a linear block code, we can form a standard array, or we can use the reduced standard array using syndromes. For cyclic codes, i
%And it is possible to exploit the cyclic structures of the codes to further decrease the memory requirements instead of using the standard arrays in decoding cyclic codes.

In this paper, we describe the weight distributions of some cyclic codes: the codes $\mathcal{C}_1$ with nonzeros $\pi^{-2},\pi^{-(p+1)}$, $\pi^{-(p^2+1)}$ has five nonzero weights, and the code $\mathcal{C}_2$ with nonzeros $\pi^{-1},\pi^{-2},\pi^{-4},\pi^{-10}$ has ten nonzero weights respectively, where $p=3$, $q=p^m$ and $m$ is an odd integer satisfying $3\nmid m$. As can be expected, in general the weight formulas of cyclic codes are rather complicated.

%\section{Acknowledgment}

 %The authors are very grateful to the reviewers for their
%detailed comments and suggestions that much improved the
%quality of this paper.

% Non-BibTeX users please use

\end{document}